\let\chapter\section
\newtheorem{theorem}{Theorem}[section]
\newtheorem{lemma}[theorem]{Lemma}
\newtheorem{claim}[theorem]{Claim}
\newtheorem{corollary}[theorem]{Corollary}
\theoremstyle{definition}
\newtheorem{definition}[theorem]{Definition}
\newcommand{\retheorem}{theorem}
\def\cqedsymbol{\ifmmode$\lrcorner$\else{\unskip\nobreak\hfil
\penalty50\hskip1em\null\nobreak\hfil$\lrcorner$
\parfillskip=0pt\finalhyphendemerits=0\endgraf}\fi} 
\newcommand{\cqed}{\renewcommand{\qed}{\cqedsymbol}}
\newcommand{\executeiffilenewer}[3]{%
\ifnum\pdfstrcmp{\pdffilemoddate{#1}}%
{\pdffilemoddate{#2}}>0%
{\immediate\write18{#3}}\fi%
} 
\newcommand{%
\executeiffilenewer{figures/.svg}{figures/.pdf}%
{inkscape -z -D --file=figures/.svg %
--export-pdf=figures/.pdf --export-latex}%
{\input{figures/.pdf_tex}}}[1]{%
\executeiffilenewer{figures/#1.svg}{figures/#1.pdf}%
{inkscape -z -D --file=figures/#1.svg %
--export-pdf=figures/#1.pdf --export-latex}%
{\input{figures/#1.pdf_tex}}}%
\newcommand{\R}{\ensuremath{\mathbb{R}}}
\newcommand{\Oh}{\ensuremath{\mathcal{O}}}
\newcommand{\poly}{\mathrm{poly}}
\newcommand{\tw}{\ensuremath{\mathtt{tw}}\xspace}
\newcommand{\tpw}{\ensuremath{\mathtt{tpw}}\xspace}
\newcommand{\pw}{\ensuremath{\mathtt{pw}}\xspace}
\DeclareMathOperator{\rk}{rk}
\newcommand{\Ff}{\mathcal{F}}
\newcommand{\Tt}{\mathcal{T}}
\newcommand{\eps}{\varepsilon}
\newcommand{\F}{\mathbb{F}}
\newcommand\mleq{\preccurlyeq}
\def\E{\ensuremath{\mathcal{E}}\xspace}
\def\T{\Tt}
\newcommand{\treedecomp}{\Tt}
\title{Fully polynomial-time parameterized computations for graphs and matrices of low treewidth\thanks{D. Lokshtanov is supported by the BeHard grant under the recruitment programme of the Bergen Research Foundation.
The research of F. V. Fomin leading to these results has received funding from the European Research Council under the European Union's Seventh Framework Programme (FP/2007-2013) / ERC Grant Agreement n. 267959.
S. Saurabh is supported by PARAPPROX, ERC starting grant no. 306992. The research of Mi. Pilipczuk and M. Wrochna is supported by Polish National Science Centre grant DEC-2013/11/D/ST6/03073. During the work on these results, Micha\l{} Pilipczuk held a post-doc position at Warsaw Center of Mathematics and Computer Science, and was supported by the Foundation for Polish Science (FNP) via the START stipend programme.}}
\author{
  Fedor V. Fomin\thanks{
    Department of Informatics, University of Bergen, Norway, \texttt{fomin@ii.uib.no}.
  }
  \and
  Daniel Lokshtanov\thanks{
    Department of Informatics, University of Bergen, Norway, \texttt{daniello@ii.uib.no}.
  }
  \and
  Micha\l{} Pilipczuk\thanks{
    Institute of Informatics, University of Warsaw, Poland, \texttt{michal.pilipczuk@mimuw.edu.pl}.
  }
  \and 
  Saket Saurabh\thanks{
    Institute of Mathematical Sciences, India, \texttt{saket@imsc.res.in}, and
    Department of Informatics, University of Bergen, Norway, \texttt{Saket.Saurabh@ii.uib.no}.
  }
  \and
  Marcin Wrochna\thanks{
    Institute of Informatics, University of Warsaw, Poland, \texttt{m.wrochna@mimuw.edu.pl}.
  }
}
\date{}
\begin{document}

\begin{titlepage}
\def\thepage{}
\thispagestyle{empty}
\maketitle

\begin{abstract}
We investigate the complexity of several fundamental polynomial-time solvable problems on graphs and on matrices, when the given instance has low treewidth; in the case of matrices, we consider the treewidth of the graph formed by non-zero entries. In each of the considered cases, the best known algorithms working on general graphs run in polynomial time, however the exponent of the polynomial is large. Therefore, our main goal is to construct algorithms with running time of the form $\poly(k)\cdot n$ or $\poly(k)\cdot n\log n$, where $k$ is the width of the tree decomposition given on the input. Such procedures would outperform the best known algorithms for the considered problems already for moderate values of the treewidth, like $\Oh(n^{1/c})$ for some small constant $c$.

Our results include: 
\begin{enumerate}[--]
\item an algorithm for computing the determinant and the rank of an $n\times n$ matrix using $\Oh(k^3\cdot n)$ time and arithmetic operations;
\item an algorithm for solving a system of linear equations using $\Oh(k^3\cdot n)$ time and arithmetic operations;
\item an $\Oh(k^3\cdot n\log n)$-time randomized algorithm for finding the cardinality of a maximum matching in a graph; 
\item an $\Oh(k^4\cdot n\log^2 n)$-time randomized algorithm for constructing a maximum matching in a graph;
\item an $\Oh(k^2\cdot n\log n)$-time algorithm for finding a maximum vertex flow in a directed graph.
\end{enumerate}
Moreover, we give an approximation algorithm for treewidth with time complexity suited to the running times as above. Namely, the algorithm, when given a graph $G$ and integer $k$, runs in time $\Oh(k^7\cdot n\log n)$ and either correctly reports that the treewidth of $G$ is larger than $k$, or constructs a tree decomposition of $G$ of width $\Oh(k^2)$.

The above results stand in contrast with the recent work of Abboud et al.~[to appear at SODA 2016], which shows that the existence of algorithms with similar running times is unlikely for the problems of finding the diameter and the radius of a graph of low treewidth.

\end{abstract}
\end{titlepage}

\section{Introduction}\label{sec:intro}
Treewidth is a fundamental graph parameter that measures how much the graph's structure resembles a tree. This resemblance is measured by how well we can decompose the graph in a treelike manner; the optimum width of such a {\em{tree decomposition}} is exactly the definition of treewidth. It has been known for a long time that many computational problems can be solved more efficiently on graphs of low treewidth, which is important from the point of view of applications, as such graphs do appear in practice. For instance, the control-flow graphs of programs in popular programming languages have constant treewidth~\cite{GustedtMT02,Thorup98}, whereas topologically-constrained graphs, like planar graphs or $H$-minor free graphs, have treewidth $\Oh(n^{1/2})$, where $n$ is the vertex count.

Perhaps the most important result about algorithms on low treewidth graphs is delivered by Courcelle's celebrated theorem, which asserts that every problem expressible in (the optimization variant of) {\em{Monadic Second Order logic}} admits an algorithm with running time $f(k)\cdot n$ on graphs of treewidth $k$, for some function $f$~\cite{ArnborgLS91,Courcelle90}. While there are no good upper bounds on the function $f$ in general, more precise estimates are known for specific problems of interest; in particular, $f$ is usually single-exponential for the most common NP-hard problems. The investigation of algorithms working on graphs of low treewidth is by now a well-established research direction, which is perhaps best captured by the methodology of {\em{parameterized complexity}}. This paradigm focuses on studying the complexity of computational problems using auxiliary measures of hardness, called {\em{parameters}}; treewidth is one of the most important examples of such parameters. We invite the reader to the relevant chapters in textbooks on parameterized complexity~\cite{platypus,DowneyF13,FlumGroheBook,niedermeier:book} for a broader discussion.

Of course, as long as the considered problem is NP-hard, we cannot expect to obtain an algorithm working on a graph of treewidth $k$ with running time of the form $f(k)\cdot n^c$ for some constant $c$ and a {\em{polynomial}} function $f$. However, let us consider any problem that can be solved in polynomial time, but for which no near-linear time algorithm is known; say, {\sc{Maximum Matching}}. There is a wide variety of algorithms for finding a maximum matching in a graph; however, their running times are far from linear. On the other hand, it is not hard to obtain an algorithm working on a graph given together with a tree decomposition of width $k$ that works in time $\Oh(3^k\cdot k^{\Oh(1)}\cdot n)$; this outperforms all the general-purpose algorithms for constant values of the treewidth. Is it possible to obtain an algorithm with running time $\Oh(k^{d}\cdot n)$ or $\Oh(k^{d}\cdot n\log n)$ for some constant $d$, implying a significant speed-up already for moderate values of treewidth like $k=\Oh(n^{1/3d})$? This question can be asked also for a number of other problems for which the known general-purpose polynomial-time algorithms have unsatisfactory running times.

In this paper we initiate a systematic study of the treewidth parameterization for fundamental problems that are solvable in polynomial time, but for which a lower exponent in the running time would be desirable. Examples of such problems include {\sc{maximum matching}}, {\sc{maximum flow}}, or various algebraic problems on matrices, like computing determinants or solving systems of linear equations. These are exactly the problems that we address in this work. The goal is to design an algorithm with running time of the form $\Oh(k^d\cdot p(n))$ for some constant $d$ and polynomial $p(n)$ that would be much smaller than the running time bound of the fastest known unparameterized algorithm. Mirroring the terminology of parameterized complexity, we will call such algorithms {\em{fully polynomial FPT}} (FPT stands for {\em{fixed-parameter tractable}}). Although several results of this kind are scattered throughout the literature~\cite{AkibaSK12,ChatterjeeL13,ChaudhuriZ98,ChaudhuriZ00,PlankenWK12}, mostly concerning shortest path problems, no systematic investigations have been made so far.

This direction fits into the general concept of ``FPT within P'' that was coined very recently by Giannopoulou et al.~\cite{GiannopoulouMN15} and by Abboud et al.~\cite{AbboudWW15}. In particular, Abboud et al.~\cite{AbboudWW15} made already the first step into investigating the treewidth parameterizations by considering the {\sc{Diameter}} and {\sc{Radius}} problems. They proved that both these problems can be solved in time $2^{\Oh(k\log k)}\cdot n^{1+o(1)}$ on graphs of treewidth $k$, but achieving running time of the form $2^{o(k)}\cdot n^{2-\eps}$ for any $\eps>0$ for {\sc{Diameter}} would already contradict the Strong Exponential Time Hypothesis ({\em{SETH}}) of Impagliazzo et al.~\cite{ImpagliazzoPZ01}; the same lower bound is also given for {\sc{Radius}}, but under a stronger assumption. This lays foundations for a lower bound methodology for the studied class of running times, and suggests that now we have all the right tools for a systematic study of the complexity of treewidth parameterizations of polynomial-time solvable problems.


\paragraph*{Our contribution.} The goal of this paper is to provide solid algorithmic foundations for the further study of fully polynomial parameterized algorithms on graphs and matrices of low treewidth. For this, we design such algorithms for several fundamental problems, which can then serve as vital subroutines in future investigations.


The vast majority of algorithms working on low treewidth graphs assume that a suitable tree decomposition is given on the input. For this reason, one of the fundamental problems is to compute such a decomposition efficiently. Computing treewidth exactly is NP-hard~\cite{ArnborgCP87}, however there is a wide variety of FPT and approximation algorithms for computing near-optimal tree decompositions. Unfortunately, none of the results known in the literature (which we describe in more details in the section on related work) suits the studied running times: either the dependence on the target width is exponential, or the polynomial factor is far larger than quasi-linear and cannot be easily improved with the same approach.
For this reason, we propose a new approximation algorithm for treewidth, suited for fully polynomial FPT algorithms.

\begin{restatable}{\retheorem}{restateapx}\label{thm:apx}
There exists an algorithm that, given a graph $G$ on $n$ vertices and a positive integer~$k$, in time $\Oh(k^7\cdot n\log n)$ either provides a tree decomposition of $G$ of width at most $\Oh(k^2)$, or correctly concludes that $\tw(G)\geq k$.
\end{restatable}

Thus, Theorem~\ref{thm:apx} can serve the same role for fully polynomial FPT algorithms parameterized by treewidth, as Bodlaender's linear-time algorithm for treewidth~\cite{Bodlaender96} serves for Courcelle's theorem~\cite{Courcelle90}: it can be used to remove the assumption that a suitable tree decomposition is given on the input, because such a decomposition can be approximated roughly within the same running time.

Next, we turn to algebraic problems on matrices. Given an $n\times m$ matrix $A$ over some field, we can construct a bipartite graph $G_A$ as follows: the vertices on the opposite sides of the bipartition correspond to rows and columns of $A$, respectively, and a row is adjacent to a column if and only if the entry on their intersection is non-zero in $A$. Then, we can investigate the complexity of computational problems when a tree decomposition of $G_A$ of (small) width $k$ is given on the input. As a graph on $n$ vertices and of treewidth $k$ has at most $kn$ edges, it follows that such matrices are sparse --- they contain only $\Oh(kn)$ non-zero entries. It is perhaps more convenient to think of them as edge-weighted bipartite graphs: we label each edge of $G_A$ with the element placed in the corresponding entry of the matrix. We assume that this is the form in which the matrix is given to the algorithm, and hence we can talk about subquadratic algorithms for such matrices.

Our main result here is a pivoting scheme that essentially enables us to perform Gaussian elimination on matrices of small treewidth. In particular, we are able to obtain useful factorizations of such matrices, which gives us information about the determinant and rank, and the possibility to solve linear equations efficiently.
We cannot expect to invert matrices in near-linear time, as even very simple matrices have inverses with $\Omega(n^2)$ entries (e.g. the square matrix with $M[j,i]=1$ for $i-j\in\{0,1\}$, $0$ elsewhere).
The following theorems gather the main corollaries of our results, whereas we refer to Sections~\ref{sec:prelims} and~\ref{sec:gaussian} for more details on the form of factorizations that we obtain.
Note that for square matrices, the same results can be applied to decompositions of the usual symmetric graph, as explained in Section~\ref{sec:prelims}.

\begin{restatable}{\retheorem}{restatepwdet}\label{thm:pw-det}
	Given an $n\times m$ matrix $M$ over a field $\F$ and a path or tree-partition decomposition of its bipartite graph $G_M$ of width $k$,
	Gaussian elimination on $M$ can be performed using $\Oh(k^2\cdot (n+m))$ field operations and time.
	In particular, the rank, determinant, a maximal nonsingular submatrix and a $PLUQ$-factorization can be computed in this time.
	Furthermore, for every $r\in \F^n$, the system of linear equations $M x = r$ can be solved in $\Oh(k\cdot (n+m))$ additional field operations and time.
\end{restatable}

\begin{restatable}{\retheorem}{restatetwdet}\label{thm:tw-det}
	Given an $n\times m$ matrix $M$ over a field $\F$ and a tree-decomposition of its bipartite graph $G_M$ of width $k$,
	we can calculate the rank, determinant and a generalized $LU$-factorization of $M$ in $\Oh(k^3\cdot (n+m))$ field operations and time.
	Furthermore, for every $r\in \F^n$, the system of linear equations $M x = r$ can be solved in $\Oh(k^2 \cdot(n+m))$ additional field operations and time.
\end{restatable}

Our algorithms work more efficiently for parameters pathwidth and tree-partition width (see precise definitions in Section~\ref{sec:prelims}), which can be larger than treewidth. The reason is the pivoting scheme underlying Theorems~\ref{thm:pw-det} and~\ref{thm:tw-det} works perfectly for path and tree-partition decompositions, whereas for standard tree decomposition the scheme can possibly create a lot of new non-zero entries in the matrix. However, we show how to reduce the case of tree decompositions to tree-partition decompositions by adjusting the idea of matrix sparsification for nested dissection of Alon and Yuster~\cite{AlonY13} to the setting of tree decompositions. Unfortunately, this reduction incurs an additional $k$ factor in the running times of our algorithms, and we obtain a less robust factorization. 

Observe that one can also use the known inequality $\pw(G)\leq \tw(G)\cdot \log_2 n$ to reduce the treewidth case to the pathwidth case. This trades the additional factor $k$ for a factor $\log^2 n$; depending on the actual value of $k$, this might be beneficial for the overall running time.

Note that Theorems~\ref{thm:pw-det} and~\ref{thm:tw-det} work over any field $\F$. Hence, we can use them to develop an algebraic algorithm for the maximum matching problem, using the classic approach via the Tutte matrix.
This requires working in a field $\F$ (say, $\F_p$) of polynomial size, and hence the complexity of performing arithmetic operations in this field depends on the computation model.
Below we count all such operations as constant time, and elaborate on this issue in Section~\ref{sec:matching}.


\begin{restatable}{\retheorem}{restatesize}\label{thm:matching-size}
There exists an algorithm that, given a graph $G$ together with its tree decomposition of width at most $k$, uses $\Oh(k^3\cdot n)$ time and field operations and computes the size of a maximum matching in $G$. The algorithm is randomized with one-sided error: it is correct with probability at least $1-\frac{1}{n^c}$ for an arbitrarily chosen constant $c$, and in the case of an error it reports a suboptimal value.
\end{restatable}

Theorem~\ref{thm:matching-size} only provides the size of a maximum matching; to construct the matching itself, we need some more work.

\begin{restatable}{\retheorem}{restatereconst}\label{thm:matching-reconstruct}
There exists an algorithm that, given a graph $G$ together with its tree decomposition of width at most $k$, uses $\Oh(k^4\cdot n\log n)$ time and field operations and computes a maximum matching in $G$. The algorithm is randomized with one-sided error: it is correct with probability at least $1-\frac{1}{n^c}$ for an arbitrarily chosen constant $c$, and in the case of an error it reports a failure or a suboptimal matching.
\end{restatable}

We remark that our algebraic approach is tailored to unweighted graphs, and cannot be easily extended to the weighted setting.

Finally, we turn our attention to the maximum flow problem. We prove that for vertex-disjoint flows we can also design a fully polynomial FPT algorithm with near-linear running time dependence on the size of the input. The algorithm works even on directed graphs (given a tree decomposition of the underlying undirected graph), but only in the unweighted setting (i.e., with unit vertex capacities, which boils down to finding vertex-disjoint paths).

\begin{restatable}{\retheorem}{restatemaxflow}\label{thm:max-flow}
There exists an algorithm that given an unweighted directed graph $G$ on $n$ vertices, distinct terminals $s,t\in V(G)$ with $(s,t)\notin E(G)$, and a tree decomposition of $G$ of width at most $k$, works in time $\Oh(k^2\cdot n\log n)$ and computes a maximum $(s,t)$-vertex flow together with a minimum $(s,t)$-vertex cut in $G$.  
\end{restatable}

Theorem~\ref{thm:max-flow} states the result only for single-source and single-sink flows, but it is easy to reduce other variants, like $(S,T)$-flows, to this setting. Note that in particular, Theorem~\ref{thm:max-flow} provides an algorithm for the maximum matching problem in bipartite graphs that is faster than the general one from Theorem~\ref{thm:matching-reconstruct}: one just needs to add a new source $s$ and a new sink $t$ to the graph, and make $s$ and $t$ fully adjacent to the opposite sides of the bipartition. 

\paragraph*{Related work on polynomial-time algorithms on small treewidth graphs.} The reachability and shortest paths problems on low treewidth graphs have received considerable attention in the literature, especially from the point of view of data structures~\cite{AkibaSK12,ChaudhuriZ98,ChaudhuriZ00,ChatterjeeL13,PlankenWK12}. In these works, the running time dependence on treewidth is either exponential or polynomial, which often leads to interesting trade-off questions. As far as computation of maximum flows is concerned, we are aware only of the work of Hagerup et al. on multicommodity flows~\cite{HagerupKNR98}; however, their approach inevitably leads to exponential running time dependence on the treewidth. The work of Hagerup et al.~\cite{HagerupKNR98} was later used by Chambers and Eppstein~\cite{ChambersE13} for the maximum flow problem in one-crossing-minor-free graphs; unfortunately, the exponential dependency on the size of the excluded minor persists.

\paragraph*{Related work on approximating treewidth.} Computing treewidth exactly is NP-hard~\cite{ArnborgCP87}, and moreover there is no constant-factor approximation for treewidth unless the Small Set Expansion Hypothesis fails~\cite{WuAPL14}. However, when we allow the algorithm to run in FPT time when parameterized by the target width, then there is a wide variety of exact and approximation algorithms. Perhaps the best known are: the $4$-approximation algorithm in $2^{\Oh(k)}\cdot n^2$ time of Robertson and Seymour~\cite{gm13} (see~\cite{platypus,kleinberg-tardos} for an exposition of this algorithm) and the linear-time exact algorithm of Bodlaender with running time $k^{\Oh(k^3)}\cdot n$~\cite{Bodlaender96}. Recently, Bodlaender et al.~\cite{BodlaenderDDFLP13} obtained a 3-approximation in time $2^{\Oh(k)}\cdot n\log n$ and a 5-approximation in time $2^{\Oh(k)}\cdot n$. Essentially all the known approximation algorithms for treewidth follow the approach of Robertson and Seymour~\cite{gm13}, which is based on recursively decomposing subgraphs by breaking them using balanced separators.

As far as polynomial-time approximation algorithms are concerned, the best known algorithm is due to Feige et al.~\cite{FeigeHL08} and it achieves approximation factor $\Oh(\sqrt{\log OPT})$ in polynomial time. Unfortunately, the running time is far from linear due to the use of semi-definite programming for the crucial subroutine of finding balanced separators; this is also the case in previous works~\cite{LeightonR99,Amir10}, which are based on linear programming as well. 

For this reason, in the proof of Theorem~\ref{thm:apx} we develop a purely combinatorial $\Oh(OPT)$-factor approximation algorithm for finding balanced separators. This algorithm is based on the techniques of Feige and Mahdian~\cite{FeigeM06}, which are basic enough so that they can be implemented within the required running time. The new approximation algorithm for balanced separators is then combined with a trick of Reed~\cite{Reed92}. Essentially, the original algorithm of Robertson and Seymour~\cite{gm13} only breaks the (small) interface between the subgraph being decomposed and the rest of the graph, which may result in $\Omega(n)$ recursion depth. Reed~\cite{Reed92} observed that one can add an additional step of breaking the whole subgraph in a balanced way, which reduces the recursion depth to logarithmic and results in improving the running time dependence on the input size from $\Oh(n^2)$ to $\Oh(n\log n)$, at the cost of solving a more general (and usually more difficult) subproblem concerning balanced separators. Fortunately, our new approximation algorithm for balanced separators is flexible enough to solve this more general problem as well, so we arrive at $\Oh(n\log n)$ running time dependence on $n$.

\paragraph*{Related work on matrix computations.}
Solving systems of linear equations and computing the determinant and rank of a matrix are ubiquitous, thoroughly explored topics in computer science, with a variety of well-known applications.
Since sparse matrices often arise both in theory and in practice, the possibility (and often necessity) of exploiting their sparsity has been deeply studied as well.
Here we consider matrices as sparse when their non-zero entries are not only few (that is, $o(n^2)$), but furthermore they are structured in a way that could potentially be exploited using graph-theoretical techniques.
The two best known classical approaches in this direction are standard Gaussian elimination with a \emph{perfect elimination ordering}, and \emph{nested dissection}. 

A common assumption in both approaches is that throughout the execution of an algorithm, no accidental cancellation occurs -- that is, except for situations guaranteed and required by the algorithm, arithmetic operations never change a non-zero value to a zero.
This can be assumed in some settings, such as when the input matrix is positive definite.
Otherwise, as soon accidental zeroes occur (for example, simply starting with a zero diagonal) some circumvention is required by finding a different pivot than originally planned.
In practice, especially when working over real-valued matrices, one may expect this not to extend resource usage too much, but when working over finite fields it is clear that this assumption cannot be used to justify any resource bounds.
Surprisingly, we are not aware of any work bounding the worst-case running time of an algorithm for the determinant of a matrix of small treewidth (or pathwidth) without this assumption.
This may in part be explained by the fact that a better understanding of sparseness in graph theory and the rise of treewidth in popularity came after the classical work on sparse matrices, and by the reliance on heuristics in practice.

\emph{Perfect elimination ordering}, generally speaking, refers to an ordering of rows and columns of a matrix such that Gaussian elimination introduces no \emph{fill-in} -- entries in the matrix where a zero entry becomes non-zero.
A seminal result of Parter~\cite{parter61} and Rose~\cite{Rose1970597} says that such an ordering exists if and only if the (symmetric) graph of the matrix is chordal (or \emph{triangulated}, that is, every cycle with more than three edges has a chord).
This assumes 
no accidental cancellation occurs.
Hence to minimize space usage, one would search for a minimum completion to a chordal graph (the \textsc{Minimum Fill-in} problem), while to put a guarantee on the time spent on eliminating, one could demand a chordal completion with small cliques, which is equivalent to the graph of the original matrix having small treewidth.
Radhakrishnan et al.~\cite{RadhakrishnanHS92} use this approach to give an $\Oh(k^2 n)$ algorithm for solving systems of linear equation defined by matrices of treewidth $k$, assuming no accidental cancellation.

To lift this assumption one has to consider arbitrary pivoting and the bipartite graph of a matrix instead (with separate vertices for each row and each column), which also allows the study of non-symmetric, non-square matrices.
A $\Gamma$-free ordering is an ordering of rows and columns of a matrix such that no $\left(\begin{smallmatrix}\star&\star\\\star&0\end{smallmatrix}\right)$ submatrix occurs -- it can be seen that such an ordering allows to perform Gaussian elimination with no fill-in ($\star$ represents a non-zero entry).
This corresponds to a \emph{strong ordering} of the bipartite matrix $G_M$ of the matrix -- an ordering $\mleq$ of vertices such that for all vertices $i,j,k,\ell$, if $i\mleq \ell$, $j\mleq k$, and $ji, ki, j\ell$ are edges, then $k\ell$ must be an edge too.
A bipartite graph is known to be \emph{chordal bipartite} graph (defined as a bipartite graph with no chordless cycles strictly longer than 4 -- note it need not be chordal) if and only if it admits a strong ordering (see e.g.~\cite{Dragan00}).
Golumbic and Goss~\cite{GolumbicG78} first related chordal bipartite graphs to Gaussian elimination with no fill-in, but assuming no accidental cancellation.
Bakonyi and Bono~\cite{bakonyiB97} showed that when an accidental cancellation occurs and a pivot cannot be used, a different pivot can always be found.
However, they do not consider the running time needed for finding the pivot, nor the number of arithmetic operations performed.

\emph{Nested dissection} is a Divide\&Conquer approach introduced by Lipton, Tarjan and Rose~\cite{liptonTR79} to solve a system of linear equations whose matrix is symmetric positive definite and whose graph admits a certain separator structure.
Intuitively, a \emph{weak separator tree} for a graph gives a small separator of the graph that partitions its vertices into two balanced parts, which after removing the separator, are recursively partitioned in the same way.
In work related to nested dissection, a \emph{small} separator means one of size $\Oh(n^\gamma)$, where $n$ is the number of remaining vertices of the graph and $\gamma<1$ is a constant ($\gamma=\frac{1}{2}$ for planar and $H$-minor-free graphs).
Thus an algorithm needs to handle a logarithmic number of separators whose total size is a geometric series bounded again by $\Oh(n^\gamma)$.
In modern graph-theoretic language, this most closely corresponds to a (balanced, binary) tree-depth decomposition of depth $\Oh(n^\gamma)$.

To use nested dissection for matrices $A$ that are not positive definite (so without assuming no accidental cancellation), Mucha and Sankowski~\cite{MuchaS06} used it on $A A^{T}$ instead, carefully recovering some properties of $A$ afterwards.
In order to guarantee a good separator structure for $A A^{T}$, however, they first need to decrease the degree of the graph of $A$ by an approach called \emph{vertex splitting}, introduced by Wilson~\cite{Wilson97}.
Vertex splitting is the operation of replacing a vertex $v$ with a path on three vertices $v',w,v''$ and replacing each incident edge $uv$ with either $uv'$ or $uv''$.
It is easy to see that this operation preserves the number of perfect matchings, for example.
The operation applied to the graph of a matrix can in fact be performed on the matrix, preserving its determinant too. 
By repeatedly splitting a vertex, we can transform it, together with incident edges, into a tree of degree bounded by 3.
Choosing an appropriate partition of the incident edges, the structure of the graph can be preserved; for example, the knowledge of a planar embedding can be used to stay in the class of planar graphs.
This allowed Mucha and Sankowski~\cite{MuchaS06} to find maximum matchings via Gaussian elimination in planar graphs in $\Oh(n^{\omega/2})$ time, where $\omega<2.38$ is the exponent of the best known matrix multiplication algorithm.
Yuster and Zwick~\cite{YusterZ07} showed that the weak separator tree structure can be preserved too, which allowed them to extend this result to $H$-minor-free graphs.

These methods were further extended by Alon and Yuster~\cite{AlonY13} to use vertex splitting and nested dissection on $A A^{T}$ for solving arbitrary systems of linear equations over any field, for matrices whose graphs admit a weak separator tree structure.
If the separators are of size $\Oh(n^\beta)$ and can be efficiently found, the algorithm works in $\Oh(n^{\omega \beta})$ time.
However, it is randomized and very involved, in particular requiring arithmetic computations in field extensions of polynomial size.
A careful translation of their proofs to tree decompositions could only give an $\Oh(k^5 \cdot n \log^3 n)$ randomized algorithm for matrices of treewidth $k$ (that is, $\Oh(n' k'^2)$~\cite{RadhakrishnanHS92} where $n'=nk$ and $k'=k^2 \log n$ after vertex splitting, with the recursion in~\cite{AlonY13} giving an additional $\log n$ factor).

Our approach differs in that for matrices with path or tree-partition decompositions of small width we show that a strong ordering respecting the decomposition can be easily found, and standard Gaussian elimination is enough, as long as the ordering is properly used when pivoting.
For matrices with small treewidth this does not seem possible (an apparent obstacle here is that not all chordal graphs have strong orderings).
However, a variant of the vertex splitting technique guided with a tree-decomposition allows us to simply (in particular, deterministically) reduce to the tree-partition case (instead of considering $A A^{T}$).

\paragraph*{Related work on maximum matchings.}
The existence of a perfect matching in a graph can be tested by calculating the determinant of the Tutte matrix~\cite{Tutte}.
Lov\'asz~\cite{Lovasz79} showed that the size of a maximum matching can be found by computing the rank, while Mucha and Sankowski~\cite{MuchaS04,MuchaS06} gave a randomized algorithm for extracting a maximum matching: in $\Oh(n^\omega)$ time for general graphs and $\Oh(n^{\omega/2})$ for planar graphs. The results on general graphs were later simplified by~\cite{Harvey09}.
Before that, Edmonds~\cite{edmonds1965paths} gave the first polynomial time algorithm, then bested by combinatorial algorithms of
Micali and Vazirani~\cite{RabinV89,vazirani2014proof}, Blum~\cite{Blum90}, and Gabow and Tarjan~\cite{GabowT91}, each running in $\Oh(m\sqrt{n})$ time.
Recently M\k{a}dry~\cite{Madry13} gave an $\Oh(m^{10/7})$ algorithm for the unweighted bipartite case.

\paragraph*{Related work on maximum flows.}
The maximum flow problem is a classic subject with a long and rich literature. Starting with the first algorithm of Ford and Fulkerson~\cite{Ford-Fulkerson_algo}, which works in time $\Oh(F\cdot (n+m))$ for integer capacities, where $F$ is the maximum size of the flow, a long chain of improvements and generalizations was proposed throughout the years; see e.g.~\cite{1970:din,EdmondsK72,EvenT75,GoldbergR98,GoldbergT88,Karzanov73,KingRT94,MalhotraKM78,Madry13,Orlin13}. The running times of these algorithms vary depending on the variants they solve, but all of them are far larger than linear. In particular, the fastest known algorithm in the directed unit-weight setting, which is the case considered in this work, is due to M\k{a}dry~\cite{Madry13} and works in time $\Oh(m^{10/7})$. For this reason, recently there was a line of work on finding near-linear $(1+\eps)$-approximation algorithms for the maximum flow problem~\cite{ChristianoKMST11,KelnerLOS14,LeeRS13,Sherman13}, culminating in a $(1+\eps)$-approximation algorithm working in undirected graphs in time $\Oh(\eps^{-2}\cdot m^{1+o(1)})$, proposed independently by Sherman~\cite{Sherman13} and by Kelner et al.~\cite{KelnerLOS14}.

\paragraph*{Outline.} In Section~\ref{sec:prelims} we establish notation and recall basic facts about matrices, flows, and tree-like decompositions of graphs. Section~\ref{sec:approx} is devoted to the approximation algorithm for treewidth, i.e., Theorem~\ref{thm:apx}. In Section~\ref{sec:gaussian} we give our results for problems on matrices of low width, and in particular we prove Theorems~\ref{thm:pw-det} and~\ref{thm:tw-det}. In Section~\ref{sec:matching} we apply these results to the maximum matching problem, proving Theorems~\ref{thm:matching-size} and~\ref{thm:matching-reconstruct}. Section~\ref{sec:max-flow} is focused on the maximum vertex flow problem and contains a proof of Theorem~\ref{thm:max-flow}. Finally, in Section~\ref{sec:conclusions} we gather short concluding remarks and state a number of open problems stemming from our work.

\section{Preliminaries}\label{sec:prelims}

\newcommand{\meas}{\mu}
\newcommand{\cc}{\texttt{cc}}
\newcommand{\Cc}{\mathcal{C}}

\paragraph*{Notation.} We use standard graph notation; cf.~\cite{platypus}. All the graphs considered in this paper are simple, i.e., they have no loops or multiple edges connecting the same endpoints. For a graph $G$ and $X\subseteq V(G)$, by $N_G[X]$ we denote the {\em{closed neighborhood}} of $X$, i.e., all the vertices that are either in $X$ or are adjacent to vertices of $X$, and by $N_G(X)=N_G[X]\setminus X$ we denote the {\em{open neighborhood}} of $X$. When $G$ is clear from the context, we drop the subscript. For a path $P$, the {\em{internal vertices}} of $P$ are all the vertices traversed by $P$ apart from the endpoints. Paths $P$ and $Q$ are {\em{internally vertex-disjoint}} if the no internal vertex of $P$ is traversed by $Q$ and vice versa. 
The set of connected components of a graph $G$ is denoted by $\cc(G)$. 

By $G[X]$ we denote the subgraph of $G$ induced by $X$, and we define $G-X=G[V(G)\setminus X]$. Graph $H$ is a {\em{subgraph}} of $G$, denoted $H\subseteq G$, if $V(H)\subseteq V(G)$ and $E(H)\subseteq E(G)$. We say $H$ is a {\em{completion}} of $G$ if $V(H)=V(G)$ and $E(H)\supseteq E(G)$.

A {\em{$1$-subdivision}} of a graph $G$ is obtained from $G$ by taking every edge $uv\in E(G)$, and replacing it with a new vertex $w_{uv}$ and edges $uw_{uv}$ and $w_{uv}v$.

For a positive integer $q$, we denote $[q]=\{1,2,\ldots,q\}$.

\paragraph*{Matrices.} For an $n\times m$ matrix $M$, the entry at the intersection of the $j$th row and $i$th column is denoted as $M[j,i]$. For sets $X\subseteq [n]$ and $Y\subseteq [m]$, by $[M]_{X,Y}$  we denote the $|X|\times |Y|$ matrix formed by the entries of $M$ appearing on the intersections of rows of $X$ and columns of $Y$.

The \emph{symmetric graph} of an $n \times n$ matrix (i.e., square, but not necessarily symmetric) is the undirected graph with vertices $\{1,\dots,n\}$ and an edge between $i$ and $j$ whenever one of $M[i,j]$ and $M[j,i]$ is non-zero.
The \emph{bipartite graph} of an $n \times m$ matrix is a bipartite, undirected graph with vertices in $\{r_1,\dots,r_n\} \cup \{c_1, \dots, c_m\}$ and an edge between $r_i$ and $c_j$ whenever $M[i,j]\neq 0$.

In this work, for describing the structure of a matrix $M$, we use the bipartite graph exclusively and denote it $G_M$, as it allows to express our results for arbitrary (not necessarily square) matrices.
Note that any tree decomposition of the symmetric graph of a square matrix $M$ can be turned into a decomposition of $G_M$ of twice the width plus $1$, by putting both the $i$-th row and $i$-th column in the same bag where index $i$ was.

A matrix is in (non-reduced) \emph{row-echelon form} if all zero rows (with only zero entries) are below all non-zero rows, and the leftmost non-zero coefficient of each row is strictly to the right of the leftmost non-zero coefficients of rows above it.
In particular, there are no non-zero entries below the diagonal.
We define column-echelon form analogously.
A {\em{$PLUQ$-factorization}} of an $n\times m$ matrix $M$ (also known as an LU-factorization with full pivoting) is a quadruple of matrices where: $P$ is a permutation $n\times n$ matrix, $L$ is an $n\times n$ matrix in column-echelon form with ones on the diagonal, $U$ is an $n\times m$ matrix in row-echelon form, $Q$ is a permutation $m\times m$ matrix, and $M=PLUQ$.
A {\em{generalized $LU$-factorization}} of $M$ is a sequence of matrices such that their product is $M$ and each is either a permutation matrix or a matrix in row- or column-echelon form.

\paragraph*{Flows and cuts.} For a graph $G$ and disjoint subsets of vertices $S,T\subseteq V(G)$, an {\em{$(S,T)$-path}} is a path in $G$ that starts in a vertex of $S$, ends in a vertex of $T$, and whose internal vertices do not belong to $S\cup T$. In this paper, an {\em{$(S,T)$-vertex flow}} is a family of $(S,T)$-paths $\Ff=\{P_1,P_2,\ldots,P_k\}$ that are internally vertex-disjoint; note that we do allow the paths to share endpoints in $S$ or $T$. The size of a flow $\Ff$, denoted $|\Ff|$, is the number of paths in it. A subset $X\subseteq V(G)\setminus (S\cup T)$ is an {\em{$(S,T)$-vertex cut}} if no vertex of $T$ is reachable by a path from some vertex of $S$ in the graph $G-X$. A variant of the well-known Menger's theorem states that the maximum size of an $(S,T)$-vertex flow is always equal to the minimum size of an $(S,T)$-vertex cut, provided there is no edge between $S$ and $T$. In case $G$ is directed, instead of undirected paths, we consider directed paths starting from $S$ and ending in $T$, and the same statement of Menger's theorem holds (the last condition translates to the nonexistence of edges from $S$ to $T$). Note that in this definitions we are only interested in flows and cuts in unweighted graphs, or in other words, we put unit capacities on all the vertices. 

There is a wide variety of algorithms for computing the maximum vertex flows and minimum vertex cuts in undirected/directed graphs in polynomial time. Among them, the most basic is the classic algorithm of Ford and Fulkerson, which uses the technique of finding consecutive augmentations of an $(S,T)$-vertex flow, up to the moment when a maximum flow is found. More precisely, the following well-known result is used.

\begin{theorem}[Max-flow augmentation]\label{thm:augm} 
There exists an algorithm that, given a directed graph $G$ on $n$ vertices and $m$ edges, disjoint subsets $S,T\subseteq V(G)$ with no edge from $S$ to $T$, and some $(S,T)$-vertex flow $\Ff$, works in $\Oh(n+m)$ time and either certifies that $\Ff$ is maximum by providing an $(S,T)$-vertex cut of size $|\Ff|$, or finds an $(S,T)$-vertex flow $\Ff'$ with $|\Ff'|=|\Ff|+1$.
\end{theorem}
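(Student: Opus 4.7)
The plan is to adapt the classical edge-variant Ford--Fulkerson augmentation argument to vertex flows via the standard vertex-splitting construction. I would first build an auxiliary digraph $H$ on $2|V(G)\setminus(S\cup T)|+|S|+|T|$ vertices by replacing every $v\in V(G)\setminus(S\cup T)$ with two copies $v^-,v^+$ and a \emph{capacity arc} $v^-\to v^+$, while every arc $u\to v$ of $G$ becomes $u^+\to v^-$ (using $u=u^+$ if $u\in S$ and $v=v^-$ if $v\in T$). Internally vertex-disjoint $(S,T)$-paths in $G$ are in one-to-one correspondence with arc-disjoint $(S,T)$-paths in $H$ that additionally use each capacity arc at most once; since every $(S,T)$-path in $H$ must traverse exactly one capacity arc for each non-terminal vertex it visits, arc-disjointness in $H$ already enforces the vertex-disjointness we need. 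Constructing $H$ from $G$ and $\Ff$ takes $\Oh(n+m)$ time.

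Next, I would build the residual digraph $H_\Ff$ by reversing, for every path $P\in\Ff$, each arc of $H$ that $P$ uses (both the capacity arcs on its internal vertices and the arcs originating from edges of $G$). I would then run a single BFS from the set $S$ in $H_\Ff$. The search visits $\Oh(n+m)$ arcs, so it runs within the claimed time bound.

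\textbf{Augmenting case.} If BFS reaches some $t\in T$, I obtain an $(S,T)$-walk $W$ in $H_\Ff$; by shortcutting cycles we may assume it is a simple path. Taking the symmetric difference (in the arc-multiset sense) of $W$ with the family of $H$-paths corresponding to $\Ff$ yields, after routine decomposition, a set of $|\Ff|+1$ arc-disjoint $(S,T)$-paths in $H$ that respect capacity arcs (each used at most once), which translates back to $|\Ff|+1$ internally vertex-disjoint $(S,T)$-paths in $G$. This decomposition can be carried out in $\Oh(n+m)$ time by walking along arcs and marking them.

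\textbf{Certifying case.} If BFS does not reach $T$, let $R\subseteq V(H)$ be the set of vertices reachable from $S$ in $H_\Ff$. Define $X=\{v\in V(G)\setminus(S\cup T)\,:\,v^-\in R \text{ and } v^+\notin R\}$. Every capacity arc $v^-\to v^+$ leaving $R$ in $H$ must be reversed in $H_\Ff$ (otherwise $v^+$ would be reachable), hence is saturated by some path of $\Ff$; conversely each path of $\Ff$ must exit $R$ at some capacity arc because it ends in $T\not\subseteq R$ and every non-capacity arc leaving $R$ is also reversed (again, otherwise its head would be in $R$). This establishes a bijection between $X$ and $\Ff$, so $|X|=|\Ff|$. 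Finally, $X$ is an $(S,T)$-vertex cut in $G$: any $(S,T)$-path in $G-X$ would lift to an $(S,T)$-walk in $H_\Ff$ staying in $R$, contradicting $T\cap R=\emptyset$.

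The main obstacle, and the only place where care is needed, is ensuring that the augmenting symmetric-difference operation really produces a valid family of internally vertex-disjoint paths in $G$ (rather than just arc-disjoint walks in $H$); this is handled automatically by the capacity-arc construction, which makes vertex-disjointness in $G$ equivalent to ordinary arc-disjointness in $H$, so that the standard edge-flow augmentation argument transfers verbatim.
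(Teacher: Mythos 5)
Your overall approach — vertex splitting, building the residual digraph, a single BFS, and then either augmenting or reading off a cut — is exactly the classical Ford--Fulkerson argument that the paper alludes to, and the augmenting case is fine. However, the certifying case has a genuine gap: the set $X=\{v\in V(G)\setminus(S\cup T):v^-\in R,\ v^+\notin R\}$ is not in general an $(S,T)$-vertex cut, and $|X|$ can be strictly smaller than $|\Ff|$. Take $G$ to be the directed path $s\to a\to t$ with $S=\{s\}$, $T=\{t\}$, and $\Ff$ consisting of this single path. Then $H$ has arcs $s\to a^-$, $a^-\to a^+$, $a^+\to t$, all used by $\Ff$, so $H_\Ff$ consists only of the reversed arcs and $R=\{s\}$. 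The unique arc of $H$ leaving $R$ is $s\to a^-$, which is not a capacity arc, and $a^-\notin R$, so your $X$ is empty even though $|\Ff|=1$ and the correct cut is $\{a\}$.

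The step that fails is the claim that ``each path of $\Ff$ must exit $R$ at some capacity arc.'' Your justification — that every non-capacity arc leaving $R$ is reversed — is true but does not support the conclusion: an arc being reversed in $H_\Ff$ means precisely that some path of $\Ff$ uses it, so a path can certainly exit $R$ along a reversed non-capacity arc (as in the example, where the path leaves $R$ at $s\to a^-$). The fix is to define $X$ more carefully: each path $P\in\Ff$ exits $R$ at a unique arc $a_P$ of $H$, and one can show that $a_P$ is never of the form $u^+\to t$ with $t\in T$ (this would force two arc-disjoint paths of $\Ff$ to share the capacity arc $u^-\to u^+$, using that there are no $S$--$T$ edges). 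Hence $a_P$ is either $v^-\to v^+$ or $u^+\to v^-$ with $v\notin S\cup T$, and in both cases $v^+\notin R$; assigning $v$ to $P$ gives an injection $\Ff\hookrightarrow V(G)\setminus(S\cup T)$ whose image is the desired cut $X$. With the current definition, the last sentence of the certifying case (that an $(S,T)$-path in $G-X$ stays in $R$) also cannot be justified, since such a path may cross a non-capacity cut arc whose head's underlying vertex is not in your $X$.
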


The classic proof of Theorem~\ref{thm:augm} works as follows: the algorithm first constructs the {\em{residual network}} that encodes where more flow could be pushed. Then, using a single BFS it looks for an {\em{augmenting path}}. If such an augmenting path can be found, then it can be used to modify the flow so that its size increases by one. On the other hand, the nonexistence of such a path uncovers an $(S,T)$-vertex cut of size $|\Ff|$. Of course, the analogue of Theorem~\ref{thm:augm} for undirected graphs follows by turning an undirected graph into a directed one by replacing every edge $uv$ with arcs $(u,v)$ and $(v,u)$. 

The next well-known corollary follows by applying the algorithm of Theorem~\ref{thm:augm} at most $k+1$~times.

\begin{corollary}\label{cor:flow-to-k}
There exists an algorithm that, given an undirected/directed graph $G$ on $n$ vertices and $m$ edges, disjoint subsets $S,T\subseteq V(G)$ with no edge from $S$ to $T$, and a positive integer $k$, works in time $\Oh(k\cdot (n+m))$ and provides one of the following outcomes:
\begin{enumerate}[(a)]
\item a maximum $(S,T)$-vertex flow of size $\ell$ together with a minimum $(S,T)$-vertex cut size $\ell$, for some $\ell\leq k$; or
\item a correct conclusion that the size of the maximum $(S,T)$-vertex flow (equivalently, of the minimum $(S,T)$-vertex cut) is larger than $k$.
\end{enumerate}
\end{corollary}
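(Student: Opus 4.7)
The plan is to apply Theorem~\ref{thm:augm} iteratively, starting from the empty $(S,T)$-vertex flow $\Ff_0=\emptyset$, and using at most $k+1$ augmentation rounds. Concretely, at iteration $i=0,1,\ldots,k$ we invoke Theorem~\ref{thm:augm} on the current flow $\Ff_i$ (of size exactly $i$). One of two things happens: either we are handed an $(S,T)$-vertex cut of size $i$, in which case $\Ff_i$ is maximum by Menger's theorem, we are in case (a) with $\ell=i\leq k$, and we stop; or we are handed a new flow $\Ff_{i+1}$ with $|\Ff_{i+1}|=i+1$, and we proceed to the next iteration. If the loop completes without returning a cut, then after iteration $k$ we have produced a flow $\Ff_{k+1}$ of size $k+1$, which itself witnesses that the maximum $(S,T)$-vertex flow has size strictly greater than $k$; we then output the conclusion (b).

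For the running-time analysis, each invocation of Theorem~\ref{thm:augm} runs in $\Oh(n+m)$, and there are at most $k+1$ such invocations, giving the claimed $\Oh(k\cdot(n+m))$ bound in total. Correctness of case (a) follows from the Menger-type equality between the maximum $(S,T)$-vertex flow and minimum $(S,T)$-vertex cut, which the text explicitly invokes under the hypothesis that there is no edge from $S$ to $T$; this hypothesis is part of the input promise, so it is passed verbatim to each call of Theorem~\ref{thm:augm}.

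For the undirected variant, I would reduce to the directed case exactly as indicated right after Theorem~\ref{thm:augm}: replace each undirected edge $uv$ with the two arcs $(u,v)$ and $(v,u)$. This transformation preserves internally vertex-disjoint paths and $(S,T)$-vertex cuts, doubles $m$, and costs $\Oh(n+m)$ preprocessing, so the asymptotic running time is unaffected and the same algorithm delivers the required outcomes.

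There is no real obstacle here; the one point worth care is merely bookkeeping, namely that at every iteration the flow, the residual network, and the BFS used by Theorem~\ref{thm:augm} can be maintained in $\Oh(n+m)$ time per round without hidden factors depending on the current flow size, which is immediate because Theorem~\ref{thm:augm} itself produces $\Ff'$ from $\Ff$ in $\Oh(n+m)$ time.
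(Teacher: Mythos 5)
Your proof is correct and follows exactly the approach the paper intends: the paper's entire justification is the one-line remark that the corollary "follows by applying the algorithm of Theorem~\ref{thm:augm} at most $k+1$ times," and your write-up simply makes that iteration, the stopping conditions, and the undirected-to-directed reduction explicit. Nothing is missing or off.
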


\paragraph*{Tree decompositions.} We now recall the main concepts of graph decompositions used in this paper. First, we recall standard tree and path decompositions.

\begin{definition}
A {\em{tree decomposition}} of a graph $G$ is a pair $(\Tt,\{B_x\}_{x\in V(\Tt)})$, where $\Tt$ is a tree and each node $x$ of $\Tt$ is associated with a subset of vertices $B_x\subseteq V(G)$, called the {\em{bag at $x$}}. Moreover, the following condition have to be satisfied:
\begin{itemize}
\item For each edge $uv\in E(G)$, there is some $x\in V(\Tt)$ such that $\{u,v\}\subseteq B_x$.
\item For each vertex $u\in V(G)$, define $\Tt[u]$ to be the subgraph of $\Tt$ induced by nodes whose bags contain $u$. Then $\Tt[u]$ is a non-empty and connected subtree of $\Tt$.
\end{itemize}
The {\em{width}} of $\Tt$ is equal to $\max_{x\in V(\Tt)} |B_x|-1$, and the {\em{treewidth of $G$}}, denoted $\tw(G)$, is the minimum possible width of a tree decomposition of $G$. In case $\Tt$ is a path, we call $\Tt$ also a {\em{path decomposition}} of $G$. The {\em{pathwidth of $G$}}, denoted $\pw(G)$, is the minimum possible width of a tree decomposition of $G$.
\end{definition}

We follow the convention that whenever $(\Tt,\{B_x\}_{x\in V(\Tt)})$ is a tree decomposition of $G$, then elements of $V(\Tt)$ are called {\em{nodes}} whereas elements of $V(G)$ are called {\em{vertices}}. Moreover, we often identify the nodes of $\Tt$ with bags associated with them, and hence we can talk about adjacent bags, etc. Also, we often refer to the tree $\Tt$ only as a tree decomposition, thus making the associated family $\{B_x\}_{x\in V(\Tt)}$ of bags implicit.

Throughout this paper we assume that all tree or path decompositions of width $k$ given on input have $\Oh(|V(G)|)$ nodes. In fact, we will assume that the input decompositions are {\em{clean}}, defined as follows.

\begin{definition}
	A tree decomposition $(\Tt,\{B_x\}_{x\in V(\Tt)})$ of $G$ is called {\em{clean}} if for every $xy\in V(\Tt)$, it holds that $B_x\nsubseteq B_y$ and $B_y\nsubseteq B_x$.
\end{definition}

It is easy to see that a clean tree decomposition of a graph on $n$ vertices has at most $n$ nodes, and that any tree (path) decomposition $\Tt$ of width $k$ can be transformed in time $\Oh(k|V(\Tt)|)$ to a {\em{clean}} tree (path) decomposition of the same width (see e.g.~\cite[Lemma 11.9]{FlumGroheBook}). Thus, the input decomposition can be always made clean in time linear in its size.

Finally, we recall the definition of another width parameter we use, see e.g.~\cite{Wood09}.

\begin{definition}
A {\em{tree-partition decomposition}} of a graph $G$ is a pair $(\Tt,\{B_x\}_{x\in V(\Tt)})$, where $\Tt$ is a tree and each node $x$ of $\Tt$ is associated with a subset of vertices $B_x\subseteq V(G)$, called the {\em{bag at $x$}}. Moreover, the following condition have to be satisfied:
\begin{itemize}
\item Sets $\{B_x\}_{x\in V(\Tt)}$ form a partition of $V(G)$, and in particular are pairwise disjoint.
\item For each edge $uv\in E(G)$, either there is some $x\in V(\Tt)$ such that $\{u,v\}\subseteq B_x$, or there is some $xy\in E(\Tt)$ such that $u\in B_x$ and $v\in B_y$.
\end{itemize}
The {\em{width}} of $\Tt$ is equal to $\max_{x\in V(\Tt)} |B_x|$, and the {\em{tree-partition width of $G$}}, denoted $\tpw(G)$, is the minimum possible width of a tree-partition decomposition of $G$.
\end{definition}

It is easy to see that empty bags in a tree-partition decomposition can be disposed of, implying $|V(\Tt)|\leq |V(G)|$ without loss of generality.
We have $1+\tw(G)\leq 2\cdot \tpw(G)$, but $\tpw(G)$ can be arbitrarily large already for graphs of constant treewidth, in case the maximum degree is unbounded~\cite{Wood09}.
We also need the following well-known fact.

\begin{lemma}[cf. Exercise 7.15 in~\cite{platypus}]\label{lem:tw-edges}\label{cor:tpw-edges}
	A graph on $n$ vertices of treewidth $k$ has at most $kn$ edges.
	Hence a graph on $n$ vertices of tree-partition width $k$ has at most $2kn$ edges.
\end{lemma}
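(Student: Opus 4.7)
The plan is to prove the treewidth bound first by a standard charging argument over a rooted tree decomposition, and then to derive the tree-partition bound as a corollary of the inequality $1+\tw(G)\le 2\tpw(G)$ mentioned just before the lemma.

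For the first statement, I would take any tree decomposition $(\Tt,\{B_x\}_{x\in V(\Tt)})$ of $G$ of width $k$ and root $\Tt$ at an arbitrary node. For every vertex $v\in V(G)$, let $t(v)$ be the topmost (closest to the root) node of $\Tt$ whose bag contains $v$; this is well-defined because $\Tt[v]$ is a nonempty connected subtree. The key observation I will use is: for every edge $uv\in E(G)$, one of $t(u),t(v)$ is a (not necessarily proper) ancestor of the other. Indeed, since $u$ and $v$ must appear together in at least one bag, the subtrees $\Tt[u]$ and $\Tt[v]$ intersect; two subtrees of a rooted tree that share a node have their topmost nodes in an ancestor-descendant relation.

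Using this, I charge each edge $uv$ to its \emph{lower} endpoint, say $v$, meaning $t(u)$ is an ancestor of $t(v)$. Then $u$ necessarily lies in $B_{t(v)}$: the subtree $\Tt[u]$ is connected and contains both $t(u)$ and (by the intersection above) a descendant of $t(v)$, so it must pass through $t(v)$. Therefore every edge charged to $v$ comes from a vertex in $B_{t(v)}\setminus\{v\}$, and at most $|B_{t(v)}|-1\leq k$ edges are charged to $v$. Summing over all $v\in V(G)$ gives $|E(G)|\leq kn$.

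For the second statement, the paper already states $1+\tw(G)\leq 2\cdot\tpw(G)$, so a graph of tree-partition width $k$ has treewidth at most $2k-1$, and the first part yields $|E(G)|\leq (2k-1)n\leq 2kn$. Both steps are short and essentially bookkeeping; the only mildly subtle point, which I expect to be the sole obstacle to a novice reader, is justifying that $u\in B_{t(v)}$ in the charging step. I would therefore emphasize that little argument explicitly and leave the rest as a direct computation.
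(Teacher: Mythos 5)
The paper does not actually give a proof of this lemma; it cites it as Exercise~7.15 of the referenced textbook, and the second sentence is explicitly derived from the inequality $1+\tw(G)\le 2\,\tpw(G)$ stated just above. Your proposal supplies exactly the standard argument that exercise intends: the ``topmost bag'' charging scheme, with the crucial little observation that if $t(u)$ is an ancestor of $t(v)$ then $u\in B_{t(v)}$ because the connected subtree $\Tt[u]$ must pass through $t(v)$ on its way from $t(u)$ to the common node of $\Tt[u]$ and $\Tt[v]$. That observation is correct and proved properly. The derivation of the tree-partition bound from the stated inequality is likewise sound. The only cosmetic point is that the tie case $t(u)=t(v)$ should be handled by an explicit (arbitrary) choice of which endpoint to charge, to make clear every edge is charged exactly once; the bound per vertex is unaffected. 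Your proof is correct and is, as far as one can tell, the same argument the cited exercise has in mind.
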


\paragraph*{Measures and balanced separators.} In several parts of the paper, we will be introducing auxiliary weight functions on the vertices of graphs, which we call {\em{measures}}. 

\begin{definition}
Let $G$ be a graph. Any function $\meas\colon V(G)\to \R^+\cup \{0\}$ that is positive on at least one vertex is called a {\em{measure on $V(G)$}}. For a subset $A\subseteq V(G)$, we denote $\meas(A)=\sum_{u\in A} \meas(u)$.
\end{definition}

First, we need the following simple folklore lemma about the existence of balanced vertices of trees.

\begin{lemma}\label{lem:bal-bag}
Let $T$ be a tree on $q$ vertices, with a measure $\meas$ defined on its vertex set. Then a node $x\in V(T)$ such that $\meas(V(C))\leq \meas(V(T))/2$ for every $C\in \cc(T-x)$ can be found in time $\Oh(q)$.
\end{lemma}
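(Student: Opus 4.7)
The plan is to use a standard centroid-finding walk, adapted to the measure $\mu$ in place of counting vertices. First, I would root $T$ at an arbitrary vertex $r$. Using a single post-order traversal (a DFS in $\Oh(q)$ time, since $T$ has $q-1$ edges), I would compute for every vertex $v$ the value $s(v) := \mu(V(T_v))$, where $T_v$ denotes the subtree of $T$ rooted at $v$. In particular $s(r) = \mu(V(T))$, and for every non-root $v$, the component of $T - v$ that contains the parent of $v$ has measure exactly $\mu(V(T)) - s(v)$, while the remaining components of $T - v$ are precisely the subtrees $T_c$ over children $c$ of $v$, each of measure $s(c)$.

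Given the table of values $s(\cdot)$, I would then find the required node by walking down from $r$ as follows. Maintain a current vertex $x$, initially $x := r$; the invariant is that $s(x) \geq \mu(V(T))/2$, which clearly holds initially. At each step, examine the children of $x$. If every child $c$ satisfies $s(c) \leq \mu(V(T))/2$, stop and output $x$. Otherwise, pick any child $c^*$ with $s(c^*) > \mu(V(T))/2$ and move to it (setting $x := c^*$); the invariant is preserved by the strict inequality.

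For correctness, suppose the walk stops at some vertex $x$. The components of $T - x$ consist of the subtrees $T_c$ over children $c$ of $x$, each of measure $s(c) \leq \mu(V(T))/2$ by the stopping criterion, together with, if $x \neq r$, the component $C^\uparrow$ containing the parent of $x$. We have $\mu(V(C^\uparrow)) = \mu(V(T)) - s(x)$; but $x$ was descended to from its parent precisely because $s(x) > \mu(V(T))/2$, which gives $\mu(V(C^\uparrow)) < \mu(V(T))/2$. Hence every component of $T - x$ has measure at most $\mu(V(T))/2$, as required.

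For the running time, the DFS computing the values $s(\cdot)$ clearly runs in $\Oh(q)$. The walk visits a strictly descending sequence of vertices $r = x_0, x_1, \ldots, x_\ell$ along a root-to-node path, spending at each $x_i$ time proportional to $1 + \deg_T(x_i)$. Since the $x_i$ are pairwise distinct, the total cost of the walk is at most $\sum_{v \in V(T)} (1 + \deg_T(v)) = q + 2(q-1) = \Oh(q)$. I do not foresee a genuine obstacle here; the only point that requires a small bit of care is to argue that the ``parent-side'' component is automatically balanced at the stopping vertex, which is handled by the invariant described above.
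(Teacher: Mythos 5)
Your proof is correct and follows the standard centroid-finding paradigm. The paper takes a mildly different but essentially equivalent route: instead of rooting the tree and descending from the root, the paper considers the orientation of each edge $yz$ toward the side with larger measure and observes that any orientation of a tree (which has $q$ nodes but only $q-1$ arcs) must leave some node $x$ with in-degree~$0$; that node is the desired balanced vertex, and both the orientation and the subsequent degree count are implemented by a single DFS in $\Oh(q)$ time. Your version instead computes the rooted subtree measures $s(v)$ once and then walks down from the root maintaining the invariant $s(x)\ge\meas(V(T))/2$, stopping when no child subtree is heavy. The two arguments are computing the same data (comparisons of $\meas(V(T_y))$ against $\meas(V(T))/2$ for each edge) and differ only in how the winning vertex is located — a global degree count in the paper versus a local descent in yours. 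Your accounting of the walk's cost (each visited vertex is distinct, and a vertex costs $\Oh(1+\deg v)$, so the total is $\Oh(q)$) is a point worth having stated explicitly, since a naive bound of $\Oh(q)$ per step would only give $\Oh(q^2)$; the rest checks out.
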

\begin{proof}
Consider any edge $yz\in E(T)$, and let the removal of $yz$ split $T$ into subtrees $T_y$ and $T_z$, where $y\in V(T_y)$ and $z\in V(T_z)$. Orient $yz$ from $y$ to $z$ if $\meas(V(T_y))<\meas(V(T_z))$, from $z$ to $y$ if $\meas(V(T_y))>\meas(V(T_z))$, and arbitrarily if $\meas(V(T_y))=\meas(V(T_z))$. The obtained oriented tree has $q$ nodes and $q-1$ directed edges, which means that there is a node $x$ that has indegree $0$. For every neighbor $y$ of $x$ we have that the edge $xy$ was directed towards $x$. This means that $\meas(V(T_x))\geq \meas(V(T_y))$; equivalently $\meas(V(T_y))\leq \meas(V(T))/2$. As $y$ was an arbitrarily chosen neighbor of $x$, it follows that $x$ satisfies the required property.

As for the algorithmic claim, it is easy to implement the procedure orienting the edges in time $\Oh(q)$ by using a recursive depth-first search procedure on $T$ that returns the total weight of nodes in the explored subtree.
Having the orientation computed, suitable $x$ can be retrieved by a simple indegree count in time $\Oh(q)$.
\end{proof}

In graphs of bounded treewidth, Lemma~\ref{lem:bal-bag} can be generalized to find balanced bags instead of balanced vertices. 

\begin{definition}
Let $G$ be a graph, let $\meas$ be a measure on $V(G)$, and let $\alpha\in [0,1]$. A set $X\subseteq V(G)$ is called an {\em{$\alpha$-balanced separator w.r.t. $\meas$}} if for each $C\in \cc(G-X)$, it holds that $\meas(V(C))\leq \alpha\cdot \meas(V(G))$.
\end{definition}

\begin{lemma}[Lemma 7.19 of~\cite{platypus}]\label{lem:balanced}
Let $G$ be a graph with $\tw(G)<k$, and let $\meas$ be a measure on $V(G)$. Then there exists a $\frac{1}{2}$-balanced separator $X$ w.r.t. $\meas$ with $|X|\leq k$.
\end{lemma}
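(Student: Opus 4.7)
The plan is to pick a tree decomposition $(\Tt,\{B_x\}_{x\in V(\Tt)})$ of $G$ of width less than $k$, so that every bag has at most $k$ vertices, and then identify a single bag that plays the role of the balanced separator. The candidate bag will be chosen by applying Lemma~\ref{lem:bal-bag} to an auxiliary measure on $V(\Tt)$ that faithfully represents $\meas$.

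First I would define an auxiliary measure $\meas^*$ on $V(\Tt)$ as follows. For each $v\in V(G)$, pick an arbitrary node $t_v\in V(\Tt)$ with $v\in B_{t_v}$, and set $\meas^*(x)=\sum_{v\,:\,t_v=x}\meas(v)$. Then $\meas^*(V(\Tt))=\meas(V(G))$. Applying Lemma~\ref{lem:bal-bag} to $\Tt$ with measure $\meas^*$ yields a node $x\in V(\Tt)$ such that $\meas^*(V(C))\leq \meas(V(G))/2$ for every $C\in\cc(\Tt-x)$. I will claim that $X:=B_x$ is the desired separator; the bound $|X|\leq k$ is immediate from the width of the decomposition.

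The heart of the argument is to translate the balance of $x$ in $\Tt$ (measured by $\meas^*$) into balance of $B_x$ in $G$ (measured by $\meas$). Fix any connected component $D$ of $G-B_x$. I would invoke the standard separator property of tree decompositions: every connected component of $G-B_x$ has all its vertices contained in $\bigcup_{y\in V(C)}B_y$ for a single $C\in\cc(\Tt-x)$. Indeed, if $D$ contained vertices whose containing bags lay in two different components $C_1,C_2$ of $\Tt-x$, the corresponding $\Tt[u]$-subtrees for those vertices would pass through $x$, and any path in $G$ connecting them (existing since $D$ is connected) would have to use a vertex of $B_x$, contradicting $D\cap B_x=\emptyset$. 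Now for each $v\in V(D)$, every bag containing $v$ lies in $C$ (again because $v\notin B_x$, so $\Tt[v]$ avoids $x$ and hence lies entirely in one component of $\Tt-x$), in particular $t_v\in V(C)$. Consequently $\meas(V(D))\leq \meas^*(V(C))\leq \meas(V(G))/2$, as required.

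The only delicate point is the justification that each connected component of $G-B_x$ is confined to one subtree of $\Tt-x$, which is a routine consequence of the tree-decomposition axioms; everything else reduces to the assignment $v\mapsto t_v$ and a direct application of Lemma~\ref{lem:bal-bag}.
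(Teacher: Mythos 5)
Your proof is correct, and it is essentially the standard argument used in the cited reference (the paper itself states this lemma as a citation to the textbook \cite{platypus} and does not reprove it). You construct an auxiliary measure $\meas^*$ on $V(\Tt)$ by pushing each vertex's weight onto one canonical bag $t_v$, apply Lemma~\ref{lem:bal-bag} to find a node $x$ of $\Tt$ that is balanced w.r.t.\ $\meas^*$, and then use the separator property of tree decompositions to conclude that $X=B_x$ is a $\frac12$-balanced separator w.r.t.\ $\meas$. The one slightly imprecise sentence is ``the corresponding $\Tt[u]$-subtrees for those vertices would pass through $x$'': for a vertex $u\notin B_x$ the subtree $\Tt[u]$ avoids $x$; the correct justification is that the \emph{union} of the subtrees $\Tt[v]$ over all vertices $v$ on a connecting path inside $D$ is a connected subtree of $\Tt$ (consecutive subtrees intersect since adjacent vertices share a bag), so if it met two components of $\Tt-x$ it would have to contain $x$, forcing some $v$ on the path into $B_x$ --- a contradiction. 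Modulo that wording, the proposal is sound and complete.
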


\section{Approximating treewidth}\label{sec:approx}

\newcommand{\chld}{\texttt{chld}}
\newcommand{\Ci}{{\textrm{(i)}}}
\newcommand{\Cii}{{\textrm{(ii)}}}
\newcommand{\Ciii}{{\textrm{(iii)}}}
\newcommand{\all}{{\overline{S}}}
\newcommand{\Rt}{\mathfrak{T}}

In this section we show our approximation algorithm for treewidth, i.e., prove Theorem~\ref{thm:apx}. For the reader's convenience, we restate it here.

\restateapx*

In our proof of Theorem~\ref{thm:apx}, we obtain an upper bound of $1800k^2$ on the width of the computed tree decomposition. We remark that this number can be improved by a more careful analysis of different parameters used throughout the algorithm; however, we refrain from performing a tighter analysis in order to simplify the presentation.

We first prove the backbone technical result, that is, an approximation algorithm for finding balanced separators. This algorithm will be used as a subroutine in every step of the algorithm of Theorem~\ref{thm:apx}. Our approach for approximating balanced separators is based on the work of Feige and Mahdian in~\cite{FeigeM06}.

\begin{lemma}\label{lem:sep-apx}
There exists an algorithm that, given a graph $G$ on $n$ vertices and $m$ edges with a measure $\meas$ on $V(G)$, and a positive integer $k$, works in time $\Oh(k^4\cdot (n+m))$ and returns one of the following outcomes:
\begin{enumerate}[(1)]
\item\label{cn:large} A $\frac{7}{8}$-balanced separator $Y$ w.r.t. $\meas$ with $|Y|\leq 100k^2$;
\item\label{cn:small} A $(1-\frac{1}{100k})$-balanced separator $X$ w.r.t. $\meas$ with $|X|\leq k$;
\item\label{cn:no} A correct conclusion that $\tw(G)\geq k$.
\end{enumerate}
\end{lemma}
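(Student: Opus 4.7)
The plan is to design an iterative combinatorial algorithm based on repeated max-flow computations, in the spirit of the basic techniques of Feige and Mahdian~\cite{FeigeM06}. The algorithm would grow a candidate separator $X$ starting from $\emptyset$; in each iteration, it would identify the heaviest component $C$ of $G-X$, and either observe that $X$ is already balanced enough (giving outcome~(\ref{cn:large}) or~(\ref{cn:small}) depending on the current size of $X$), or find a small vertex cut inside $C$ of size at most $k$ via Corollary~\ref{cor:flow-to-k} and add it to $X$. If no useful cut can be found despite trying several carefully chosen source-sink pairs, Menger's theorem together with Lemma~\ref{lem:balanced} would be invoked to conclude that $\tw(G)\geq k$, giving outcome~(\ref{cn:no}).

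The core subroutine, given a heaviest component $C$ with $\meas(C) > \tfrac{7}{8}\meas(V(G))$, aims to find $Y\subseteq V(C)$ with $|Y|\leq k$ such that passing from $X$ to $X\cup Y$ shrinks the heaviest component by a constant factor. First, the trivial case is when some vertex $v$ has $\meas(v)>\tfrac{1}{100k}\meas(V(G))$: then $X\cup\{v\}$ already certifies outcome~(\ref{cn:small}). Otherwise, the plan is to select a small set $S\subseteq V(C)$ of ``heavy candidates'' such that, by a pigeonhole argument, any $\tfrac{1}{2}$-balanced separator of $C$ of size $\leq k$ must leave at least one pair of $S$ on opposite sides; iterating Corollary~\ref{cor:flow-to-k} over the $\Oh(|S|^2)$ such pairs with threshold $k$ would then either return a balanced cut of size $\leq k$ (which we add to $X$), or for every pair certify that the min $(u,v)$-vertex cut exceeds $k$, contradicting Lemma~\ref{lem:balanced} applied to $C$ and hence implying $\tw(G)\geq k$.

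To fit the $\Oh(k^4\cdot(n+m))$ budget, the accounting would be: each max-flow call costs $\Oh(k\cdot(n+m))$ by Corollary~\ref{cor:flow-to-k}; the number of calls per iteration is $\Oh(k^2)$, bounded by the square of the heavy-candidate set, chosen of size $\Oh(k)$; and the number of iterations is $\Oh(k)$, since each successful iteration contributes at most $k$ vertices to $X$ and the procedure stops as soon as $|X|$ exceeds $\Oh(k^2)$. Recomputing the components of $G-X$ after each update takes $\Oh(n+m)$, absorbed by the flow cost. This gives $\Oh(k\cdot k^2\cdot k\cdot(n+m))=\Oh(k^4\cdot(n+m))$ in total.

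The hard part is expected to be the pigeon-hole step: choosing the heavy-candidate set $S$ small enough to keep the algorithm fast, yet structured so that any $\tfrac{1}{2}$-balanced separator of $C$ of size $\leq k$ must split $S$ and, moreover, the min-cut between some separated pair yields a cut that is itself balanced (not merely small). A natural choice is to take $S$ to consist of about $\Oh(k)$ vertices whose combined $\meas$-measure constitutes a significant portion of $\meas(C)$; then any separator of size $\leq k$ leaves $\Omega(k)$ vertices of $S$ outside, and a counting argument on their measure forces a balanced bisection. Pinning down this counting argument, and arguing that the min-cut between a separated pair can be post-processed (e.g.\ by choosing the source-side-closest cut from Corollary~\ref{cor:flow-to-k}) to the required $\tfrac{7}{8}$-balanced form, is the technical heart of the proof.
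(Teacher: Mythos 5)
Your high-level plan---iterate, keep growing a vertex set $X$ via max-flow computations, budget $\Oh(k)$ iterations of $\Oh(k^2)$ flow calls at $\Oh(k\cdot(n+m))$ each---matches the shape of the paper's proof, and you correctly anticipated the trivial cases (heavy vertex gives outcome~(\ref{cn:small}), measures~$<\frac{1}{100k}$ afterwards). But the technical core is missing and your placeholder for it does not work.

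The first gap is the ``heavy candidate set $S$ of individual vertices.'' After disposing of heavy vertices, every $v$ has $\meas(v) < \frac{1}{100k}$, so any $\Oh(k)$ vertices together carry $\Oh(1/100)$ of the measure---there is no pigeonhole argument forcing a balanced separator $W$ (of size $\leq k$) to split $S$, since $W$ could simply contain or avoid all of $S$. What the paper uses instead is a Steiner-tree partition (Claim~\ref{cl:tree-partition}, after~\cite{FeigeM06}): take a spanning tree of the heaviest component and cut it into $\Oh(k)$ \emph{connected} pieces $R_i$, each of measure at least $\frac{1}{100k}$, covering all but an $\Oh(1/k)$ fraction of the measure. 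The connectedness is what makes the argument go: once the unknown separator $W$ splits the heavy component into two non-adjacent sides $A_1, A_2$ each of measure $>\frac{1}{6}$, counting shows that $\geq 2k$ pieces lie entirely in $A_1$ and $\geq 2k$ entirely in $A_2$ and are disjoint from $W$, so some disjoint pair $(R_i, R_j)$ has min vertex cut $\leq |W| \leq k$. And because $\meas(R_i), \meas(R_j) \geq \frac{1}{100k}$, \emph{any} $(R_i, R_j)$-cut of size $\leq k$ is automatically a $(1-\frac{1}{100k})$-balanced separator---the ``post-processing to make the cut balanced'' you flag as a concern disappears entirely, because the two sides each retain a heavy connected blob.

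The second gap is your fallback: you propose to conclude $\tw(G)\geq k$ whenever all pairwise min-cuts exceed $k$. That inference is not sound per iteration---the absence of a small cut among the tested pairs does not contradict Lemma~\ref{lem:balanced}, because $W$ might hit the pivot vertices $\{u_1,\dots,u_p\}$ of the pieces and thereby break the ``$R_i\subseteq A_1$'' containment used above. The paper handles this by \emph{not} declaring failure: if no small cut is found, it outputs the pivot set $Z=\{u_1,\dots,u_p\}$ ($|Z|\leq 100k$) with the guarantee $Z\cap W\neq\emptyset$, and maintains the invariant that after $i$ rounds the accumulated set $Y_i$ satisfies $|W\cap Y_i|\geq i$. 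After $k$ rounds $Y_k\supseteq W$ is forced, at which point $Y_k$ is a $\frac{1}{2}$-balanced separator of size $\leq 100k^2$ (outcome~(\ref{cn:large})); only if $Y_k$ fails to be balanced can one conclude $\tw(G)\geq k$. This ``chase the hidden $W$'' invariant, rather than ``shrink the heaviest component each round,'' is also what justifies the $\Oh(k)$ bound on iterations---your plan assumes each round's cut shrinks the heavy component, but there is no argument for that, and the paper's iterations in fact need not shrink it at all until $W$ has been fully swallowed.
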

\begin{proof}
By rescaling $\meas$ if necessary, we assume that $\meas(V(G))=1$. Throughout the proof we assume that $\tw(G)<k$ and hence, by Lemma~\ref{lem:balanced}, there exists some $\frac{1}{2}$-balanced separator $W$ w.r.t $\meas$, which is of course unknown to the algorithm. We will prove that whenever such a $W$ exists, the algorithm reaches one of the outcomes~\eqref{cn:large} or~\eqref{cn:small}. If none of these outcomes is reached, then no such $W$ exists and, by Lemma~\ref{lem:balanced}, the algorithm can safely report that $\tw(G)\geq k$, i.e., reach outcome~\eqref{cn:no}. We also assume that for every vertex $u\in V(G)$ it holds that $\meas(u)<\frac{1}{100k}$, because otherwise we can immediately provide outcome~\eqref{cn:small} by setting $X=\{u\}$. Note that in particular this implies that $\meas(W)<\frac{1}{100}$.

We first generalize the problem slightly. Suppose that for some $i\leq k$ we are given a set $Y_i$ such that the following invariants are satisfied: (i) $|Y_i|\leq 100ik$ and (ii) $|W\cap Y_i|\geq i$. Then, the claim is as follows:
\begin{claim}\label{cl:general}
Given $Y_i$ satisfying invariants (i) and (ii) for some $i<k$, one can in time $\Oh(k^3\cdot (n+m))$ either arrive at one of the outcomes~\eqref{cn:large} or~\eqref{cn:small}, or find a set $Z$ with $|Z|\leq 100k$ and $Z\cap Y_i=\emptyset$, such that $Z\cap W\neq \emptyset$.
\end{claim}
Before we proceed to the proof of Claim~\ref{cl:general}, we observe how Lemma~\ref{lem:sep-apx} follows from it. We start with $Y_0=\emptyset$, which clearly satisfies the invariants (i) and (ii). Then we iteratively compute $Y_1,Y_2,Y_3,\ldots$ as follows: when computing $Y_{i+1}$, we use the algorithm of Claim~\ref{cl:general} to either provide outcome~\eqref{cn:large} or~\eqref{cn:small}, in which case we terminate the whole computation, or find a suitable set $Z$. Then $Y_{i+1}=Y_i\cup Z$ satisfies the invariants (i) and (ii) for the next iteration, and hence we can proceed. Suppose that this algorithm successfully performed $k$ iterations, i.e., it constructed $Y_k$. Then we have that $|W\cap Y_k|\geq k$, so since $|W|\leq k$, we have $W\subseteq Y_k$. Then $Y_k$ should be a $\frac{1}{2}$-balanced separator w.r.t. $\meas$ and $|Y_k|\leq 100k^2$, so it can be reported as $Y$ in outcome~\eqref{cn:large}. If $Y_k$ is not a $\frac{1}{2}$-balanced separator, then $W$ did not exist in the first place and the algorithm can safely report outcome~\eqref{cn:no}. Since the algorithm of Claim~\ref{cl:general} works in time $\Oh(k^{3}\cdot (n+m))$ and we apply it at most $k$ times, the running time promised in the lemma statement follows.

We now proceed to the proof of Claim~\ref{cl:general}. Let $G'=G-Y_i$. First, let us investigate the connected components of $G'$. If $\meas(V(C))\leq \frac{7}{8}$ for each $C\in \cc(G')$, then we can reach outcome~\eqref{cn:large} by taking $Y=Y_i$, because $|Y_i|\leq 100ik\leq 100k^2$. Hence, suppose there is a connected component $C_0\in \cc(G')$ with $\meas(C_0)>\frac{7}{8}$. Let $T_0$ be an arbitrary spanning tree of $C_0$, and let $V_0=V(C_0)$. We first prove an auxiliary claim that will imply that we can find a nice partitioning of $T_0$. This is almost exactly the notion of {\em{Steiner decompositions}} used by Feige and Mahdian~\cite{FeigeM06} (see Definition~5.1 and Lemma~5.2 in~\cite{FeigeM06}), but we choose to reprove the result for the sake of completeness.
\begin{claim}\label{cl:tree-partition}
Suppose we are given a number $\lambda\in \R^+$ and a tree $T$ on $n$ vertices with a measure $\meas$ on $V(T)$, such that $\meas(u)<\lambda$ for each $u\in V(T)$. Then one can in time $\Oh(n)$ find a family $\Ff=\{(R_1,u_1), (R_2,u_2),\ldots,(R_p,u_p)\}$ (with $u_i$s not necessarily distinct) such that the following holds (in the following, we denote $\tilde{R}_i=R_i\setminus \{u_i\}$):
\begin{enumerate}[(a)]
\item\label{cn:each_part} for each $i=1,2,\ldots,p$, we have that $u_i\in R_i\subseteq V(T)$, $T[R_i]$ is connected, and $\lambda\leq \meas(\tilde{R}_i)<4\lambda$;
\item\label{cn:almost_all} $\meas(V(T)\setminus \bigcup_{i=1}^p \tilde{R}_i)<2\lambda$;
\item\label{cn:disjoint} sets $\tilde{R}_i$ are pairwise disjoint for $i=1,2,\ldots,p$.
\end{enumerate}
\end{claim}
\begin{proof}
We first provide a combinatorial proof of the existence of $\Ff$, which proceeds by induction on $n$. Then we will argue how the proof gives rise to an algorithm constructing $\Ff$ in linear time. 

Let us root $T$ in an arbitrary vertex $r$, which imposes a parent-child relation on the vertices of $T$. For $v\in V(T)$, let $T_v$ be the subtree rooted at $v$. If $\meas(V(T))<2\lambda$ then we can take $\Ff=\emptyset$, so suppose otherwise. Let then $u$ be the deepest vertex of $T$ for which $\meas(V(T_u))\geq 2\lambda$, and let $u_1,u_2,\ldots,u_q$ be the children of $u$. Let us denote $a_j=\meas(V(T_{u_j}))$, for $j=1,2,\ldots,q$. As $u$ was chosen to be the deepest, we have that $a_j<2\lambda$ for each $j=1,2,\ldots,q$. Moreover, since $\meas(u)<2\lambda$ and $\meas(V(T_u))\geq 2\lambda$, we have that $u$ has at least one child, i.e., $q>0$.

Scan the sequence $a_1,a_2,\ldots,a_q$ from left to right, and during this scan iteratively extract minimal prefixes for which the sum of entries is at least $\lambda$, up to the point when the total sum of the remaining numbers is smaller than $\lambda$. Let these extracted sequences be $$\{a_1,a_2,\ldots,a_{j_2-1}\},\{a_{j_2},a_{j_2+1},\ldots,a_{j_3-1}\},\ldots,\{a_{j_s},a_{j_s+1},\ldots,a_{j_{s+1}-1}\},$$ where $s$ is their number. Then, denoting $a_{j_1}=1$, we infer from the construction and the fact that $a_j<2\lambda$ for all $j$ that the following assertions hold:
\begin{itemize}
\item $\lambda\leq \sum_{i=j_\ell}^{j_{\ell+1}-1} a_i<3\lambda$ for all $\ell=1,2,\ldots,s$; and
\item $\sum_{i=j_{s+1}}^q a_i<\lambda$.
\end{itemize}
Moreover, since $\meas(V(T_u))\geq 2\lambda$, we have that $\sum_{i=1}^q a_i\geq 2\lambda-\meas(u)>\lambda$, and hence at least one sequence has been extracted; i.e., $s\geq 1$.

For $\ell=1,\ldots,s-1$, let $I_\ell=\{j_\ell,j_\ell+1,\ldots,j_{\ell+1}-1\}$, and let $I_s=\{j_s,j_s+1,\ldots,q\}$. Concluding, from the assertions above it follows that we have partitioned $[q]$ into contiguous sets of indices $I_1,I_2,\ldots,I_s$ such that $s\geq 1$ and $\lambda\leq \sum_{i\in I_\ell} a_i < 4\lambda$, for each $\ell=1,2,\ldots,s$.

Let $T'$ be $T$ with all the subtrees $T_{u_j}$ removed, for $j=1,2,\ldots,q$. Since $q>0$, we have that $|V(T')|<|V(T)|$, and hence by the induction hypothesis we can find a family $\Ff'=\{(R_1,u_1), (R_2,u_2),\ldots,(R_{p'},u_{p'})\}$ for the tree $T'$ satisfying conditions~\eqref{cn:each_part},~\eqref{cn:almost_all} and~\eqref{cn:disjoint}.

For $\ell=p'+1,p'+2,\ldots,p'+s$, let $R_{p'+\ell}=\{u\}\cup \bigcup_{i\in I_\ell} V(T_{u_i})$ and $u_\ell=u$. Observe that $\meas(R_{p'+\ell}\setminus \{u_{p'+\ell}\})=\sum_{i\in I_\ell} a_i$, so we obtain that $\lambda\leq \meas(\tilde{R}_{p'+\ell})<4\lambda$. Consider $\Ff=\Ff'\cup \{(R_{p'+1},u_{p'+1}),(R_{p'+2},u_{p'+2}),\ldots,(R_{p'+s},u_{p'+s})\}$. It can be easily verified that conditions~\eqref{cn:each_part},~\eqref{cn:almost_all}, and~\eqref{cn:disjoint} hold for $\Ff$ using the induction assumption that they held for $\Ff'$. This concludes the inductive proof of the existence of $\Ff$.

Naively, the proof presented above gives rise to an $\Oh(n^2)$ algorithm that iteratively finds a suitable vertex $u$, and applies itself recursively to $T'$. However, it is very easy to see that the algorithm can be implemented in time $\Oh(n)$ by processing $T$ bottom-up, remembering the total measure of the vertices in the processed subtree, and cutting new pairs $(R_i,u_i)$ whenever the accumulated measure exceeds $2\lambda$.
\cqed\end{proof}

Armed with Claim~\ref{cl:tree-partition}, we proceed with the proof of Claim~\ref{cl:general}. Apply the algorithm of Claim~\ref{cl:tree-partition} to tree $T_0$ and $\lambda = \frac{1}{100k}$. Note that the premise of the claim holds by the assumption that $\meas(u)<\frac{1}{100k}$ for each $u\in V(G)$. Therefore, we obtain a family $\Ff=\{(R_1,u_1), (R_2,u_2),\ldots,(R_p,u_p)\}$ satisfying conditions~\eqref{cn:each_part},~\eqref{cn:almost_all}, and~\eqref{cn:disjoint} for $T_0$. For $i=1,2,\ldots,p$, let $\tilde{R}_i=R_i\setminus \{u_i\}$ and let $Z=\{u_1,u_2,\ldots,u_p\}$. Note that by conditions~\eqref{cn:each_part} and~\eqref{cn:disjoint} we have that
$$1=\meas(V(G))\geq \sum_{i=1}^p \meas(\tilde{R}_i)\geq p\lambda=\frac{p}{100k},$$
so $|Z|\leq p\leq 100k$. 

We need the following known fact.

\begin{claim}[cf. the proof of Lemma~7.20 in \cite{platypus}]\label{cl:grouping}
Let $b_1,b_2,\ldots,b_q$ be nonnegative reals such that $\sum_{i=1}^q b_i\leq 1$ and $b_i\leq 1/2$ for each $i=1,2,\ldots,q$. Then $[q]$ can be partitioned into two sets $J_1$ and $J_2$ such that $\sum_{i\in J_z} b_i\leq \frac{2}{3}$ for each $z\in \{1,2\}$.
\end{claim}

Let $B_1,B_2,\ldots,B_q$ be the connected components of $C_0-W$, and let $b_i=\meas(V(B_i))$ for each $i=1,2,\ldots,q$. Clearly $\sum_{i=1}^q b_i\leq \meas(V(G))=1$. Moreover, each component $B_i$ is contained in some component of $G-W$, and hence, since $W$ is a $\frac{1}{2}$-balanced separator of $G$ w.r.t. measure $\meas$, we have that $b_i\leq 1/2$ for each $i=1,2,\ldots,q$. By Claim~\ref{cl:grouping} we can find a partition $(J_1,J_2)$ of $[q]$ such that $\sum_{i\in J_z} b_i\leq \frac{2}{3}$ for each $z\in \{1,2\}$. Let $A_1=\bigcup_{i\in J_1} V(B_i)$ and $A_2=\bigcup_{i\in J_2} V(B_i)$. Then vertex sets $A_1$ and $A_2$ are non-adjacent in $G$ and $\meas(A_1),\meas(A_2)\leq \frac{2}{3}$. Recall that $\frac{7}{8}<\meas(V(C_0))=\meas(A_1)+\meas(A_2)+\meas(W\cap V(C_0))$. Since $\meas(W)<\frac{1}{100}$, we have that $\meas(A_1)+\meas(A_2)>\frac{7}{8}-\frac{1}{100}>\frac{5}{6}$. Hence it follows that $\meas(A_1),\meas(A_2)>\frac{5}{6}-\frac{2}{3}=\frac{1}{6}$.

Let $K_1\subseteq [q]$ be the set of those indices $i\in [q]$ for which $\tilde{R}_i\cap A_1\neq\emptyset$. By properties~\eqref{cn:each_part},~\eqref{cn:almost_all}, and~\eqref{cn:disjoint}, we obtain that:
$$\frac{1}{6}<\meas(A_1)\leq \meas\left(V(C_0)\setminus \bigcup_{i=1}^p \tilde{R}_i\right)+\sum_{i\in K_1} \meas(\tilde{R}_i)<2\lambda+4\lambda|K_1|.$$
Since $\lambda=\frac{1}{100k}$, we have that
$$|K_1|>\frac{1}{24\lambda}-\frac{1}{2}>3k.$$
Since $|W|\leq k$ and sets $\tilde{R}_i$ are pairwise disjoint, there is $L_1\subseteq K_1$ of size at least $2k$ such that additionally $\tilde{R}_i\cap W=\emptyset$ for each $i\in L_1$. Symmetrically we prove that there is a set $L_2\subseteq [q]$ of at least $2k$ indices such that $\tilde{R}_i\cap A_2\neq\emptyset$ and $\tilde{R}_i\cap W=\emptyset$, for each $i\in L_2$.

Suppose for a moment that $Z\cap W=\emptyset$. Then, for each $i\in L_1$ we in fact have that $R_i\cap W=\emptyset$. Since $G[R_i]$ is connected, $\tilde{R}_i\cap A_1\neq\emptyset$, and there is no edge between $A_1$ and $A_2$, it follows that $R_i\subseteq A_1$. Similarly, $R_i\subseteq A_2$ for each $i\in L_2$.

Therefore, the algorithm does as follows. For each pair of distinct indices $i,j\in [q]$ for which $R_i\cap R_j=\emptyset$, we verify whether the size of a minimum vertex cut in $G$ between $R_i$ and $R_j$ does not exceed $k$. If we find such a pair and the corresponding vertex cut $X$, then $X$ separates $R_i$ from $R_j$, so $X$ is a $(1-\frac{1}{100k})$-balanced separator w.r.t. $\meas$ in $G$, due to $\meas(R_i),\meas(R_j)\geq \frac{1}{100k}$. Therefore, such $X$ can be reported as outcome~\eqref{cn:small}. The argumentation of the previous paragraph ensures us that at least one such pair $(i,j)$ will be found provided $Z\cap W=\emptyset$.

Hence, if for every such pair $(i,j)$ the minimum vertex cut in $G$ between $R_i$ and $R_j$ is larger than $k$, then we have a guarantee that $Z\cap W\neq \emptyset$. Since $|Z|\leq 100k$, we can provide the set $Z$ as the outcome of the algorithm of Claim~\ref{cl:general}. This concludes the description of the algorithm of Claim~\ref{cl:general}
. To bound its running time, observe that the application of the algorithm of Claim~\ref{cl:tree-partition} takes time $\Oh(n+m)$, whereas later we verify the minimum value of a vertex cut for at most $p^2=\Oh(k^2)$ pairs $(i,j)$. By Corollary~\ref{cor:flow-to-k}, each such verification can be implemented in time $\Oh(k\cdot (n+m))$. Hence, the whole algorithm of Claim~\ref{cl:general} indeed runs in time $\Oh(k^3\cdot (n+m))$. As argued before, Lemma~\ref{lem:sep-apx} follows from Claim~\ref{cl:general}.
\end{proof}

Having an approximation algorithm for balanced separators, we can proceed with the proof of Theorem~\ref{thm:apx}. Following the approach introduced by Robertson and Seymour~\cite{gm13}, we solve a more general problem. Let $\eta=100k^2$. Assume we are given a graph $H$ together with a subset $S\subseteq V(H)$, $S\neq V(H)$, with the invariant that $|S|\leq 17\eta$. The goal is to either conclude that $\tw(H)\geq k$, or to compute a rooted tree decomposition of $H$, i.e., a tree decomposition rooted at some node $r$, that has width at most $18\eta$ and where $S$ is a subset of the root bag. Note that if we find an algorithm with running time $\Oh(k^7\cdot n\log n)$ for the generalized problem, then Theorem~\ref{thm:apx} will follow by applying it to $G$ and $S=\emptyset$. 

The algorithm for the more general problem works as follows. First, observe that we can assume that $|E(H)|\leq kn$, where we denote $n=|V(H)|$. Indeed, by Lemma~\ref{lem:tw-edges} we can immediately answer that $\tw(H)\geq k$ if this assertion does not hold. Having this assumption, we consider three cases: either (i) $|V(H)\setminus S|\leq \eta$, or, if this case does not hold, (ii) $|S|\leq 16\eta$, or (iii) $16\eta<|S|\leq 17\eta$.
\bigskip

\noindent{\bf{Case (i)}}: If $|V(H)\setminus S|\leq \eta$, we can output a trivial tree decomposition with one bag containing $V(H)$, because the facts that $|V(H)\setminus S|\leq \eta$ and $|S|\leq 17\eta$ imply that $|V(H)|\leq 18\eta$. In the other cases we assume that (i) does not hold, i.e., $|V(H)\setminus S|>\eta$.

\medskip

\noindent{\bf{Case (ii)}}: Suppose $|S|\leq 16\eta$. Define a measure $\meas_{\all}$ on $V(H)$ as follows: $\meas_{\all}(u)=0$ for each $u\in S$, and $\meas_{\all}(u)=1$ for each $u\notin S$. Run the algorithm of Lemma~\ref{lem:sep-apx} on $H$ with measure $\meas_{\all}$. If it concluded that $\tw(H)\geq k$, then we can terminate and pass this answer. Otherwise, let $X$ be the obtained subset of vertices. Regardless whether outcome~\eqref{cn:large} or~\eqref{cn:small} was given, we have that $X$ is a $(1-\frac{1}{100k})$-balanced separator in $H$ w.r.t. measure $\meas_{\all}$, and moreover $|X|\leq \eta$. Let $B=S\cup X$ and observe that $|B|\leq |S|+|X|\leq 17\eta$. Consider the connected components of $H-B$. For each $C\in \cc(H-B)$, define a new instance $(H_C,S_C)$ of the generalized problem as follows: $H_C=H[N_H[V(C)]]$ and $S_C=N_H(V(C))$. Observe that $S_C\subseteq B$, hence $|S_C|\leq |B|\leq 17\eta$, and thus the invariant that $|S_C|\leq 17\eta$ is satisfied in the instance $(H_C,S_C)$. Apply the algorithm recursively to $(H_C,S_C)$, yielding either a conclusion that $\tw(H_C)\geq k$, in which case we can report that $\tw(H)\geq k$ as well, or a rooted tree decomposition $\Tt_C$ of $H_C$ that has width at most $18\eta$ and where $S_C$ is a subset of the root bag. Construct a tree decomposition $\Tt$ of $H$ as follows: create a root node $r$ associated with bag $B$, and, for each $C\in \cc(H-B)$, attach the decomposition $\Tt_C$ below $r$ by making the root of $\Tt_C$ a child of $r$. It easy to verify that $\Tt$ created in this manner indeed is a tree decomposition of $H$, and its width does not exceed $18\eta$ because $|B|\leq 17\eta\leq 18\eta$.

\medskip

\noindent{\bf{Case (iii)}}: Suppose $16\eta<|S|\leq 17\eta$. Define a measure $\meas_{S}$ on $V(H)$ as follows: $\meas_{S}(u)=1$ for each $u\in S$, and $\meas_S(u)=0$ for each $u\notin S$. Run the algorithm of Lemma~\ref{lem:sep-apx} on $H$ with measure $\meas_{S}$. Again, if it concluded that $\tw(H)\geq k$, then we can terminate and pass this answer. Otherwise, we obtain either a $\frac{7}{8}$-balanced separator $Y$ with $|Y|\leq \eta$, or a $(1-\frac{1}{100k})$-balanced separator $X$ with $|X|\leq k$. Let $Z$ be this separator, being either $X$ or $Y$ depending on the subcase. The algorithm proceeds as in the previous case. Define $B=S\cup Z$; then $|B|\leq |S|+|Z|\leq 18\eta$. For each $C\in \cc(H-B)$, define a new instance $(H_C,S_C)$ of the generalized problem by taking $(H_C,S_C)=(H[N_H[V(C)]],N_H(V(C)))$. Apply the algorithm recursively to each $(H_C,S_C)$, yielding either a conclusion that $\tw(H_C)\geq k$, which implies $\tw(H)\geq k$, or a tree decomposition $\Tt_C$ of $H_C$ that has width at most $18\eta$ and $S_C$ is contained in its root bag. Construct the output decomposition $\Tt$ by taking $B$ as the bag of the root node $r$, and attaching all the decompositions $\Tt_C$ below $r$ by making their roots children of $r$. Again, it can be easily verified that $\Tt$ is a tree decomposition of $H$ of width at most $18\eta$. The only verification that was not performed is that in the new instances $(H_C,S_C)$, the invariant that $|S_C|\leq 17\eta$ still holds. We shall prove an even stronger fact: that $|S_C|\leq |S|-k$, for each $C\in \cc(H-B)$, so the needed invariant will follow by $|S|\leq 17\eta$. However, we will need this stronger property in the future. The proof investigates the subcases $Z=X$ and $Z=Y$ separately.

Suppose first that $Z=X$, that is, the algorithm of Lemma~\ref{lem:sep-apx} have found a $(1-\frac{1}{100k})$-balanced separator $X$ with $|X|\leq k$. Consider any connected component $C\in \cc(H-B)$, and let $D$ be the connected component of $H-X$ in which $C$ is contained. Since $X$ is $(1-\frac{1}{100k})$-balanced w.r.t. $\meas_S$, we have that 
$$|S\cap D|=\meas_S(D)\leq \left(1-\frac{1}{100k}\right)\cdot\meas_S(V(H))=\left(1-\frac{1}{100k}\right)\cdot |S|=|S|-\frac{|S|}{100k}<|S|-16k;$$
the last inequality follows from the assumption that $|S|>16\eta=1600k^2$. Now observe that $N_H(C)\subseteq X\cup (S\cap D)$, so $|N_H(C)|<k+(|S|-16k)=|S|-15k<|S|-k$.

Suppose second that $Z=Y$, that is, the algorithm of Lemma~\ref{lem:sep-apx} found a $\frac{7}{8}$-balanced separator $Y$ with $|Y|\leq \eta$. Again, consider any connected component $C\in \cc(H-B)$, and let $D$ be the connected component of $H-Y$ in which $C$ is contained. Since $Y$ is $\frac{7}{8}$-balanced w.r.t. $\meas_S$, we have that
$$|S\cap D|=\meas_S(D)\leq \frac{7}{8}\cdot \meas_S(V(H))=\frac{7}{8}|S|=|S|-\frac{|S|}{8}<|S|-2\eta;$$
the last inequality follows from the assumption that $|S|>16\eta$. Again, observe that $N_H(C)\subseteq Y\cup (S\cap D)$, so $|N_H(C)|<\eta+(|S|-2\eta)=|S|-\eta<|S|-k$.
\bigskip

This concludes the description of the algorithm. Its partial correctness is clear: if the algorithm concludes that $\tw(H)\geq k$, then it is always correct, and similarly any tree decomposition of $H$ output by the algorithm satisfies the specification. We are left with arguing that the algorithm always stops (i.e., it does not loop in the recursion) and its running time is bounded by $\Oh(k^7\cdot n\log n)$. We argue both these properties simultaneously.

Let $\Rt$ be the recursion tree yielded by the algorithm. That is, $\Rt$ is a rooted tree with nodes corresponding to recursive subcalls to the algorithm (in case the algorithm loops, $\Rt$ would be infinite). Thus, each node is labeled by the instance $(H',S')$ which is being solved in the subcall. The root of $\Rt$ corresponds to the original instance $(H,S)$, and the children of each node $x$ correspond to subcalls invoked in the call corresponding to $x$. Observe that if the algorithm returns some decomposition $\Tt$ of $H$, then $\Tt$ is isomorphic to $\Rt$, because every subcall produces exactly one new bag, and the bags are arranged into $\Tt$ exactly according to the recursion tree of the algorithm. The leaves of $\Rt$ correspond to calls where no subcall was invoked: either ones conforming to Case~(i), or ones conforming to Case~(ii) or~(iii) when $B=V(H)$. Partition the node set of $\Rt$ into sets $A_{\Ci}$, $A_{\Cii}$, $A_{\Ciii}$, depending whether the corresponding subcall falls into Case~(i),~(ii), or~(iii). For each node $x\in V(\Rt)$, let $(H_x,S_x)$ be its corresponding subcall, let $h_x=|V(H_x)\setminus S_x|$, $s_x=|S_x|$, and $n_x=h_x+s_x=|V(H_x)|$. By $\chld(x)$ we denote the set of children of $x$. The following claim shows the crucial properties of $\Rt$ that follow from the algorithm.

\begin{claim}\label{cl:recursion-bound}
The following holds:
\begin{enumerate}[(a)]
\item\label{as:ii} For each $x\in A_\Cii$ and each $y\in \chld(x)$, we have $h_y\leq (1-\frac{1}{100k})\cdot h_x$.
\item\label{as:iii} For each $x\in A_\Ciii$ and each $y\in \chld(x)$, we have $s_y\leq s_x-k$.
\end{enumerate}
\end{claim}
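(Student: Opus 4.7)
The plan is to read off both inequalities directly from how the recursion is set up in Cases~(ii) and~(iii), exploiting the balanced-separator guarantees of Lemma~\ref{lem:sep-apx} with respect to the two measures $\meas_{\all}$ and $\meas_S$ defined there. Part~(b) is in fact essentially carried out already inside the narrative of Case~(iii), so the only genuinely new verification is part~(a); the bookkeeping for both parts follows exactly the same pattern.

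For part~(a), I would fix $x \in A_\Cii$ with instance $(H_x, S_x)$. By the construction in Case~(ii), the bag produced at $x$ is $B = S_x \cup X$, where $X$ is a $\bigl(1-\tfrac{1}{100k}\bigr)$-balanced separator of $H_x$ with respect to $\meas_{\all}$, the measure assigning $1$ to vertices outside $S_x$ and $0$ to vertices of $S_x$. A child $y$ of $x$ corresponds to some $C \in \cc(H_x - B)$ with $H_y = H_x[N_{H_x}[V(C)]]$ and $S_y = N_{H_x}(V(C))$; from these definitions $V(H_y) \setminus S_y = V(C)$, so $h_y = |V(C)|$. Since $C$ avoids $B \supseteq S_x$, we have $V(C) \cap S_x = \emptyset$, and letting $D$ be the component of $H_x - X$ containing $C$, the inclusion $V(C) \subseteq V(D) \setminus S_x$ together with the balance of $X$ gives
\[
h_y \;=\; |V(C)| \;\leq\; |V(D) \setminus S_x| \;=\; \meas_{\all}(V(D)) \;\leq\; \Bigl(1-\tfrac{1}{100k}\Bigr)\meas_{\all}(V(H_x)) \;=\; \Bigl(1-\tfrac{1}{100k}\Bigr)\, h_x.
\]

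For part~(b), I would fix $x \in A_\Ciii$ and write $Z$ for the separator chosen in Case~(iii) (either the $X$ of size $\leq k$ that is $(1-\tfrac{1}{100k})$-balanced w.r.t.\ $\meas_S$, or the $Y$ of size $\leq \eta$ that is $\tfrac78$-balanced w.r.t.\ $\meas_S$), with $B = S_x \cup Z$. For a child $y$ coming from $C \in \cc(H_x - B)$, letting $D$ be the component of $H_x - Z$ containing $C$, one checks that every vertex of $N_{H_x}(V(C))$ lies either in $Z$ or in $S_x \cap D$, so $s_y \leq |Z| + |S_x \cap D|$. In the $Z = X$ subcase the balance of $X$ combined with $s_x > 16\eta = 1600k^2$ gives $|S_x \cap D| < s_x - 16k$, hence $s_y < k + s_x - 16k \leq s_x - k$; in the $Z = Y$ subcase the $\tfrac78$-balance combined with $s_x > 16\eta$ gives $|S_x \cap D| < s_x - 2\eta$, hence $s_y < \eta + s_x - 2\eta = s_x - \eta \leq s_x - k$. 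This is exactly the computation already displayed at the end of Case~(iii).

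I do not anticipate any real obstacle here: the crucial design decision, namely using $\meas_{\all}$ in Case~(ii) and $\meas_S$ in Case~(iii) so that the balanced-separator guarantee of Lemma~\ref{lem:sep-apx} directly controls $h$ and $s$ respectively, has already been baked into the construction, so both bounds reduce to a single line of arithmetic once the correct measure is invoked.
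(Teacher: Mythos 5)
Your proof is correct and follows essentially the same route as the paper: part~(b) is identified (as the paper does) with the stronger bound $|N_H(C)| \le |S| - k$ already established in Case~(iii), and part~(a) is the one-line calculation passing from the component $C$ of $H_x - B$ to the component $D$ of $H_x - X$, using that $X$ is $(1-\tfrac{1}{100k})$-balanced w.r.t.\ $\meas_{\all}$ and that $\meas_{\all}(V(D)) = |V(D)\setminus S_x|$, $\meas_{\all}(V(H_x)) = h_x$. No gaps.
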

\begin{proof}
Assertion~\eqref{as:iii} is exactly the stronger property $|N_H(C)|\leq |S|-k$ that we have ensured in Case (iii). Hence, it remains to prove assertion~\eqref{as:ii}. Let $X\subseteq V(H_x)$ be the separator found by the algorithm when investigating the subcall in node $x$. Suppose that child $y$ corresponds to some subcall $(H_y,S_y)=(H_C,S_C)=(N_{H_x}[V(C)],N_{H_x}(V(C)))$, for some component $C\in \cc(H_x-(S_x\cup X))$. Let $D$ be the connected component of $H-X$ in which $C$ is contained. Since $X$ is $(1-\frac{1}{100k})$-balanced in $H_x$ w.r.t. measure $\meas_\all$, we have that 
$$h_y=|C|\leq |D\setminus S|=\meas_\all(D)\leq \left(1-\frac{1}{100k}\right)\cdot\meas_\all(V(H_x))=\left(1-\frac{1}{100k}\right)\cdot h_x.$$\cqed\end{proof}

Using Claim~\ref{cl:recursion-bound}, we can show an upper bound on the depth of $\Rt$. This in particular proves that the algorithm always stops (equivalently, $\Rt$ is finite).

\begin{claim}\label{cl:depth}
The depth of $\Rt$ is bounded by $\Oh(k^2\cdot \log n)$.
\end{claim}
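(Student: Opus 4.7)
The plan is to exhibit a potential function of the form $\phi(x) = \alpha\, s_x + \beta\, \log h_x$ for appropriate constants $\alpha, \beta$ depending on $k$, and show that $\phi$ strictly drops by at least $1$ along every edge of $\Rt$. Since $\phi\geq 0$ whenever $h_x\geq 1$ (which holds at every non-leaf) and the initial value $\phi_\mathrm{root}$ is $\Oh(k^2 \log n)$, this will bound the depth by $\phi_\mathrm{root}$.

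The two cases of Claim~\ref{cl:recursion-bound} give complementary progress. For $x\in A_\Ciii$ and $y\in\chld(x)$, we have $s_y\leq s_x-k$, and moreover $h_y$ weakly decreases since $V(H_y)\setminus S_y\subseteq V(H_x)\setminus (S_x\cup Z)\subseteq V(H_x)\setminus S_x$; hence $\phi(y)-\phi(x)\leq -\alpha k$. For $x\in A_\Cii$ and $y\in\chld(x)$, the separator $X$ returned by Lemma~\ref{lem:sep-apx} satisfies $|X|\leq \eta$, so $s_y\leq |S_x\cup X|\leq s_x+\eta$, while Claim~\ref{cl:recursion-bound}\eqref{as:ii} yields $\log h_y-\log h_x\leq \log(1-\tfrac{1}{100k})\leq -\tfrac{1}{100k}$; thus $\phi(y)-\phi(x)\leq \alpha\eta - \tfrac{\beta}{100k}$. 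Choosing $\alpha = 1/k$ makes the Case (iii) drop at least $1$, and then choosing $\beta = 100k(\alpha\eta + 1) = \Oh(k^2)$ makes the Case (ii) drop at least $1$ as well. Case (i) nodes are leaves and contribute nothing to the depth.

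It remains to bound the initial potential: since $s_\mathrm{root}\leq 17\eta$ and $h_\mathrm{root}\leq n$, we get $\phi_\mathrm{root}\leq \tfrac{17\eta}{k} + \Oh(k^2)\log n = \Oh(k) + \Oh(k^2 \log n) = \Oh(k^2 \log n)$, as required. The main subtlety I expect is that in Case (ii) the $s$ coordinate does not monotonically decrease---it may grow by up to $\eta$---so one really needs the multiplicative shrinking of $h$ to dominate, which forces the logarithmic term and the choice $\beta = \Omega(k^2)$; once the constants are fixed the telescoping argument is immediate and in particular establishes that $\Rt$ is finite, so the algorithm always terminates.
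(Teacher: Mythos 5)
Your proof is correct, and it takes a genuinely different route from the paper's. The paper argues directly on any root-to-leaf path: at most one node is in $A_\Ci$; at most $\Oh(k\log n)$ nodes can lie in $A_\Cii$ (since $h$ shrinks by a factor $1-\tfrac{1}{100k}$ at each such node and is sandwiched between $1$ and $n$); and at most $100k$ consecutive nodes can lie in $A_\Ciii$ (since $s$ drops by $k$ per step, and after $100k$ steps it falls to $\leq 16\eta$, forcing Case~(ii)). Multiplying the $\Oh(k\log n)$ Case~(ii) nodes by the $\Oh(k)$ length of intervening Case~(iii) segments gives $\Oh(k^2\log n)$. You instead compress the two monotone quantities $s_x$ (arithmetic drop in Case~(iii), possibly growing by $\eta$ in Case~(ii)) and $\log h_x$ (geometric drop in Case~(ii), non-increasing in Case~(iii)) into a single potential $\phi(x)=\alpha s_x+\beta\log h_x$, choosing $\alpha=1/k$ and $\beta=\Theta(k^2)$ so that the drop is at least $1$ on every edge; the bound then follows by telescoping from $\phi_{\mathrm{root}}=\Oh(k^2\log n)$ with $\phi\ge 0$. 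Your approach is somewhat cleaner in that it makes the interaction explicit: you note (and the paper only implicitly uses, when chaining Case~(ii) nodes across intervening Case~(iii) segments) that $h$ is non-increasing in Case~(iii) as well, since $V(H_C)\setminus S_C=V(C)\subseteq V(H_x)\setminus S_x$. The paper's segment-counting argument is perhaps more transparent about where the two $k$-factors come from, and would generalize trivially if one wanted a refined bound like ``$\Oh(k\log n)$ Case-(ii) nodes''; the potential-function version is more mechanical and would adapt more readily if the per-step guarantees were stated in a different form. Both are correct and establish finiteness of $\Rt$ along the way.
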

\begin{proof}
Take any finite path $P$ in $\Rt$ that starts in its root and travels down the tree. Since each node of $A_{\Ci}$ is a leaf in $\Rt$, at most one node of $P$ can belong to $A_{\Ci}$. We now estimate how many nodes of $A_{\Cii}$ and $A_{\Ciii}$ can appear on $P$.

First, we claim that there are at most $\Oh(k\cdot \log n)$ nodes of $A_\Cii$ on $P$. Let $x_0,x_1,x_2,\ldots,x_p$ be consecutive vertices of $A_\Cii$ on $P$, ordered from the root. By Claim~\ref{cl:recursion-bound}\eqref{as:ii} we have that $h_{x_{i+1}}\leq (1-\frac{1}{100k})\cdot h_{x_i}$ for each nonnegative integer $i$. Since $h_{x_0}\leq n$, by a trivial induction it follows that $h_{x_i}\leq (1-\frac{1}{100k})^i\cdot n$. However, $h_x\geq 1$ for each $x\in V(\Rt)$, so $p\leq \log_{1-\frac{1}{100k}} 1/n\leq \Oh(k\cdot\log n)$. Consequently, $|V(P)\cap A_\Cii|=p+1\leq \Oh(k\cdot \log n)$.

Second, we claim that on $P$ there cannot be more than $100k$ consecutive vertices from $A_\Ciii$. Assume otherwise, that there exists a sequence $x_0,x_1,\ldots,x_{100k}$ where $x_i\in V(P)\cap \Ciii$ for each $0\leq i\leq 100k$ and $x_{i+1}\in\chld(x_i)$ for each $0\leq i<100k$. By Claim~\ref{cl:recursion-bound}\eqref{as:iii} we have that $s_{x_{i+1}}\leq s_{x_i}-k$, for each $0\leq i<100k$. By a trivial induction we obtain that $s_{x_i}\leq s_{x_0}-ik$. However, $s_{x_0}\leq 17\eta$, so $s_{x_{100k}}\leq 17\eta-100k\cdot k=16\eta$. This is a contradiction with $x_{100k}\in A_\Ciii$, because Case~(ii) should apply to instance $(H_{x_{100k}},S_{x_{100k}})$ instead.

Combining the observations of the previous paragraphs, we see that $P$ has at most $\Oh(k\cdot\log n)$ vertices of $A_\Cii$, the segments of consecutive vertices of $A_\Ciii$ have length at most $100k$, and only the last vertex can belong to $A_\Ci$. It follows that $|V(P)|\leq \Oh(k^2\cdot \log n)$. Since $P$ was chosen arbitrarily, we infer that the depth of $\Rt$ is bounded by $\Oh(k^2\cdot \log n)$.
\cqed\end{proof}

By Claim~\ref{cl:depth}, we already know that the algorithm stops and outputs some tree decomposition~$\Tt$. As we noted before, $\Tt$ and $\Rt$ are isomorphic by a mapping that associates a bag of $\Tt$ with the subcall in which it was constructed. By somewhat abusing the notation, we will identify $\Tt$ with $\Rt$ in the sequel. It remains to argue that the running time of the algorithm is bounded by $\Oh(k^7\cdot n\log n)$.

We partition the total work used by the algorithm between the nodes of $\Tt$. For $x\in V(\Tt)$, we assign to $x$ the operations performed by the algorithm when constructing the bag $B_x$ associated with~$x$: running the algorithm of Lemma~\ref{lem:sep-apx} to compute an appropriate balanced separator, construction of the subcalls, and gluing the subdecompositions obtained from the subcalls below the constructed bag $B_x$. From the description of the algorithm and Lemma~\ref{lem:sep-apx} it readily follows that the work associated with node $x\in V(\Tt)$ is bounded by $\Oh(n_x)$ whenever $x$ belongs $A_\Ci$, and by $\Oh(k^5\cdot n_x)$ whenever $x$ belongs to $A_\Cii$ or $A_\Ciii$. Note that here we use the assumption that $|E(H_x)|\leq k\cdot n_x$ for each node $x\in V(\Tt)$; we could assume this because otherwise the application of the algorithm to instance $(H_x,S_x)$ would immediately reveal that $\tw(H_x)\geq k$.

Take any node $x\in A_\Ci$, and observe that, since $S_x\neq V(H_x)$, we have that $B_x=V(H_x)$ contains some vertex of $H$ which does not belong to any other bag of $\Tt$: any vertex of $V(H_x)\setminus S_x$ has this property. Therefore, the total number of nodes in $A_\Ci$ is at most $n$. Since $n_x\leq \Oh(k^2)$ for each such~$x$, we infer that the total work associated with bags of $A_\Ci$ is bounded by $\Oh(k^2\cdot n)$.

We now bound the work associated with the nodes of $A_\Cii\cup A_\Ciii$. Let $d$ be the depth of $\Tt$; by Claim~\ref{cl:depth} we know that $d\leq \Oh(k^2\cdot \log n)$. For $0\leq i\leq d$, let $L_i$ be the set of nodes of $A_\Cii\cup A_\Ciii$ that are at depth exactly $i$ in $\Tt$. Fix some $i$ with $0\leq i\leq d$. Observe that among $x\in L_i$, the sets $V(H_x)\setminus S_x$ are pairwise disjoint. Hence, $\sum_{x\in L_i} h_x\leq n$. However, for every node $x\in A_\Cii\cup A_\Ciii$ we have that $h_x>\eta$ (or otherwise case~(i) would apply in the subcall corresponding to $x$) and $s_x\leq 17\eta$. This implies that $n_x=h_x+s_x\leq 18h_x$, and hence $\sum_{x\in L_i} n_x\leq 18n$. Consequently, the total work associated with the nodes of $L_i$ is bounded by $\sum_{x\in L_i} \Oh(k^5\cdot n_x)\leq \Oh(k^5\cdot n)$. Since $d=\Oh(k^2\log n)$, we conclude that the total work associated with bags of $A_\Cii\cup A_\Ciii$ is $\Oh(k^7\cdot n\log n)$.

Concluding, we have shown that the algorithm for the generalized problem is correct, always terminates, and in total uses time $\Oh(k^7\cdot n\log n)$. Theorem~\ref{thm:apx} follows by applying it to $H=G$ and $S=\emptyset$.

\section{Gaussian elimination}\label{sec:gaussian}
\def\Cols{\ensuremath{\mathcal{I}}\xspace}
\def\Rows{\ensuremath{\mathcal{J}}\xspace}

First, in Section~\ref{sec:elim} we describe our Gaussian elimination algorithm guided by an ordering of rows and columns of the matrix. Next, in Section~\ref{sec:pw-elim} we show how suitable orderings of low width can be recovered from low-width path and tree-partition decompositions of the bipartite graph associated with the matrix. Finally, in Section~\ref{sec:tw-elim} we present how the approach can be lifted to tree decompositions using an adaptation of the vertex-splitting/matrix sparsification technique.

\subsection{Gaussian elimination with strong orderings}\label{sec:elim}

\paragraph*{Algebraic description.} We first describe our algorithm purely algebraically, showing what arithmetic operations will be performed and in which order. We will address implementation details later.

We assume that the algorithm is given an $n\times m$ matrix $M$ over some field $\F$, and moreover there is an ordering $\mleq$ imposed on the rows and columns of $M$. We will never compare rows with columns using $\mleq$, so actually we are only interested in the orders imposed by $\mleq$ on the rows and on the columns of $M$. The following Algorithm~\ref{algo:gaussian} presents our procedure.

\noindent\begin{algorithm}[h!]
  \KwIn{An $n\times m$ matrix $M$ with an order $\mleq$ on rows and columns of $M$} 
  \KwOut{Matrices $U$, $L$, and removal orders of rows and columns of $M$}  \Indp \BlankLine
  Set \Cols, \Rows to be the sets of all columns and all rows of $M$, respectively.\\
  Set $L$ to be the $n\times n$ identity matrix.\\
  \While{\Cols is not empty}{
        Let $i$ be the earliest column in \Cols (in the ordering $\mleq$).\\
        Let $j$ be the earliest row in \Rows such that $M[j,i]\neq 0$  (in the ordering $\mleq$);\\
		if there is no such row, remove $i$ from \Cols and choose $i$ again.\\
	\For{every row $k\neq j$ in \Rows such that $M[k,i]\neq 0$}{
		/* \textsf{\footnotesize{ Eliminate entry $M[k,i]$ using row $j$. } }*/\\
		Set $L[k,j] := \frac{M[k,i]}{M[j,i]}$.\\
		\For{every column $\ell$ such that $M[j,\ell]\neq 0$}{
			Set $M[k,\ell] := M[k,\ell] - L[k,j] \cdot  M[j,\ell]$. \hspace{3.5cm}/* \textsf{\footnotesize{ Entry Update } }*/
		}
		Remove $j$ from \Rows and $i$ from \Cols. \hspace{2.2cm}/* \textsf{\footnotesize{After this, $M[k,i]=0$ for all $k \in \Rows$. } }*/
	}
  }
  Remove all remaining rows from \Rows. \hspace{3.2cm}/*\textsf{\footnotesize{ Note these must be empty rows. } }*/\\
  Set $U:=M$.\\
  \textbf{Return} $U$, $L$, and orders in which the columns/rows were removed from $\Cols$/$\Rows$, respectively.
\vskip 0.5cm
\caption{Gaussian elimination of an $n\times m$ matrix $M$ with order $\mleq$ (see Figure~\ref{fig:elimination}).}
  \label{algo:gaussian}
\end{algorithm}

\begin{figure}[H]
	\centering
	\vspace{-\baselineskip}
	\begin{tikzpicture}[scale=0.22]
	\node (i) at (8.1,0.4) {$i$};
	\draw (7.5,-0.5) rectangle (8.5,-19.5);
	\node (l) at (11.1,0.4) {$\ell$};
	\draw[draw=none,fill=blue!60!gray!10!white] (10.5,-0.5) rectangle (11.5,-19.5);
	\node at (9.6,0.4) {$\mleq$};

	\draw (0.5,-0.5) rectangle (24.5,-19.5);

	\node (j) at (-0.9,-9) {$j$};
	\draw (0.5,-8.5) rectangle (24.5,-9.5);
	\node (k) at (-0.9,-16) {$k$};
	\draw (0.5,-15.5) rectangle (24.5,-16.5);
	\draw[->] (j) to[bend right] (k);
	\node[rotate=-90] at (-0.9,-12.5) {$\mleq$};
	\draw[gray,->] ( 8,-9) to[bend right] ( 7.4,-15.4);
	\draw[gray,->] (11,-9) to[bend right] (10.8,-15.4);

	\draw[fill] (8,- 9) circle (0.3); 
	\draw[fill] (8,-13) circle (0.3); 
	\draw[fill] (8,-14) circle (0.3); 
	\draw[fill] (8,-16) circle (0.3); 
	\draw[fill] (8,-18) circle (0.3); 

	\draw[fill] (11,-9) circle (0.3); 
	\draw[fill] (13,-9) circle (0.3); 
	\draw[fill] (17,-9) circle (0.3); 
	\draw[fill] (18,-9) circle (0.3); 
	\draw[fill] (19,-9) circle (0.3); 
	\draw[fill] (23,-9) circle (0.3); 

	\draw[fill] (1,-1) circle (0.3); 
	\draw[fill] (2,-2) circle (0.3); 
	\draw[fill] (4,-3) circle (0.3); 
	\draw[fill] (3,-5) circle (0.3); 
	\draw[fill] (5,-4) circle (0.3); 
	\draw[fill] (6,-15) circle (0.3); 
	\draw[fill] (7,-7) circle (0.3); 
	\draw[fill] (10,-8) circle (0.3); 
	\draw[fill] (14,-8) circle (0.3); 
	\draw[fill] (10,-8) circle (0.3); 
	\draw[fill] (10,-10) circle (0.3); 
	\draw[fill] (18,-10) circle (0.3); 
	\draw[fill] (19,-10) circle (0.3); 
	\draw[fill] (22,-10) circle (0.3); 
	\draw[fill] ( 9,-11) circle (0.3); 
	\draw[fill] (24,-11) circle (0.3); 
	\draw[fill] (15,-11) circle (0.3); 
	\draw[fill] (14,-12) circle (0.3); 
	\draw[fill] (11,-13) circle (0.3); 
	\draw[fill] (14,-13) circle (0.3); 
	\draw[fill] (17,-13) circle (0.3); 
	\draw[fill] (18,-13) circle (0.3); 
	\draw[fill] (19,-13) circle (0.3); 
	\draw[fill] (23,-13) circle (0.3); 
	\draw[fill] (24,-13) circle (0.3); 
	\draw[fill] (11,-14) circle (0.3); 
	\draw[fill] (13,-14) circle (0.3); 
	\draw[fill] (17,-14) circle (0.3); 
	\draw[fill] (19,-14) circle (0.3); 
	\draw[fill] (23,-14) circle (0.3); 

	\draw[    ] (11,-16) circle (0.3); 
	\draw[fill] (13,-16) circle (0.3); 
	\draw[    ] (17,-16) circle (0.3); 
	\draw[    ] (18,-16) circle (0.3); 
	\draw[fill] (19,-16) circle (0.3); 
	\draw[fill] (23,-16) circle (0.3); 
	\draw[fill] (12,-16) circle (0.3); 
	\draw[fill] (22,-16) circle (0.3); 
	\draw[fill] (24,-17) circle (0.3); 
	\draw[fill] (21,-18) circle (0.3); 
	\draw[fill] (13,-18) circle (0.3); 
	\draw[fill] (9,-19) circle (0.3); 
	\draw[fill] (15,-19) circle (0.3); 
\end{tikzpicture}
	\vspace{-\baselineskip}
	\caption{The positions of non-zero entries (black circles) after a few steps of the algorithm. In the current step the entry in column $i$, row $k$ will be eliminated using the earlier row $j$ -- entries that will become non-zero are marked as white circles. }
	\label{fig:elimination}
\end{figure}
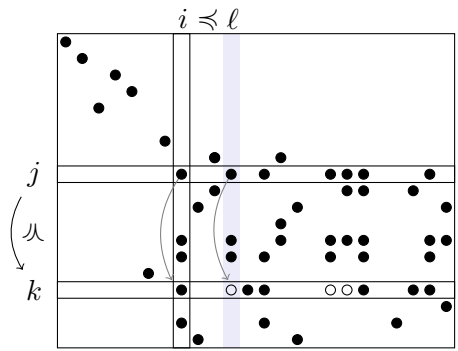

In Algorithm~\ref{algo:gaussian}, we consider consecutive columns of $M$ (in the order of $\mleq$), and for each column $i$ we find the first row $j$ (in the order of $\mleq$) that has a non-zero element in column $i$. Row $j$ is then used to eliminate all the other non-zero entries in column $i$. Hence, whenever some column $i$ is removed, it has a zero intersection with all the rows apart from the said row $j$. Inductively, it follows that in the Entry Update step, we have that $j \in \Rows$ with $M[j,\ell]\neq 0$ imply that $\ell \in \Cols$ and columns are never changed after removal from $\Cols$. Clearly, rows are never changed after removal.

The algorithm also returns the {\em{removal orders}} of columns and rows of $M$, that is, the orders in which they were removed from $\Cols$ and $\Rows$, respectively. For $\Cols$ this order coincides with the order $\mleq$, but for $\Rows$ it depends on the positioning of non-zero entries in the matrix, and can differ from $\mleq$.

We now verify that the output of the algorithm forms a suitable factorization of $M$ and can be used for solving linear equations. In the following, we assume that $L$ and $U$ are given as sequences of non-zero entries in consecutive rows (in any order), so their description takes $\Oh(N)$ indices and field values, where $N$ is the number of non-zero entries in $L$ and $U$.

\begin{lemma}\label{lem:usingGaussian}
	Let $M$ be an $n \times m$ matrix over a field $\F$ and $\mleq$ be any ordering of its rows and columns. 
	Within the same time bounds as needed for executing Algorithm~\ref{algo:gaussian}, we can compute
	a $PLUQ$-factorization of $M$,
	and hence the rank, determinant (if $n=m$), and a maximal nonsingular submatrix of $M$.
	Furthermore, for any $r\in \F^n$, the system of linear equations $Mx = r$ can be solved in $\Oh(N)$ additional time and field operations (either outputting an example solution $x\in \F^m$ or concluding that none exists), where $N$ is the number of non-zero entries in $L$ and $U$.
\end{lemma}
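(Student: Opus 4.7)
I would interpret Algorithm~\ref{algo:gaussian} as computing a $PLUQ$-factorization of $M$, from which rank, determinant, and a maximal non-singular submatrix are read off directly and against which linear systems are solved by forward and back substitution.

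First I set up the permutations from the removal orders returned by the algorithm. Let $c_1, c_2, \ldots, c_m$ be the columns in the order they are removed from $\Cols$ (which coincides with the $\mleq$-order on columns), and let $\rho_1, \rho_2, \ldots, \rho_n$ be the rows in the order they are removed from $\Rows$: first the pivot rows in the order of their pivot steps, then the remaining (zero) rows. Let $\ell$ be the number of pivot steps performed, and let $P$, $Q$ be the permutation matrices realizing these reorderings of rows and columns. Let $U$ and $L$ be the outputs of the algorithm. The core claim is that $M = PLUQ$ and that, in the reordered coordinates, $U$ is in row-echelon form and $L$ is lower triangular with unit diagonal. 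The shape of $L$ is immediate from the algorithm: entries $L[k,j]$ are set only at the step in which $j$ becomes a pivot, and only for rows $k \neq j$ still in $\Rows$, which appear strictly after $j$ in the row-removal order; combined with the identity initialization, this makes $L$ unit lower triangular after reordering. Each inner-loop pass then amounts to left-multiplication of the working matrix by an elementary matrix $I - L[k,j]\, e_k e_j^T$, and the usual $LU$-reasoning applied in the reordered coordinates gives the identity $M = PLUQ$. For the row-echelon shape of $U$, I would argue that at the $j$-th pivot step the row $\rho_j$ was present in $\Rows$ throughout all earlier steps, hence any entry of $\rho_j$ in an earlier pivot column was already eliminated; and any column that produced no pivot had all remaining rows (including the later pivot rows) zero at its processing time. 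Consequently, in the reordered matrix the $j$-th pivot row has zeros strictly to the left of its pivot, and pivot positions move strictly rightward as $j$ grows.

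Given the $PLUQ$ decomposition, the rank equals $\ell$ (the number of non-zero rows of $U$); the pivot rows $\{\rho_1, \ldots, \rho_\ell\}$ together with the pivot columns index a maximal non-singular $\ell \times \ell$ submatrix of $M$, since this submatrix is triangular with non-zero diagonal in the reordered coordinates; and when $n = m$, $\det M = \mathrm{sgn}(P)\,\mathrm{sgn}(Q)\,\prod_{i=1}^n U[\rho_i, c_i]$, which vanishes exactly when $\ell < n$. All of these extractions fit within the running time of the algorithm itself. To solve $Mx = r$ for a given right-hand side $r \in \F^n$, I would rewrite the system as $LU(Qx) = P^{-1} r$, solve $Lw = P^{-1}r$ by forward substitution in the order $\rho_1, \ldots, \rho_n$ (iterating only over the non-zero entries of each row of $L$, for cost $\Oh(n + N_L)$ where $N_L$ is the number of non-zero entries of $L$), verify feasibility by checking that $w_{\rho_i} = 0$ for all $i > \ell$, and then solve $Uy = w$ by back substitution in reverse order of the pivot columns, setting the non-pivot free variables to zero (for cost $\Oh(m + N_U)$). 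Finally $x = Q^{-1}y$, and the total additional cost is $\Oh(N)$, absorbing $n+m$ into $N$ via the unit diagonal of $L$ and the pivot entries of $U$ (or, more carefully, reporting $\Oh(n+m+N)$, which is still $\Oh(N)$ whenever the output $x$ must itself be written down).

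The main obstacle is the structural verification that, after reordering rows by the removal order, the output matrix $U$ really is in row-echelon form. The subtlety is that the $\mleq$-order on rows need not match the row-removal order: a $\mleq$-early row can be removed late if it happens to have zero entries in all early-processed pivot columns. One must therefore carefully track the contents of $\Rows$ at each step in order to conclude that the pivot rows, arranged in the order of their pivot steps, yield the strictly-increasing-leading-position pattern required for row-echelon form.
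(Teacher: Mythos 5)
Your proposal is correct and follows essentially the same route as the paper's proof: establish $M=LU$ from the elimination row-operations, show that permuting rows and columns into removal order puts $L$ (resp.\ $U$) into unit-lower-triangular (resp.\ row-echelon) form, and then read off rank, determinant, and a maximal nonsingular submatrix and solve $Mx=r$ by forward/back substitution with free variables set to zero. The only slip is notational --- the raw outputs already satisfy $M=LU$ (no permutations needed), and the permutations only serve to exhibit the echelon shapes, so writing ``$M=PLUQ$ where $L,U$ are the outputs'' mixes the two conventions --- but this does not affect the substance of the argument.
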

\begin{proof}
	Let $U,L$ be the matrices output by the algorithm.
	The sequence of row operations performed in the algorithm implies that each row $U[k,\cdot]$ of the final matrix is obtained from the row $M[k,\cdot]$ of the original matrix by adding row $U[j,\cdot]$ with multiplier $-L[k,j]$, for each $j$ removed earlier (note that rows $j$ removed later are never added and have $L[k,j]=0$).
	That is, for each $k$ we have
	$U[k,\cdot] = M[k,\cdot] - \sum_{j\neq k} L[k,j] \cdot U[j,\cdot]$, from which it immediately follows that $M = LU$ (recall that $L$ was initialized as an identity matrix).
	
	Let $P,Q$ be $n\times n$ and $m\times m$ permutation matrices, respectively, such that $U'=PUQ$ is the matrix $U$ with rows and columns reordered in the removal order output by the algorithm.
	Then $L' = PLP^{-1}$ is the matrix $L$ reordered so that its rows and its columns both correspond to the rows of $M$ in removal order.
	We claim that $U',L'^{T}$ are in row-echelon form, and hence $M=P^{-1} L' U' Q^{-1}$ gives the desired factorization.
	
	For matrix $L'$, this follows directly from the fact that $L$ has ones on the diagonal, which are preserved by the permutation $P$ of both rows and columns, and $L'$ is lower-triangular, because $L[k,j]$ can be non-zero only if row $k$ is removed after row $j$ of $M$ in the algorithm. Hence $L'^T$ is in row-echelon form.
	
	For matrix $U'$, observe
	that, when a row $j$ is removed in the algorithm, it is removed together with a column $i$ that has a non-zero intersection with $j$, and all rows below, that is, all rows that remain to be removed and are currently in $\Rows$, have a zero intersection with $i$.
	Hence also any column $i'$ to the left of $i$ has a zero intersection with row $j$, since  $j$ was removed strictly after removing $i'$.
	This means that $U'[j,i]$ is the first non-zero element of row $j$ and all elements below it are zero.
	Hence $U'$ is in row-echelon form.
	
	It follows that since $L'$ has only ones on the diagonal and since $U'$ is in row-echelon form, the determinant of $M$ (in case $n=m$) is equal to the product of diagonal values of $U'$ multiplied by the signs of the permutations $P$ and $Q$.
	Similarly, $P,Q,L$ are of full rank, so the rank of $M$ is the number of non-zero rows in $U'$.
	A maximum nonsingular submatrix of $U'$ is given by the non-empty rows and the columns containing the left-most non-zero element of each such row -- the same submatrix is maximal nonsingular for $L'U'$. As $L'$ is lower-triangular with ones on the diagonal,the corresponding positions in $M=P^{-1} L' U' Q^{-1}$ give a maximal nonsingular submatrix of $M$.
	Clearly all the above can be computed within the same time bounds.
	
	Given $r$, to solve $Mx=P^{-1} L' U' Q'^{-1} x= r$ it is enough to solve $Q^{-1} y = r$ trivially to get $y$, then $U'z =y$ to get $z$ and similarly with $L'$ and $P^{-1}$ to get $x$.
	The system $U' z = y$ (for $U'$ in row-echelon form, analogously for $L'$) can be solved easily solved by back-substitution.
	That is, start from $z_i=0$ for every column $i$ of $U'$.
	If for any empty row $j$ of $U'$, $y_j$ is non-zero, output that the system has no solutions.
	For every non-empty row $j$ of $U'$ from the lowest to the highest (so in order opposite to the removal order, with shortest rows first), let $U'[j,i]$ be the first non-zero entry of the row $j$ and set $z_i := \frac{y_j - \sum_{i\mleq \ell} U'[j,\ell] \cdot z_\ell}{U'[j,i]}$.
	A standard check shows that $U' z = y$.
	Exactly one multiplication or division and one addition or subtraction is done for every non-zero element of $U'$ and $L'$.
\end{proof}

\paragraph*{Using strong orderings of small width.} We now introduce width parameters for the ordering $\mleq$, and show how to bound the number of fill-in entries using these parameters.

\begin{definition}
	Let $H$ be a graph. An $H$-ordering is an ordering of $V(H)$. An $H$-ordering $\mleq$ is \emph{strong} if for every $i,j,k\in V(H)$ with $ij,ik\in E(H)$ and $j\mleq k$, any neighbor of $j$ that comes after $i$ in the ordering is also a neighbor of $k$; in other words, $\{\ell \in N_H(j) \mid i \mleq\ell\} \subseteq \{\ell \in N_H(k) \mid i \mleq\ell\} $. 
	The \emph{width} of an $H$-ordering $\mleq$ is defined as
	$\max\limits_{ij\in E(H)} \min\left( |\{\ell \in N_H(j) \mid i \mleq\ell\}|,
	|\{\ell \in N_H(i) \mid j \mleq\ell\}| \right)$.
	The \emph{degeneracy} of an $H$-ordering $\mleq$ is defined as
	$\max\limits_{i\in V(H)} |\{\ell \in N_H(i) \mid i \mleq\ell\}|$.
\end{definition}

Note that for any $H$-ordering $\mleq$, the width of $\mleq$ is upper-bounded by its degeneracy $d$, as for each $ij\in E(H)$, without loss of generality $i \mleq j$, we have that
$$\min\left(|\{\ell \in N_H(j) \mid i \mleq\ell\}|, |\{\ell \in N_H(i) \mid j \mleq\ell\}|\right) \leq |\{\ell \in N_H(i) \mid j \mleq\ell\}| \leq |\{\ell \in N_H(i) \mid i \mleq\ell\}| \leq d.$$
If $H$ is bipartite, then it can be observed that the definitions of a strong ordering and width do not depend on the relative ordering in $\mleq$ of vertices on different sides of the bipartition.

\begin{lemma}\label{lem:invariant}
Let $M$ be a matrix and let $\mleq$ be a strong $H$-ordering of width $b$, for some completion $H$ of $G_M$. Consider how the matrix $M$ is modified throughout Algorithm~\ref{algo:gaussian} applied on $M$ with order $\mleq$. Then the following invariant is maintained: throughout the algorithm, $H$ is always a completion of $G_M$. Furthermore, the output matrix $L$ has at most $|E(H)|+|V(H)|$ non-zero entries.
\end{lemma}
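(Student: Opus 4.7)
The plan is to first establish a confinement subclaim restricting where non-zero entries of $M$ may lie, then combine this with the strong ordering of $\mleq$ to rule out fill-in edges outside $E(H)$, and finally bound the non-zero entries of $L$ by charging each to an edge of $H$.

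First, I will prove by induction on the steps of Algorithm~\ref{algo:gaussian} the following subclaim: at every point during the execution, for every row $a\in\Rows$ and every column $b$ with $M[a,b]\neq 0$, we have $b\in\Cols$. The base case is trivial since $\Cols$ initially contains every column. For the inductive step, consider an Entry Update that modifies $M[k,\ell]$; we have $M[j,\ell]\neq 0$ with $j\in\Rows$, and the induction hypothesis forces $\ell\in\Cols$, a set unaffected by the update itself, so if $M[k,\ell]$ becomes non-zero the subclaim is preserved for this new entry. When the pair $(j,i)$ is removed from $(\Rows,\Cols)$ at the end of processing column $i$, the inner loop has just ensured $M[a,i]=0$ for every $a\in\Rows\setminus\{j\}$ (either originally or by elimination), so removing $i$ from $\Cols$ creates no violation.

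Second, I use the subclaim to verify the main invariant that $H$ remains a completion of $G_M$. Suppose an Entry Update turns a zero entry $M[k,\ell]$ into a non-zero one; we need $r_k c_\ell\in E(H)$. Because $M[j,\ell]\neq 0$ and $j\in\Rows$, the subclaim gives $\ell\in\Cols$ just before the update, and combined with the fact that $i$ is the earliest column in $\Cols$ we obtain $c_i\mleq c_\ell$. The choice of $j$ as the earliest row in $\Rows$ with $M[j,i]\neq 0$ gives $r_j\mleq r_k$. By the induction hypothesis on the main invariant, the edges $r_jc_i$, $r_kc_i$, $r_jc_\ell$ all lie in $E(G_M)\subseteq E(H)$. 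Applying the strong ordering property of $\mleq$ at vertex $c_i$ with the two neighbors $r_j\mleq r_k$: since $c_\ell$ is a neighbor of $r_j$ with $c_i\mleq c_\ell$, we conclude $c_\ell$ is also a neighbor of $r_k$, i.e., $r_kc_\ell\in E(H)$, as required.

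Finally, the bound on $L$ follows by bookkeeping. $L$ starts as the identity, contributing at most $n\leq|V(H)|$ non-zero entries on the diagonal. Each off-diagonal entry $L[k,j]$ is set exactly once, precisely when $M[k,i]\neq 0$ is eliminated at some pivot column $i$ with pivot row $j$. By the main invariant (which we just established), $r_kc_i\in E(H)$ at that moment. Because the pivot column $i$ is removed from $\Cols$ right after being processed, distinct eliminations correspond to distinct pairs $(k,i)$, and therefore to distinct edges of $H$. Hence at most $|E(H)|$ off-diagonal entries of $L$ are non-zero, giving the total bound $|V(H)|+|E(H)|$.

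The main obstacle is the confinement subclaim: a direct attempt to apply the strong ordering to prove the invariant would fail in the scenario $c_\ell\prec c_i$, where $c_\ell$ has already been removed from $\Cols$ and the required inclusion $\{v:c_i\mleq v\}$ in the strong ordering condition does not cover $c_\ell$. The subclaim rules out exactly this scenario by showing that non-zero entries in rows still in $\Rows$ can only lie in columns still in $\Cols$, which is the key structural observation making the argument go through.
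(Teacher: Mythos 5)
Your proof is correct and follows essentially the same approach as the paper: induct on the Entry Update steps, use the choice of pivot column $i$ (earliest in $\Cols$) and pivot row $j$ (earliest with a non-zero intersection) to get $i\mleq\ell$ and $j\mleq k$, apply strongness at $c_i$, and then bound $L$'s non-zeros by charging off-diagonal fills to edges of $H$ via the pivot column. The one difference is cosmetic: you spell out the confinement subclaim ($a\in\Rows$ and $M[a,b]\neq 0$ imply $b\in\Cols$) as an explicit induction, whereas the paper records this fact only as an inline remark immediately after the algorithm listing and then invokes ``$i\mleq\ell$ (by choice of $i$)'' in the lemma's proof without repeating the justification; your version is a bit more self-contained but is not a different argument.
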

\begin{proof}
Zero entries of $M$ can become non-zero only in the Entry Update step of Algorithm~\ref{algo:gaussian}.
Specifically, an entry $M[k,\ell]$ can become non-zero  when $M[j,i],M[k,i],M[j,\ell]$ were already non-zero, $i\mleq\ell$ (by choice of $i$), and $j\mleq k$ (by choice of $j$), for certain rows and columns $i,j,k,\ell$.
Before this step, by inductive assumption the invariant held and thus $ij,ik\in E(H)$ and $\ell$ is a neighbor of $j$ in $H$.
Therefore, the strongness of the $H$-ordering $\mleq$ implies that $\ell$ is also a neighbor of $k$ in $H$, that is $k\ell \in E(H)$.
Thus, the invariant is maintained after each step.

To bound the number of non-zero entries of $L$, observe that every non-diagonal entry $L[k,j]$ for rows $k,j$ of $M$ is filled only when there is an edge $i_j k\in E(H)$, where $i_j$ is the column together with which $j$ was removed.
Hence the number of non-zero entries in each column $j$ of $L$ is bounded by $1$ plus the number of rows adjacent to $i_j$ in $H$ (or just 1 if $i_j$ does not exist for $j$).
Since the column $i_j$ is different for different $j$, the total number of non-zero entries in $L$ is bounded by $|E(H)|+|V(H)|$.
\end{proof}

\paragraph*{Implementation.} We now show how to implement Algorithm~\ref{algo:gaussian} on a RAM machine so that the total number of field operations (arithmetic, comparison and assignment on matrix values) and time used for looping through rows and columns is small.
In the following, we consider the RAM model with $\Omega(\log (n))$-bit registers (on input length $n$) and a second type of registers for storing field values, with an oracle that performs field operations in constant time, given positions of registers containing the field values.
The input matrix $M$ is given in any form that allows to enumerate non-zero entries (as triples indicating a row $j$, column $i$ and the position of a register containing $M[j,i]$) in linear time.
The ordering $\mleq$ is given in any form that allows comparison in constant time and enumeration of columns in the order in linear time (otherwise we need $\Oh(|V(H)|\log |V(H)|)$ additional time to sort the columns).
The graph $H$ is given in any form that allows to enumerate edges in linear time. 

\begin{lemma}\label{lem:gaussian}
Let $M$ be a matrix and let $\mleq$ be a strong $H$-ordering of width $b$, for some completion $H$ of the bipartite graph of $M$.
Then Algorithm~\ref{algo:gaussian} on $M$ with pivoting order $\mleq$ makes $\Oh(|E(H)|\cdot b + |V(H)|)$ field operations.

Furthermore, if $H$ and a list of columns sorted with $\mleq$ is given, then all other operations can be done in time $\Oh(|E(H)|\cdot b+|V(H)|)$ on a RAM machine.
\end{lemma}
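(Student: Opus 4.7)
The plan is first to bound the total number of field operations by double-counting using Lemma~\ref{lem:invariant} and the width condition, and second to match this bound with a straightforward sparse-matrix data structure.

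For the field operations, fix an outer iteration processing column $i$ with pivot row $j_i$, and let $a_i = |\{k \in \Rows \setminus \{j_i\} : M[k,i] \neq 0\}|$ and $b_i = |\{\ell : M[j_i,\ell] \neq 0\}|$ at the moment of pivoting. The inner Entry Update loops then perform $a_i b_i$ arithmetic operations. Lemma~\ref{lem:invariant} guarantees that $H$ remains a completion of $G_M$, so the $a_i$ rows yield edges $(k,i)\in E(H)$ each with $j_i \prec k$ (since $j_i$ is the earliest non-zero row in column $i$), and the $b_i$ columns yield edges $(j_i,\ell)\in E(H)$ each with $i \mleq \ell$ (because any column $\ell \prec i$ has already been removed, and the removal note in Algorithm~\ref{algo:gaussian} ensured $M[j_i,\ell]=0$ at that moment, as $j_i$ was still in $\Rows$). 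The width condition applied to the edge $i j_i$ gives $\min(a_i,b_i)\leq b$, hence $a_i b_i \leq b(a_i+b_i)$. The pairs contributing to $\sum_i a_i$ are distinct edges of $H$ (each column is processed once) and the pairs in $\sum_i b_i$ are also distinct (each row is pivot at most once), so $\sum_i a_i,\sum_i b_i \leq |E(H)|$. Adding $O(|V(H)|)$ for the outer loop and pivotless-column handling yields the claimed $O(b\cdot |E(H)| + |V(H)|)$ field operations.

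For the RAM implementation, the plan is to use a 2D sparse-linked matrix supported by per-row hash tables. From $H$ and the given sorted column list, one builds in $O(|E(H)|+|V(H)|)$ time the $H$-adjacency lists, a hash table per row $k$ mapping its column-neighbors $\ell \in N_H(k)$ to cell pointers, and one cell per $H$-edge containing its value in $M$ (zero for non-edges of $G_M$). The dynamic state is an unsorted doubly-linked non-zero list per row and per column, with cross-pointers in each cell. Pivot selection in column $i$ is a linear scan for the $\mleq$-minimum row; whenever a row $j_i$ is retired from $\Rows$, all its cells are unlinked from their column lists in time proportional to the length of row $j_i$, so subsequent column-list scans encounter only active rows and the total scan work sums to $O(|E(H)|)$. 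Each Entry Update accesses cell $(k,\ell)$ in $O(1)$ expected time via the hash table, updates the value, and performs an $O(1)$ linked-list insertion (for fill-ins) or deletion (for exact cancellations) using the cell pointers; hence all bookkeeping amortizes to $O(1)$ per field operation.

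The principal obstacle is that hash-table access is expected $O(1)$ rather than worst-case, so a deterministic RAM bound requires either perfect hashing built during preprocessing or an alternative $\mleq$-indexed lookup (for instance, precomputing for every cell its positional rank in its row's $H$-adjacency list and using this rank as the array index into a preallocated value-slot array per row). Verifying this deterministic variant, together with the amortization claim that the retirement of rows/columns really reduces column-list scans to $O(|E(H)|)$ overall, is the only subtlety beyond the counting and data-structure arguments sketched above.
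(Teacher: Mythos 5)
Your analysis of the number of field operations is correct and takes essentially the same route as the paper: you extract the key inequality $a_i b_i\leq b(a_i+b_i)$ directly from the definition of width applied to the pivot edge $ij_i$, and you amortize $\sum_i a_i$ and $\sum_i b_i$ by edges of $H$. The paper obtains the same inequality slightly less directly, as a byproduct of its Claim~\ref{cl:shallow}, but the counting is the same. This half of your argument is fine.

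The RAM implementation, however, has a genuine gap, which you acknowledge only partially and do not resolve. Your plan for accessing a cell $M[k,\ell]$ during the Entry Update relies on a per-row hash table, which gives only \emph{expected} $\Oh(1)$ lookup time; the lemma needs a deterministic worst-case bound, and the paper's proof is deterministic. Your proposed deterministic replacement --- ``precomputing for every cell its positional rank in its row's $H$-adjacency list and using this rank as the array index'' --- is circular: to use the rank stored in the cell $(k,\ell)$ you must first locate that cell, which is exactly the lookup you are trying to implement. The naive fallback of scanning $N_H(k)$ for each $k\in R$ to pick out the entries $\ell\in C$ does not amortize either, since a row $k$ can appear in $R$ for many successive pivot columns while it remains in $\Rows$.

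The combinatorial device you are missing is the paper's $p$-array: for each vertex $v$, precompute the list $p[v]$ of the last $b$ neighbors of $v$ in $\mleq$-order (total preprocessing time $\Oh(b\cdot|E(H)|)$). The crucial observation, Claim~\ref{cl:shallow}, is that by the width bound every edge $k\ell\in E(H)$ appears in at least one of $p[k]$, $p[\ell]$. Consequently, to enumerate the cells $(k,\ell)$ with $k\in R$, $\ell\in C$ at a pivot $(j,i)$, one scans $p[k]$ for each $k\in R$ and $p[\ell]$ for each $\ell\in C$, costing $\Oh(b)$ per scan and $\Oh(b(|R|+|C|))$ per pivot; timestamp arrays (indexed by the current column $i$) test membership in $R$ and $C$ in $\Oh(1)$ and prevent duplicate updates. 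Since $i$ and $j$ are discarded after the pivot, $\sum (|R|+|C|)\leq\sum_v |N_H(v)| = \Oh(|E(H)|)$, giving $\Oh(b\cdot|E(H)|+|V(H)|)$ deterministically. Without a device of this kind, your second paragraph establishes only an expected-time bound, not the stated result.
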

\begin{proof}
By enumerating all edges of $H$, we can construct adjacency lists for the graph in $\Oh(|E(H)|)$ time.
For every vertex $i$ of $H$, we store the following:
\begin{itemize}
\item two auxiliary values $a[i],a'[i]$, set initially to $-1$;
\item a bit remembering whether $i$ was removed from $\Cols$ or $\Rows$,
\item and we additionally compute an array $p[i]$ containing its last $b$ neighbors in $\mleq$ order, not necessarily sorted.
\end{itemize}
Note that all the arrays $p[i]$ can be constructed in a total of $\Oh(\sum_{i\in V(H)} |N_H[i]| \cdot b) = \Oh(|E(H)| \cdot b)$ time, by finding the last $b$ elements of the adjacency list of $i$ in time $\Oh(|N_H[i]| \cdot b)$, for each $i\in V(H)$.

For every edge $ij$ of $H$, we store a record containing:
\begin{itemize}
\item the position of the registers containing values $a[i],a'[i]$, $a[j],a'[j]$ and $M[j,i]$, and 
\item an auxiliary value $a''[j,i]$, set initially to $-1$.
\end{itemize}
The record is pointed to by each occurrence of the edge in the lists and $p$-arrays of $i$ and $j$ (just as one would store an edge's weight).

\begin{claim}\label{cl:shallow}
Every edge $ij$ of $H$ occurs in at least one of the arrays $p[i]$ or $p[j]$. 
\end{claim}
\begin{proof} 
By the definition of the width of ordering $\mleq$, either $|\{\ell \in N_H(j) \mid i \mleq\ell\}| \leq b$ or 
$|\{\ell \in N_H(i) \mid j \mleq\ell\}| \leq b$.
In the first case, $i$ can be found among the last $b$ neighbors of $j$, symmetrically in the second case.
\cqed\end{proof}

Note that by Claim~\ref{cl:shallow}, for each given edge $ij$ we can access the record of $ij$ in time $\Oh(b)$, by iterating through $p[i]$ and $p[j]$. 

Let us consider the number of basic operations (field and constant-time RAM operations) executed throughout the algorithm.
The outer-most loop executes one iteration for each column $i$ in the $\mleq$ order.
Since a sorted list is given, their enumeration takes $\Oh(|V(H)|)$ time.
In every iteration of the outer-most loop, the rows with non-zero intersection with $i$ are neighbors of $i$ in the bipartite graph of $M$ and hence in $H$, by Lemma~\ref{lem:invariant}.
Therefore, they can be enumerated  by following the adjacency list for $i$ and comparing the corresponding entries with zero, for a total of at most $\Oh(\sum_{i\in V(H)} |N_H[i]|)=\Oh(|V(H)|+|E(H)|)$ basic operations.
This also suffices for making comparisons to find the earliest row $j$ among them.

For any fixed $i$, after choosing $j$, the inner loops perform the Entry Update step for every intersection of a row $k\in R$ and a column $\ell \in C$, where $R$ is the set of rows in $\Rows$ having non-zero intersection with $i$ and $C$ is the set of columns in $\Cols$ having non-zero intersection with $j$.
By Lemma~\ref{lem:invariant}, $R\subseteq N_H(i)$ and $C\subseteq N_H(j)$. We find $R$ and $C$ by iterating over $N_H(i)$ to find $R$, and iterating over $N_H(j)$ to find $C$. Note that the time spent on all these iterations amortizes to $\Oh(\sum_{i\in V(H)} |N_H(i)|)=\Oh(|E(H)|)$.

To access all the values $M[k,\ell]$ occurring at intersections $k\in R$ and $\ell \in C$, we iterate over the arrays $p[k]$ and $p[\ell]$ of each $k\in R$ and each $\ell\in C$.
By Claim~\ref{cl:shallow}, each edge $k\ell$ for $k\in R,\ell\in C$ will be in at least one of these arrays (in particular $|R|\cdot|C| \leq b \cdot (|R|+|C|)$).
Thus the total time used on Entry Update steps will amount to $\Oh(b \cdot(|R|+|C|))$ basic operations.
More precisely, for each $k\in R$, we set $a[k] := M[k,i]$ and $a'[k] := i$.
Similarly, for each $\ell \in C$, we set $a[\ell] := M[j,\ell]$ and $a'[\ell] := i$.
Then, we iterate over all edges $k\ell$ with $k\in R$ and $\ell \in C$ using the $p$-arrays, and perform the Entry Update step for each such edge.
Given $k$ and $\ell$, we can check that they are indeed in $R$ and $C$ by testing $a'[k]=i$ and $a'[\ell]=i$.
We can ensure that the update is not performed twice on the same entry by setting $a''[k,\ell]:=i$ when the first update is performed, and then not performing it again when value $i$ is seen in $a''[k,\ell]$.
Values $M[k,i]$ and $M[j,\ell]$ can be accessed in constant time via $a[k]$ and $a[\ell]$.

Since $|R|\leq |N_H(i)|$ and $|C|\leq |N_H(j)|$, the total number of basic operations used for the Entry Update steps is $\Oh(b\cdot (|N_H(i)|+|N_H(j)|))$. Since $i$ and $j$ are afterwards removed from \Cols and \Rows, this amortizes to $\Oh(b\cdot \sum_{i\in V(H)} |N_H(i)|)=\Oh(b\cdot |E(H)|)$ basic operations in total. 
Therefore, the total number of basic operations made throughout the algorithm is bounded by $\Oh(|E(H)| \cdot b + |V(H)|)$.
\end{proof}

\subsection{Orderings for path and tree-partition decompositions}\label{sec:pw-elim}
In this section we show that small path or tree-partition decompositions of the bipartite graph associated with a matrix can be used to find a completion with a strong ordering of small width.
In both cases it is enough to complete the graph to a maximal graph admitting the same decomposition and take a natural ordering (corresponding to ``forget nodes'' -- the rightmost or topmost bags that contain each vertex).
	
\begin{lemma}\label{lem:pw-ordering}
Given a matrix $M$ and a path decomposition of width $b$ of the bipartite graph $G=G_M$ of $M$, one can construct a completion $H$ of $G$ with at most $2b\cdot |V(G)|$ edges and list a strong $H$-ordering of degeneracy (and hence width) at most $b$, in time $\Oh(b \cdot |V(G)|)$.
\end{lemma}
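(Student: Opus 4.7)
My plan is to take $H$ to be the completion of $G = G_M$ in which every bag of the given path decomposition $\mathbb{P}=(B_1,\dots,B_t)$ is turned into a clique, and to order the vertices of $G$ by the index of their \emph{last bag} $\mathrm{last}(v) = \max\{i : v \in B_i\}$, breaking ties arbitrarily. I would first clean $\mathbb{P}$ if it is not already clean, which can be done in $\Oh(b\cdot |V(G)|)$ time by the remark in Section~\ref{sec:prelims} and in particular guarantees $t\leq |V(G)|$.

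\textbf{Degeneracy bound.} I would argue that if $v \mleq u$ and $uv\in E(H)$, then $u$ and $v$ share some bag $B_s$, and necessarily $s \leq \mathrm{last}(v) \leq \mathrm{last}(u)$, so by the interval property of path decompositions $u \in B_{\mathrm{last}(v)}$. Hence the later $H$-neighbors of $v$ all lie in $B_{\mathrm{last}(v)}\setminus\{v\}$, giving at most $b$ of them. This immediately yields degeneracy at most $b$ and $|E(H)| \leq b\cdot |V(G)| \leq 2b\cdot |V(G)|$.

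\textbf{Strongness.} Setting $\mathrm{first}(v) = \min\{i : v\in B_i\}$ analogously, given $ij, ik \in E(H)$, $j\mleq k$, and $\ell \in N_H(j)$ with $i\mleq \ell$, I would observe that the interval property gives $\mathrm{first}(k) \leq \mathrm{last}(i)$ (from $ik\in E(H)$) and $\mathrm{first}(\ell) \leq \mathrm{last}(j)$ (from $j\ell\in E(H)$). Combining these with $\mathrm{last}(i) \leq \mathrm{last}(\ell)$ and $\mathrm{last}(j) \leq \mathrm{last}(k)$, I get $\mathrm{first}(k)\leq \mathrm{last}(\ell)$ and $\mathrm{first}(\ell) \leq \mathrm{last}(k)$, so the intervals $[\mathrm{first}(k),\mathrm{last}(k)]$ and $[\mathrm{first}(\ell),\mathrm{last}(\ell)]$ overlap, $k$ and $\ell$ share a bag, and $k\ell\in E(H)$. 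The argument is insensitive to how ties in $\mleq$ are resolved.

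\textbf{Algorithm and running time.} After cleaning $\mathbb{P}$, a single pass over the bags (total size $\Oh(b\cdot|V(G)|)$) computes $\mathrm{last}(v)$ for each $v$, and then counting sort over $[1,t]$ produces the $\mleq$-sorted vertex list in $\Oh(b\cdot|V(G)|)$ time. To enumerate $E(H)$, I would iterate over each bag $B_i$ and, for every $v\in B_i$ with $\mathrm{last}(v)=i$, emit an edge from $v$ to every other vertex of $B_i$; each emitted edge contributes to the at most $b$ later-neighbor count of $v$, so the work sums to $\Oh(b\cdot|V(G)|)$ and no edge is emitted twice. I do not foresee a real obstacle in the proof; the only genuinely pleasant point is that the ``last bag'' ordering forces strongness and the degeneracy bound to fall out of the same interval property of path decompositions simultaneously.
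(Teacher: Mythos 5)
Your proposal is correct and follows essentially the same route as the paper's proof: complete each bag to a clique, order by the index of the last bag containing a vertex, verify strongness via the interval characterization of edges in $H$ (both $\mathrm{first}(k)\leq\mathrm{last}(\ell)$ and $\mathrm{first}(\ell)\leq\mathrm{last}(k)$), and bound degeneracy by observing that all later neighbors of $v$ lie in $B_{\mathrm{last}(v)}$. The paper phrases the edge bound via $\pw(H)\leq b$ and Lemma~\ref{lem:tw-edges} rather than summing the degeneracy directly, but this is a cosmetic difference.
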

\begin{proof}
Consider a path decomposition of $G$ with consecutive bags $B_1,B_2,\dots,B_q$; since we can make it a clean decomposition in linear time, we can assume that $q\leq n$.
For every vertex $v$ of $G$, let $B_{b(v)}, B_{e(v)}$ be the first and last bag containing $v$, respectively. For $i=1,2,\ldots,n$, let $B_i'$ be the set of all the vertices $v$ with $e(v)=i$.
Let $H$ be the graph obtained from $G$ by adding edges between any two vertices in the same bag $B_i$ (that is, sets $B_i$ become cliques in $H$).
The graph $H$ still has pathwidth $b$ so by Lemma~\ref{lem:tw-edges} it has at most $b \cdot |V(G)|$ edges.
Let $\mleq$ be any ordering that places all vertices in $B_i'$ before all vertices $B_j'$, for $i<j$ (vertices within one set $B_i'$ can be ordered arbitrarily);
that is, $e(u) < e(v)$ implies $u \mleq v$.
It is straightforward to perform the construction in time $\Oh(b \cdot |V(G)|)$.
We claim $\mleq$ is a strong $H$-ordering of degeneracy $b$.

To show this, first observe that $uv \in E(H)$ ($u$ and $v$ were in a common bag of the decomposition) if and only if $b(u) \leq e(v)$ and $b(v) \leq e(u)$ --- if $uv\in E(H)$, the vertices were in a common bag and the implication is clear, while in the other case either $b(u),e(u) < b(v)$ or $b(v),e(v)< b(u)$, giving the converse.
To check strongness, let $i,j,k\in V(H)$ be such that $ij,ik\in E(H)$ and $j\mleq k$.
Then $ik\in E(H)$ implies $b(k) \leq e(i)$ and $j\mleq k$ implies $e(j) \leq e(k)$.
Let $\ell$ be any neighbor of $j$ that comes after $i$.
Then $\ell j \in E(H)$ implies $b(\ell) \leq e(j)$ and $i\mleq \ell$ implies $e(i) \leq e(\ell)$.
Together, we have $b(\ell) \leq e(j) \leq  e(k)$ and $b(k) \leq e(i) \leq e(\ell)$, hence $\ell k \in E(H)$, concluding the proof of strongness.

To bound the degeneracy of $\mleq$, for each $i\in V(H)$ we want to bound the number of neighbors $\ell$ of $i$ with $i \mleq\ell$.
Such a neighbor must satisfy $b(\ell)\leq e(i)$ and $e(i) \leq e(\ell)$.
By the properties of a decomposition, $\ell$ must be contained in all bags from $B_{b(\ell)}$ to $B_{e(\ell)}$, hence both $i$ and $\ell$ are contained in the bag $B_{e(i)}$.
Therefore, the number of such neighbors $\ell$ is bounded by $|B_{e(i)}\setminus \{i\}|\leq b$ for each $i\in V(H)$, which shows degeneracy is at most $b$.
\end{proof}

\begin{lemma}\label{lem:stw-ordering}
	Given a matrix $M$ and a tree-partition decomposition of width $b$ of the bipartite graph $G$ of $M$, one can construct a completion $H$ of $G$ with at most $b\cdot |V(G)|$ edges and list a strong $H$-ordering of degeneracy (and hence width) at most $2b$, in time $\Oh(b \cdot |V(G)|)$.
\end{lemma}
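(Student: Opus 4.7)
The plan is to mimic the proof of Lemma~\ref{lem:pw-ordering}, adapting it to the tree structure of a tree-partition decomposition. After cleaning the decomposition we may assume $|V(T)|\leq |V(G)|$. Root $T$ at an arbitrary node and list the vertices of $G$ in a depth-first postorder on $T$, so that each bag $B_x$ forms a contiguous block (ordered arbitrarily within the block), children of $x$ are visited before $x$, and subtrees rooted at different children of the same node appear in disjoint ranges; this defines the ordering $\mleq$. The completion $H$ is built from $G$ by (i) turning every bag $B_x$ into a clique and (ii) for each tree-edge $xy \in E(T)$ for which $G$ has at least one edge between $B_x$ and $B_y$, adding the full bipartite graph between $B_x$ and $B_y$. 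Clearly $H$ is still consistent with the same tree-partition decomposition of width $b$.

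For degeneracy, fix $v \in B_x$ and observe that every $H$-neighbor of $v$ lies in $B_x$ or in a bag adjacent to $x$ in $T$. Among these, bags of children of $x$ come strictly earlier than $v$ in postorder, so the only later neighbors of $v$ are (a) other members of $B_x$ placed after $v$ by the tie-breaking inside the block and (b) vertices in the parent bag $B_{p(x)}$. This gives at most $(|B_x|-1) + |B_{p(x)}| \leq 2b-1 < 2b$ later neighbors, so the degeneracy is at most $2b$. Consequently, $|E(H)| \leq 2b\cdot |V(G)|$; a direct count of intra-bag cliques ($\sum_x \binom{|B_x|}{2}\leq \tfrac{b-1}{2}|V(G)|$) and completed tree-edge bipartite graphs ($\sum_{\text{non-root }y}|B_y|\cdot|B_{p(y)}|\leq b\cdot |V(G)|$) gives the bound $\Oh(b\cdot |V(G)|)$ claimed in the lemma. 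All of this can be produced by one DFS over $T$ in time $\Oh(b\cdot |V(G)|)$: we compute adjacency lists of $H$ bag by bag, and we output the ordering as the concatenation of postorder blocks.

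The heart of the proof is verifying strongness. Fix $i,j,k$ with $ij, ik \in E(H)$ and $j\mleq k$, and any $\ell$ with $j\ell \in E(H)$ and $i\mleq \ell$; we must show $k\ell\in E(H)$. The bags $B_i, B_j, B_k$ all lie in $\{B_i\}\cup N_T(B_i)$, the postorder constraint $j\mleq k$ forces $B_j$ to be no shallower than $B_k$ (with equality handled by the tie-breaking), and analogously $i\mleq\ell$ places $B_\ell$ no deeper than $B_i$ along the postorder path. A short case analysis on the possible tree configurations---grouped as ``$B_j=B_k$'' vs.\ ``$B_j,B_k$ are two distinct neighbors of $B_i$''---always places $B_k$ and $B_\ell$ either equal or adjacent in $T$; moreover, the very existence of the edge $j\ell$ in $H$ already witnesses that the bipartite completion between $B_k$ and $B_\ell$ (or the clique inside $B_k$) has been triggered, and hence $k\ell\in E(H)$.

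The main obstacle is the trickiest subcase of strongness, in which $B_j=B_k$ and $B_i, B_\ell$ are two \emph{distinct} children of $B_j$ in $T$: here there is no $T$-edge between $B_\ell$ and $B_i$, and the argument goes through only because the edge $j\ell$ in $H$ arises from the full bipartite completion between $B_j$ and $B_\ell$, which is triggered by \emph{some} $G$-edge between these two bags and therefore also forces the edge $k\ell$. This is exactly the reason we add the \emph{entire} bipartite graph $K_{B_x,B_y}$ (instead of, say, only the edges incident to one side of the triggering $G$-edge); weaker completions break strongness in precisely this configuration.
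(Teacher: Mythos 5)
Your proof takes essentially the same route as the paper's: complete $G$ within the tree-partition decomposition so that any two vertices in the same bag have identical $H$-neighborhoods (bag cliques plus full bipartite graphs across tree edges), order vertices bottom-up so that ancestor bags follow descendant bags (DFS postorder is one valid realization of the paper's ``any upward ordering''), bound the degeneracy by the bag plus its parent bag, and verify strongness by a short case analysis on the tree positions of $t(i),t(j),t(k),t(\ell)$. The one variation --- adding $K_{B_x,B_y}$ only along tree edges $xy$ for which $G$ already has an edge between $B_x$ and $B_y$, instead of unconditionally as the paper does --- is harmless: it keeps the crucial uniform-neighborhood property (all vertices of $B_x$ see exactly $B_x$ plus the ``triggered'' adjacent bags), which is all the strongness case analysis relies on, and it only shrinks $|E(H)|$.
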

\begin{proof}
	Let $(\Tt,\{B_t\}_{t\in V(\Tt)})$ be the given tree-partition decomposition of $G$.
	Let $H$ be the graph obtained from $H$ by adding all edges between vertices in the same or in adjacent (in $\Tt$) bags.
	The graph $H$ still has tree-partition width $b$ and hence at most $2b \cdot |V(G)|$ edges, by Corollary~\ref{cor:tpw-edges}. For a vertex $i\in V(H)$, by $t(i)$ we denote the node of $\Tt$ whose bag contains $i$.

	We root $\Tt$ arbitrarily, which imposes an ancestor-descendant relation on the nodes of $\Tt$.
	Let $\mleq$ be any ordering that goes `upward' the decomposition, that is, places all vertices of $B_t$ after all vertices of $B_{t'}$ for any $t$ and its descendant $t'$ in $\Tt$. 
	It is straightforward to perform the construction of any such $\mleq$ in $\Oh(b \cdot |V(G)|)$.
	We claim $\mleq$ is a strong $H$-ordering of degeneracy $2b$.
	
	The bound on degeneracy follows from the fact that the neighbors of a vertex $i$ in $H$ occurring later in the ordering $\mleq$ must be either in the same bag as $i$, or in the parent bag of the bag containing $i$.
	To show strongness, let $i,j,k\in V(H)$ be such that $ij,ik\in E(H)$ and $j\mleq k$. 
	Let $\ell$ be any neighbor of $j$ that comes after $i$ in $\mleq$.
	We want to show that $\ell$ is a neighbor of $k$ too.
	If it is not, then $t(\ell)\neq t(i)$ (as otherwise $N_H[\ell]=N_H[i]\ni k$), and similarly $t(k)\neq t(j)$ (as otherwise $N_H[k]=N_H[j]\ni \ell$).
	Since $j$ and $i$ are adjacent, $t(i)$ and $t(j)$ are either equal or adjacent (in $\Tt$).
	
	If $t(i)$ is a child of $t(j)$, then since $t(k)$ is either equal or adjacent to $t(i)$ (due to $ik\in E(H)$) and it is not a descendant of $t(j)$ (by $j \mleq k$), it must be equal to $t(j)$, contradicting the above inequalities.
	
	If $t(j)$ is a child of $t(i)$, then since $t(\ell)$ is either equal or adjacent to $t(j)$ (due to $j\ell\in E(H)$) and it is not a descendant of $t(i)$ (by $i \mleq \ell$), it must be equal to $t(i)$, contradicting the above inequalities.
	
	If $t(i)=t(j)$, then $t(k)$ is either equal or adjacent to $t(i)=t(j)$ (by $ik\in E(H)$), they cannot be equal (by the above inequalities) and $t(k)$ is not a child of $t(j)$ (by $j \mleq k$), hence $t(k)$ must be the parent of $t(i)=t(j)$.
	Similarly, $t(\ell)$ is either equal or adjacent to $t(j)=t(i)$ (by $j\ell \in E(H)$), they cannot be equal (by the above inequalities), and $t(\ell)$ is not a child of $t(i)$ (by $i\mleq \ell$), hence $t(\ell$) must also be the parent of $t(j)=t(i)$, implying $t(\ell)=t(k)$. Hence in any case $\ell$ is a neighbor of $k$ too, proving strongness of the $H$-ordering $\mleq$.
\end{proof}

Given Lemmas~\ref{lem:pw-ordering} and \ref{lem:stw-ordering}, from an $n\times m$ matrix $M$ and a path- or tree-partition- decomposition of $G_M$ of width $b$, we can construct a completion $H$ of $G_M$ with at most $(n+m)\cdot b$ edges and a strong $H$-ordering of width at most $2b$.
Therefore, Gaussian elimination can be performed using $\Oh((n+m)\cdot b^2)$ field operations and time by Lemma~\ref{lem:gaussian}, yielding matrices with at most $(n+m)\cdot b$ non-zero entries by Lemma~\ref{lem:invariant}.
Together with Lemma~\ref{lem:usingGaussian}, this concludes the proof of Theorem~\ref{thm:pw-det}.

\restatepwdet*

It is tempting to try to perform the same construction as in Lemmas~\ref{lem:pw-ordering} and~\ref{lem:stw-ordering} also for standard tree decompositions that correspond to treewidth. That is, complete the graph to a chordal graph $H$ according to the given tree decomposition, root the decomposition in an arbitrary bag, and order the vertices in a bottom-up manner according to their forget nodes (i.e., highest nodes of the tree containing them). Unfortunately, it is not hard to construct an example showing that this construction {\em{does not}} yield a strong ordering. 
In fact, there are chordal graph that admit no strong ordering~\cite{Dragan00}.
For this reason, in the next section we show how to circumvent this difficulty by reducing the case of tree decompositions to tree-partition decompositions using the vertex splitting technique.

\subsection{Vertex splitting for low treewidth matrices}\label{sec:tw-elim}
In this subsection we show how vertex splitting can be used to expand a matrix of small treewidth into an equivalent matrix of small tree-partition width.
This allows us to extend our results to treewidth as well and prove Theorem~\ref{thm:tw-det}, without assuming the existence of a good ordering for the original matrix.

The vertex splitting operation, as described in the introduction, can be used repeatedly to change vertices into arbitrary trees, with original edges moved quite arbitrarily.
While algorithms will not need to perform a series of splittings, as the final outcome in our application can be easily described and constructed directly from a given tree decomposition, we nevertheless define possible outcomes in full generality to perform inductive proofs more easily. 

In this section, each tree decomposition has a tree that is arbitrarily rooted and the set of all children of each node is arbitrarily ordered, so that the following ordering can be defined. The \emph{pre-order} of an ordered tree is the ordering of nodes which places a parent before its children and the children in the same order as defined by the tree.

A {\em{tree-split}} \E of a graph $G$ is an assignment of an ordered tree $\E(v)$ to every vertex $v\in V(G)$ and of a node pair $\E(uv) = (t,t')$ for every edge $uv\in E(G)$, such that $t \in V(\E(u))$ and $t' \in V(\E(v))$.
A rooted tree decomposition $(\Tt,\{B_t\}_{t\in V(\Tt)})$ (with an ordered tree $\Tt$) of a graph $G$ gives rise to a tree-split $\E(\Tt)$ as follows: for $v\in V(G)$, $\E(v)$ is the subtree of $\Tt$ induced by those nodes whose bags contain $v$; for $uv \in E(G)$, $\E(uv)=(t,t)$, where $t$ is the topmost node of $\treedecomp$ whose bag contains both $u$ and $v$ (it is easy to see that there is always a unique such node).
See Figure~\ref{fig:splitting} for an example.

For a matrix $M$ and a tree-split \E of $G=G_M$, we define below the $\E$-split of $M$, denoted $M_{\E}$.
We later show that this operation preserves the determinant up to sign, for example.
We will denote the $\E$-split of $M$ corresponding to a tree decomposition $\T$ of the bipartite graph of $M$ simply as $M_{\T}$ -- we aim to show that $M_{\T}$ preserves the algebraic properties of $M$ and strengthens the structure given by $\T$ to that of a tree-partition decomposition.
The ordering and sign choices in the definition are only needed to preserve the sign of the determinant.

\begin{definition}[$\E$-split of $M$]
Let $M$ be a matrix with rows $r_1,\dots,r_n$, columns $c_1,\dots,c_m$ and bipartite graph $G=G_M$, and let $\E$ be a tree-split of $G$.
The matrix $M_\E$ has the following rows, in order: for every row $r_i$ of the original matrix (in original order) we have a row indexed with the pair $(r_i,t)$, where $t$ is the root of $\E(r_i)$.
After this, for every original row $r_i$ we have consecutively, for every non-root node $t$ of the tree $\E(r_i)$, a row indexed with the pair $(r_i,t)$ (different $t$ occurring according to the pre-order of the tree).
Then, for every original column $c_i$ (in original order), we have consecutively for every edge $tt'$ of the tree $\E(c_i)$ a row indexed with the pair $(c_i,tt')$ (different $tt'$ occurring according to the pre-order of the lower node in the tree $\E(c_i)$).
The columns of $M_\E$ are defined symmetrically, that is, they are indexed with $(c_i,t)$ and $(r_i,tt')$.

We define the entries of $M_\E$.
For every non-zero entry of the original matrix, that is, every edge $r_i c_j$ of its bipartite graph, set $M_\E[(r_i,t),(c_j,t')] = M[r_i,c_j]$, 
where $(t, t') = \E(r_i c_j)$.
For each row indexed as $(c_i,tt')$, with $t$ being the parent of $t'$ in $\E(c_i)$, set $M'[(c_i,tt'),(c_i,t')]=-M'[(c_i,tt'),(c_i,t)]=(-1)^{n'}$, where $n'$ is the number of rows preceding $(c_i,t')$.
Symmetrically, for each column indexed as $(r_i,tt')$, with $t$ the parent of $t'$ in $\E(r_i)$, set $M'[(r_i,t'),(r_i,tt')]=-M'[(r_i,t),(c_i,tt')]=\pm 1$ analogously.
We set all other entries to 0.
This concludes the definition.
\end{definition}

\begin{figure}[H]
	\centering
	\tikzstyle{v}=[circle,fill=black,draw=black!75,inner sep=0pt,minimum size=0.3em]

\begin{tikzpicture}
	\begin{scope}[shift={(-5,0)}]
	\node[blue] (u) at (0,0) {$r$};
	\node[red] (v) at (1,0) {$r'$};
	\node[orange] (w) at (-0.5,-0.8) {$c''$};
	\node[green!70!black] (x) at (0.5,1) {$c$};
	\node[yellow!70!black] (y) at (0.8,-1) {$c'$};
	\draw (u)--(y)--(v)--(x)--(u)--(w);
	\end{scope}
	
	\begin{scope}
	\draw[draw=none,fill=gray!50!white] (-1cm,0) rectangle (1cm,1.5);	
	\draw[draw=none,fill=gray!50!white] (-1cm,0) -- (-2.5,-1.5) -- (-0.5,-1.5) -- (1cm,0);			
	\draw[draw=none,fill=gray!50!white] (-1cm,0) -- (0.5,-1.5) -- (2.5,-1.5) -- (1cm,0);							
	
	\node[fill=gray!50!white,ellipse,minimum width=2cm,minimum height=1cm] at (0,1.5) {};
	\node[fill=white,ellipse,minimum width=1.7cm,minimum height=0.7cm] at (0,1.5) {$r, c, r'$};
	
	\node[fill=gray!50!white,ellipse,minimum width=2cm,minimum height=1cm] at (0,0) {};
	\node[fill=white,ellipse,minimum width=1.7cm,minimum height=0.7cm] at (0,0) {$r, r'$};
	
	\node[fill=gray!50!white,ellipse,minimum width=2cm,minimum height=1cm] at (-1.5,-1.5) {};
	\node[fill=white,ellipse,minimum width=1.8cm,minimum height=0.7cm] at (-1.5,-1.5) {$c'',r$};
	
	\node[fill=gray!50!white,ellipse,minimum width=2cm,minimum height=1cm] at (1.5,-1.5) {};
	\node[fill=white,ellipse,minimum width=1.8cm,minimum height=0.7cm] at (1.5,-1.5) {$r, c', r'$};
	
	
	\node[gray] at (0.7,1.5) {\Large\textbf{0}};	
	\node[gray] at (0.7,0) {\Large\textbf{1}};
	\node[gray] at (-0.8,-1.5) {\Large\textbf{2}};
	\node[gray] at (2.2,-1.5) {\Large\textbf{3}};		
	\end{scope}	
	
	\begin{scope}[shift={(6,0)},scale=1.1]
	\draw[draw=none,fill=gray!50!white] (-1cm,0) rectangle (1cm,1.5);	
	\draw[draw=none,fill=gray!50!white] (-1cm,0) -- (-2.5,-1.5) -- (-0.5,-1.5) -- (1cm,0);			
	\draw[draw=none,fill=gray!50!white] (-1cm,0) -- (0.5,-1.5) -- (2.5,-1.5) -- (1cm,0);							
	
	\node[fill=gray!50!white,ellipse,minimum width=2.2cm,minimum height=1cm] at (0,1.5) {};
	\node[fill=white,ellipse,minimum width=1.7cm,minimum height=0.7cm] at (0,1.5) {};
	
	\node[fill=gray!50!white,ellipse,minimum width=2.2cm,minimum height=1cm] at (0,0) {};
	\node[fill=white,ellipse,minimum width=1.7cm,minimum height=0.7cm] at (0,0) {};
	
	\node[fill=gray!50!white,ellipse,minimum width=2.2cm,minimum height=1cm] at (-1.5,-1.5) {};
	\node[fill=white,ellipse,minimum width=1.7cm,minimum height=0.7cm] at (-1.5,-1.5) {};
	
	\node[fill=gray!50!white,ellipse,minimum width=2.2cm,minimum height=1cm] at (1.5,-1.5) {};
	\node[fill=white,ellipse,minimum width=1.7cm,minimum height=0.7cm] at (1.5,-1.5) {};
	
	%
	
	\node[fill=white,ellipse,minimum width=1.5cm,minimum height=0.7cm] at (0,0.75) {};		
	\node[fill=white,ellipse,minimum width=1.5cm,minimum height=0.7cm] at (-0.8,-0.75) {};			
	\node[fill=white,ellipse,minimum width=1.5cm,minimum height=0.7cm] at (0.8,-0.75) {};			
	
	\node[v,blue,label={[label distance=0.3cm,blue!50!black]left:$r0$}] (r0) at (-0.6,1.5) {};
	\node[v,green!80!black,label={[label distance=0.3cm,green!50!black]above:$c0$}] (c0) at (-0,1.6) {};
	\node[v,red,label={[label distance=0.3cm,red!50!black]right:$r'0$}] (rp0) at (0.6,1.5) {};	
	
	\node[v,blue,label={[label distance=0.4cm,blue!50!black]left:\footnotesize$r01$}] (r01) at (-0.4,0.75) {};
	\node[v,red,label={[label distance=0.4cm,red!50!black]right:\footnotesize$r'01$}] (rp01) at (0.4,0.75) {};	
	
	\node[v,blue,label={[blue!50!black]left:$r1$}] (r1) at (-0.2,0) {};
	\node[v,red ,label={[ red!50!black]right:$r'1$}] (rp1) at (0.15,0) {};	
	
	\node[v,blue,label={[blue!50!black]left:\footnotesize$r12$}] (r12) at (-0.65,-0.75) {};	
	
	\node[v,blue,label={[blue!50!black]left:$r2$}] (r2) at (-1.8,-1.5) {};
	\node[v,orange,label={[orange!70!black]right:$c''2$}] (cpp2) at (-1.3,-1.6) {};
	
	\node[v,blue,label={[blue!50!black]left:\footnotesize$r13$}] (r13) at (0.6,-0.75) {};
	\node[v,red ,label={[red!50!black]right:\footnotesize$r'13$}] (rp13) at (0.9,-0.75) {};		
	%
	%
	
	\node[v,blue,label={[label distance=0.3cm,blue!50!black]left:$r3$}] (r4) at (1,-1.5) {};
	\node[v,yellow,label={[label distance=0.3cm,yellow!50!black]below:$c'3$}] (cp4) at (1.5,-1.6) {};	
	\node[v,red,label={[label distance=0.3cm,red!50!black]right:$r'3$}] (rp4) at (2,-1.5) {};
	
	\draw[thick] (r0) -- (c0) -- (rp0) (r2) -- (cpp2) (r4) -- (cp4) -- (rp4);
	
	\draw[blue,thick] (r0) -- (r01) -- (r1) -- (r12) --(r2) (r1)--(r13)--(r4);
	\draw[red,thick] (rp0) -- (rp01) -- (rp1) -- (rp13) -- (rp4);
	\end{scope}		
	
	\node at (-5,-4) { 
		\begin{tabular}{r | c c c}
		& $c$     &   $c'$  & $c''$\\\hline
		$r$ & $\star$ & $\star$ & $\star$ \\
		$r'$& $\star$ & $\star$ & 0
		\end{tabular}
	};
	
	\node at (-5,-6.5) { 
		\begin{tabular}{r | c c c c}
		& $c$     &   $c'$  & $c''$ & $r01$\\\hline
		$r0$ & $\star$ & 0 & 0 & 1\\
		$r1$ & 0 & $\star$ & $\star$ & -1\\	
		$r'$& $\star$ & $\star$ & 0 & 0
		\end{tabular}
	};
	
	\node at (5,-5.5) { 
		\begin{tabular}{r | c c c c c c c c}
		&$c0$&$c'3$&$c''2$&$r01$&$r12$&$r13$&$r'01$&$r'13$\\\hline
		$r0$ &$\star$& 0&  0   &  1  &  0  & 0 &0& 0\\
		$r1$ &0& 0&  0   &  -1  &  1  & 1 & 0 & 0\\		
		$r2$ &0& 0& $\star$ &  0  &  -1  & 0  &0& 0\\				
		$r3$ &0& $\star$&  0   &  0  &  0  &  -1 &0& 0\\				
		$r'0$ &$\star$& 0 & 0 & 0 & 0 & 0  & 1& 0\\
		$r'1$ & 0 & 0 & 0 & 0 & 0 & 0  & -1 & 1 \\
		$r'3$ & 0 & $\star$ & 0 & 0 & 0 & 0 & 0 & -1
		\end{tabular}
	};				
\end{tikzpicture}
	\tikzstyle{v}=[circle,fill=black,draw=black!75,inner sep=0pt,minimum size=0.3em]
	\caption{On the left: a matrix $M$, its bipartite graph $G_M$, and the matrix after splitting a row once. In the middle: a tree decomposition of $G_M$ with nodes $0,1,2,3$. On the right: the matrix after splitting according to this decomposition, and its graph in a tree-partition decomposition.}
	\label{fig:splitting}
\end{figure}
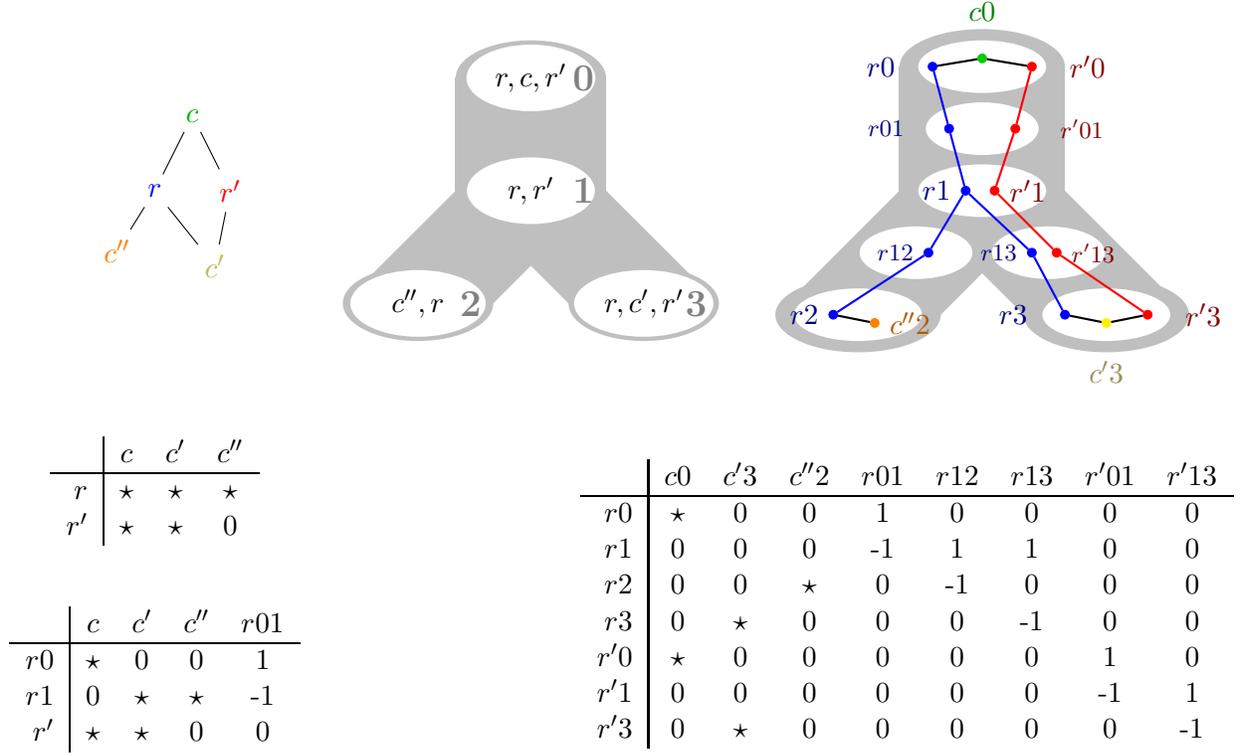

For a tree-split $\E$ of a graph $G$, we write $\|\E\|$ for the total number of edges in all trees $\E(v)$, for $v\in V(G)$.
Note that if $\E$ is a tree-split of the bipartite graph of an $n\times m$ matrix $M$, then $M_{\E}$ has exactly $n+\|\E\|$ rows and $m+\|\E\|$ columns. 
For a set of rows (analogously for columns) $I$, we write $I_{\E}$ for the set of all rows of $M_{\E}$, except those of the form $(v,t)$ where $v$ is a row of $M$ not in $I$ and $t$ is the root of $\E(v)$.
In other words, $I_{\E}$ is obtained from $I$ by taking rows with the same positions, and adding all the last $\|E\|$ rows of $M_{\E}$.
In particular, $|I_{\E}| = |I| + \|\E\|$.

\begin{lemma}\label{lem:split}
	Let $\E$ be a tree-split of the bipartite graph $G$ of an $n\times m$ matrix $M$.
	Then $\rk M = \rk M_{\E} - \|\E\|$
	and for any sets $I$ and $J$ of rows and columns of $M$ of equal size, 
$$\det[M]_{I,J} = (-1)^{\|\E\| \cdot (n+m)} \cdot \det[M_{\E}]_{I_{\E},J_{\E}}.$$
In particular $\det M_{\E} = \det M$, if $n=m$. 
\end{lemma}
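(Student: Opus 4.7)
The plan is to prove both assertions by induction on $\|\E\|$, the total number of edges in the trees of the tree-split.

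For the base case $\|\E\|=0$, every tree $\E(v)$ is a single root node, so $M_\E$ contains no splitting rows or columns and its non-zero entries coincide with those of $M$ up to renaming of indices. Both claims then follow trivially, with the sign factor $(-1)^{0}=1$.

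For the inductive step, I would isolate the effect of adding a single edge $tt'$ to one tree $\E(v)$, where $t'$ is a new leaf attached to an existing node $t$. By the symmetry between rows and columns of $M$, it suffices to treat the case $v = r_i$. Comparing $M_{\E'}$ with $M_\E$, the only differences are:
\begin{itemize}
\item one extra row $(r_i,t')$ carrying exactly those entries of the row $(r_i,t)$ of $M_\E$ that correspond to edges $r_i c_j$ reassigned to $\E'(r_i c_j)=(t',\cdot)$, with those same entries zeroed out in row $(r_i,t)$;
\item one extra column $(r_i,tt')$ whose only non-zero entries are a pair $\pm\epsilon$ in rows $(r_i,t)$ and $(r_i,t')$, where $\epsilon=\pm 1$ is the sign dictated by the definition.
\end{itemize}

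The key maneuver is then to apply the elementary row operation ``add row $(r_i,t')$ to row $(r_i,t)$'' inside any square submatrix containing both rows. This operation leaves the determinant unchanged and, after it is applied, row $(r_i,t)$ recovers the original row of $M_\E$ in the old columns (the moved entries are summed back) and acquires a $0$ in the new column $(r_i,tt')$. Thus the new column has a unique non-zero entry $\epsilon$ in row $(r_i,t')$. A cofactor expansion along that column then extracts $\det[M_\E]_{I_\E,J_\E}$ as the complementary minor, multiplied by $\epsilon$ and by the Laplace sign $(-1)^{\rho+\gamma}$, where $\rho$ and $\gamma$ are the positions of $(r_i,t')$ in $I_{\E'}$ and of $(r_i,tt')$ in $J_{\E'}$. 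Because $(r_i,t')$ is a non-root row copy and $(r_i,tt')$ is a row-tree-edge column, both always lie in $I_{\E'}$ and $J_{\E'}$, so $I_{\E'}=I_\E\cup\{(r_i,t')\}$ and $J_{\E'}=J_\E\cup\{(r_i,tt')\}$, and the cofactor expansion is legitimate. The rank identity follows from the same row operation: after it, the extra row and column together contribute exactly one to the rank and can be removed to recover $M_\E$, so $\rk M_{\E'}=\rk M_\E+1$.

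The main obstacle will be the exact sign accounting. The factor $(-1)^{n'}$ (respectively $(-1)^{m'}$) in the definition of the splitting entries of $M_\E$ is engineered precisely so that the product $\epsilon\cdot(-1)^{\rho+\gamma}$ collapses to $(-1)^{n+m}$ in each inductive step, producing the overall factor $(-1)^{\|\E\|(n+m)}$; verifying this requires carefully tracking how the positions of rows and columns shift when one new row and one new column are inserted according to the pre-order rule used in the definition. In the square case with $I$ and $J$ ranging over all rows and columns, the parity $(n+m)$ disappears because $n=m$ forces $(-1)^{\|\E\|\cdot 2n}=1$, yielding $\det M_\E=\det M$ as claimed.
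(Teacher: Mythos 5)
Your approach is essentially the same as the paper's, just run in the opposite direction: the paper formulates a contraction claim (Claim~\ref{claim:splitStep}) that peels off the last tree-split edge of $M_{\E}$ one at a time, whereas you build $M_{\E'}$ from $M_{\E}$ by attaching one leaf $t'$ at a time. The core mechanism is identical — the elementary row operation ``add row $(v,t')$ to row $(v,t)$'' to restore the unsplit row and isolate a single $\pm 1$ in the new column $(v,tt')$, then a Laplace expansion (equivalently, deletion of a rank-one block) to recover the smaller determinant — and you correctly observe that $I_{\E'}=I_\E\cup\{(r_i,t')\}$ and $J_{\E'}=J_\E\cup\{(r_i,tt')\}$ so the expansion is always legitimate for submatrices. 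Your rank argument is also sound: after the row operation the new column has a unique $\pm 1$ entry in a row not present in $M_\E$, so the rank goes up by exactly one.

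The genuine gap is that you explicitly punt on the sign accounting, which is the only nontrivial part of the proof. You assert that the choice of signs in the definition of $M_\E$ was ``engineered precisely so that the product $\epsilon\cdot(-1)^{\rho+\gamma}$ collapses to $(-1)^{n+m}$ in each inductive step'', but you do not verify this, and the claim as you phrase it is not quite what happens: each step's sign factor depends on the positions of the inserted row $(r_i,t')$ and column $(r_i,tt')$ in the pre-order, and hence on the \emph{current} matrix dimensions rather than the original $n,m$; it is only after the full product is taken (the paper collects these as $\|\E\|(n+m+\|\E\|-1)\equiv\|\E\|(n+m)\pmod 2$) that the clean exponent emerges. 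To actually close the argument you would need to (i) pin down where the new leaf $t'$ sits in the pre-order (the cleanest choice, matching the paper, is last), (ii) compute $\rho$ and $\gamma$ from the row/column ordering, and (iii) verify that the sign $\epsilon$ in the definition is chosen exactly so that $\epsilon\cdot(-1)^{\rho+\gamma}$ simplifies appropriately. This is exactly the content of the paper's Claim~\ref{claim:splitStep} and the summation in the Lemma's proof, and without it the inductive step is incomplete.
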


To prove the above lemma, we need the following definition and claim to perform an inductive step.
For a tree-split $\E$ of a graph $G$ and two adjacent nodes $a,b$ of a tree $\E(v)$ assigned to some vertex $v \in V(G)$, define the \emph{contracted} tree-split $\E_{v/ab}$ as $\E$ with the two nodes $a,b$ identified in $\E(v)$  (contracting the tree edge connecting $a$ and $b$) and identified in any pair $\E(vw)$ that contains them.

\begin{claim}\label{claim:splitStep}
	Let $\E$ be a tree-split of the bipartite graph $G=G_M$ of a matrix $M$.
	Let $(v,tt')$ be the last column or the last row of $M_{\E}$ (so $t'$ is the last node in pre-order of the tree $\E(v)$ and $t$ is its parent).
	Then $\rk M_{\E_{v/tt'}} = \rk M_{\E} - 1$ and if $I,J$ are any sets of rows and columns of $M$ of equal size, then $\det[M_{\E_{v/tt'}}]_{I_{\E_{v/tt'}},J_{\E_{v/tt'}}} = (-1)^{n'}\cdot \det[M_{\E}]_{I_{\E},J_{\E}}$, where $n'$ is the number of rows (if $v$ is a row) or columns (if $v$ is a column) of $M_{\E_{v/tt'}}$.
\end{claim}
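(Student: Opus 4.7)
The plan is to reduce $[M_{\E}]_{I_{\E},J_{\E}}$ to a matrix with a block structure, whose blocks are $[M_{\E_{v/tt'}}]_{I_{\E_{v/tt'}},J_{\E_{v/tt'}}}$ (with column $(v,t)$ playing the role of the merged column $(v,\tilde t)$) and a trivial $1\times 1$ block containing $\pm 1$. All transitions will be by elementary row and column operations that preserve both rank and determinant, followed by a single cofactor expansion along the row containing $(v,tt')$.

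Assume without loss of generality that $(v,tt')$ is the last row of $M_{\E}$ (the last-column case being symmetric), so $v$ is a column of the original matrix $M$. By the definition of $M_{\E}$, this last row is almost completely zero: it takes value $(-1)^{n_0}$ in column $(v,t')$ and $-(-1)^{n_0}$ in column $(v,t)$, for an exponent $n_0$ fixed by position in the ordering. First I would add column $(v,t')$ to column $(v,t)$, which kills the entry $M_{\E}[(v,tt'),(v,t)]$; crucially, column $(v,t)$ then carries in each row $(r_i,s)$ the value $M[r_i,v]$ whenever $\E(r_iv)\in\{(s,t),(s,t')\}$, exactly as demanded of the merged column $(v,\tilde t)$ in $M_{\E_{v/tt'}}$. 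Next, for every row $(r_i,s)$ with $\E(r_iv)=(s,t')$, I would subtract $M[r_i,v]/(-1)^{n_0}$ times row $(v,tt')$, zeroing out the remaining entries of column $(v,t')$. After these operations, column $(v,t')$ and row $(v,tt')$ each contain a single non-zero entry equal to $(-1)^{n_0}$, sitting at their intersection.

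Deleting this row and column and relabeling $(v,t)$ as $(v,\tilde t)$ then yields exactly $[M_{\E_{v/tt'}}]_{I_{\E_{v/tt'}},J_{\E_{v/tt'}}}$, which must be verified entry by entry: original-value entries $M_{\E}[(r_i,s),(c_j,u)]$ with $c_j\neq v$ are untouched and match by definition; the merged column was handled in the column operation; and the remaining tree-edge rows $(v,ss')$ with $ss'\neq tt'$ are unchanged by the operations. Their $\pm 1$ entries coincide with those of $M_{\E_{v/tt'}}$ because $(v,tt')$ being the last row (together with $(v,t')$ being the last non-root node in pre-order of $\E(v)$) ensures that removing these indices does not shift any of the row or column positions feeding the position-dependent exponents. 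The rank assertion $\rk M_{\E}=\rk M_{\E_{v/tt'}}+1$ follows immediately from this block decomposition, and cofactor expansion along row $(v,tt')$ gives $\det[M_{\E}]_{I_{\E},J_{\E}}=\pm(-1)^{n_0}\cdot \det[M_{\E_{v/tt'}}]_{I_{\E_{v/tt'}},J_{\E_{v/tt'}}}$.

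The main obstacle is to show that this overall sign equals the claimed $(-1)^{n'}$, with $n'$ the number of columns of $M_{\E_{v/tt'}}$. This requires combining three contributions: the entry exponent $n_0$, the cofactor-expansion sign coming from the positions of $(v,tt')$ (the last row) and $(v,t')$ in the submatrix, and the parity shift caused by the matrix shrinking when passing from $M_{\E}$ to $M_{\E_{v/tt'}}$. Using that submatrices of equal size are square, so $|I_{\E}|=|J_{\E}|$, the cofactor position parity collapses to the parity of $|J_{\E}|-1=n'$, and the exponents chosen in the definition of $M_{\E}$ are evidently calibrated so that the remaining sign contributions cancel. Verifying this amounts to a routine but delicate parity calculation that I would carry out by writing $n_0$ and the cofactor position as explicit counts in the pre-order of $\E(v)$.
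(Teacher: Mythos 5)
Your structural argument---elementary column and row operations followed by deletion of row $(v,tt')$ and column $(v,t')$---matches the paper's (which, by symmetry, treats the case of $v$ a row), and the rank conclusion is correct. The gap is the sign bookkeeping, which you defer to ``a routine but delicate parity calculation,'' and your one concrete assertion about it is false: $|J_\E|-1=|J|+\|\E\|-1$ while $n'=m+\|\E\|-1$, so these differ unless $|J|=m$. The deeper difficulty is that in a cofactor expansion the column index $j$ of $(v,t')$ inside $[M_\E]_{I_\E,J_\E}$ equals $|J|$ plus a fixed offset, and the row index is $i=|I_\E|=|I|+\|\E\|$; the factor $(-1)^{i+j}$ therefore only becomes $I,J$-independent after invoking $|I|=|J|$ and combining with the entry exponent $n_0$---a cancellation that is precisely the content of the claim, not ``evidently calibrated.'' The paper avoids this entirely: it moves row $(v,t')$ to the last position of the minor before deleting the $1\times1$ block, at a cost of $(-1)^{n_1}$ with $n_1$ the number of rows of $M_\E$ strictly below $(v,t')$---a count that is \emph{manifestly} independent of $I,J$ because every such row lies in $I_\E$ by construction. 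Together with the corner entry $\sigma=(-1)^{n_2}$ (with $n_2$ the rows above $(v,t')$), the total exponent $n_1+n_2$ cleanly equals the number of rows of $M_\E$ minus one, i.e.\ $n'$. What you have deferred is thus the crux of the claim, and the shortcut you sketch for it is wrong as stated.
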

\begin{proof}
	Without loss of generality assume $v$ is a row of $M$.
	Note that the signs in the definition of $M_{\E}$ were chosen so that removing the last column or last row does not change any of them.
	Hence the matrices $M_{\E}$ and $M_{\E_{v/tt'}}$ only differ at rows $(v,t)$, $(v,t')$, and column $(v,tt')$. 
	Observe that the row $(v,t)$ of $M_{\E_{v/tt'}}$ is obtained by adding rows $(v,t')$ and $(v,t)$ of $M_{\E}$ and deleting column $(v,tt')$.
	Thus adding row $(v,t')$ to row $(v,t)$ of $M_{\E}$ yields
	a matrix equal to $M_{\E_{v/tt'}}$, but with an additional column $(v,tt')$ and an additional row $(v,t')$.
	After this row operation, the column $(v,tt')$ has only one non-zero entry $\sigma=\pm 1$ at the intersection with row $(v,t')$ (the other has just been canceled).
	Hence all other entries of this additional row can be eliminated using this entry.
	This makes $M_{\E}[(v,t'),(v,tt')]=\sigma$ the only non-zero entry in its row and column (after applying the above row and column operations).
	Hence $\rk M_{\E} = \rk M_{\E_{v/tt'}} + 1$.
	
	To check minor determinants, consider any 
	sets $I,J$ of rows and columns of $M$ of equal size.
	The sets $I_{\E}$ and $J_{\E}$ always contain $(v,t')$ and $(v,tt')$,
	so the above row and column operations have the same effect after deleting rows and columns outside those sets, and do not change the determinant.
	Moving row $(v,t')$ to the last position multiplies the determinant by $(-1)^{n_1}$, where $n_1$ is the number of rows below it -- since $I_{\E}$ contains all rows below $(v,t')$, this is independent of $I$ and $J$ and we can count rows in $M_{\E}$ just as well.
	Then, deleting this last row and the last column (whose only non-zero entry is their intersection $\sigma$) multiplies the determinant by $\sigma=(-1)^{n_2}$, where $n_2$ was defined to count the rows originally above $(v,t')$ in $M_{\E}$.
	This deletion yields a matrix equal to $M_{\E_{v/tt'}}$, hence $\det M_{\E} = (-1)^{n'} \cdot \det M_{\E_{v/tt'}}$, where $n'=n_1+n_2$ counts all rows of $M_{\E}$ except the deleted one, which is equal to the number of all rows in $M_{\E_{v/tt'}}$.	
\cqed\end{proof}

\begin{proof}[Proof of Lemma~\ref{lem:split}]
	Observe that if $\E'$ is a trivial tree-split assigning a single node tree $\E(v)$ to every vertex $v\in V(G)$, then $M_{\E'}$ defines the same matrix as $M$.
	Hence $M$ can be obtained from $M_{\E}$ by repeatedly contracting the edge corresponding to the last row or the last column.
	By Claim~\ref{claim:splitStep}, the rank decreases by exactly one with every contraction, so in total it decreases by $\|\E\|$.
	
	Similarly, each minor's determinant changes sign $i$ times with every contraction, where $i$ counts the rows or columns of the matrix after contraction.
	Hence in total, the number of sign changes is $\sum_{i=n+\|\E\|-1}^{n} i$ (for contractions corresponding to rows) plus $\sum_{i=m+\|\E\|-1}^{m} i$ (for columns), which is equal to $\frac{\|\E\|\cdot(n+\|\E\|-1+n)}{2}+\frac{\|\E\| \cdot (m+\|\E\|-1+m)}{2} = \|\E\|\cdot (n+m+\|\E\|-1) \equiv \|\E\|\cdot (n+m)\quad (\mbox{mod }2)$.
\end{proof}

The explicit construction of the split of a matrix allows us to easily bound its size, number of non-zero entries in each row and in total.
Before this, to optimize these parameters, we need the following easy adaption of a standard lemma about constructing so called `nice tree decompositions' (see \cite{BodlaenderBL13}).

\begin{lemma}\label{lem:niceDecomp}
	Given a tree decomposition of a graph $G$ of width $b$, one can find in time $\Oh(bn)$ a tree decomposition of $G$ of width $b$ with a rooted tree of at most $5n$ nodes, each with at most two children.
	Furthermore, for every node $t$ of the decomposition, there are at most $b$ edges $uv\in E(G)$ such that $t$ is topmost node whose bag contains both $u$ and $v$.
\end{lemma}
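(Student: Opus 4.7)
The plan is to transform the given decomposition into the desired form in three conceptual steps: clean it up, binarize it, and then break each node into a short chain that introduces the non-interface vertices one at a time. The chain step is where the edge-count bound will come from: whenever a node adds only one new vertex compared with the bag directly above, it can be topmost for at most $b$ edges, one per pre-existing neighbour.

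First, I apply the standard cleanup procedure referenced in Section~\ref{sec:prelims} to obtain, in time $\Oh(bn)$, a clean tree decomposition $(\Tt, \{B_x\}_{x \in V(\Tt)})$ of $G$ of width $b$ with $|V(\Tt)|\le n$, and root $\Tt$ arbitrarily. Next I binarize: any node $t$ with children $c_1,\ldots,c_k$ for $k\ge 3$ is replaced by a chain $t_1,\ldots,t_{k-1}$ of copies of $t$, each carrying the bag $B_t$, where $t_i$ has children $c_i$ and $t_{i+1}$ for $i<k-1$ and $t_{k-1}$ has children $c_{k-1}$ and $c_k$. This adds $\sum_t \max(0,k_t-2) \le \sum_t k_t = |V(\Tt)|-1 \le n$ nodes, so the binary rooted tree has at most $2n-1$ nodes; bags and the subtree condition are untouched.

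Now comes the key refinement. For each node $t$ of the binarized tree I set $P_t := B_t \cap B_{\text{parent}(t)}$ (with $P_t:=\emptyset$ at the root) and enumerate $B_t \setminus P_t = \{v_1,\ldots,v_{p_t}\}$ in an arbitrary order. I replace $t$ by a chain $t^{(0)}-t^{(1)}-\cdots-t^{(p_t)}$ with $B_{t^{(i)}} := P_t \cup \{v_1,\ldots,v_i\}$, attaching $t^{(0)}$ in place of $t$ below $t$'s original parent and the (at most two) children of $t$ below $t^{(p_t)}$. Since every vertex of $G$ belongs to $B_t \setminus P_t$ for a unique $t$ --- namely its topmost bag in the binarized tree --- one has $\sum_t p_t = n$. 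The final node count is thus at most $(2n-1)+n \le 5n$, and every node has at most two children (the chain copies have exactly one child, and $t^{(p_t)}$ inherits at most two from the binarization). Each new bag is a subset of $B_t$, so the width stays $\le b$; a vertex $v=v_i$ appears precisely in the contiguous range of new bags $t^{(i)},\ldots,t^{(p_t)}$ and carries over correctly into the descendant subtree, so the subtree condition is preserved.

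The final check is the edge-count property, which I expect to be the main obstacle. For $i\ge 1$, an edge $uv$ whose topmost bag in the refined decomposition equals $t^{(i)}$ must satisfy $\{u,v\}\subseteq B_{t^{(i)}}$ and $\{u,v\}\not\subseteq B_{t^{(i-1)}}=P_t\cup\{v_1,\ldots,v_{i-1}\}$, which forces one endpoint to equal $v_i$; hence at most $|B_{t^{(i)}}|-1 \le b$ edges can be topmost at $t^{(i)}$. For $i=0$ one has $B_{t^{(0)}}=P_t\subseteq B_{\text{parent}(t)}$, so no edge is topmost at $t^{(0)}$. All the bookkeeping can be done by storing bags as sorted lists of length $\le b+1$: computing each $P_t$ and enumerating $B_t\setminus P_t$ costs $\Oh(b)$ per node, so both the binarization and the refinement take $\Oh(bn)$ time in total, matching the claimed bound.
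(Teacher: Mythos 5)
Your proof is correct, and it takes a genuinely different route from the paper. The paper invokes the nice-tree-decomposition construction of Bodlaender, Bonsma and Lokshtanov as a black box to obtain a binary decomposition with at most $4n$ nodes in which each vertex $v$ has a distinguished ``forget'' node $t(v)$ with at most one child, and then appends a short chain above the root to guarantee $t(u)\neq t(v)$ for all $u\neq v$; the edge bound then follows because the topmost node containing both endpoints of an edge $uv$ must be one of $t(u),t(v)$, so the edges assigned to $t(v)$ are all incident to $v$. You instead construct the decomposition from scratch: clean, binarize, and then expand each node into a chain whose consecutive bags differ by exactly one vertex, so that any node either introduces a single new vertex (forcing all edges topmost at it to be incident to that vertex) or has a bag contained in its parent's bag (so no edge is topmost there). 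The underlying idea is the same --- make each node introduce at most one vertex --- but your version is self-contained and avoids the reliance on the cited nice-decomposition lemma, at no cost in the bounds (you actually get $\le 3n-1$ nodes, better than the claimed $5n$). One nit: the inequality $\sum_t\max(0,k_t-2)\le\sum_t k_t$ is wastefully loose, though of course still correct; the tighter $\sum_t\max(0,k_t-2)\le\sum_t(k_t-1)_+\le |V(\Tt)|-1$ reflects the same fact. Everything else, including the $\sum_t p_t=n$ accounting via uniqueness of topmost bags, the connectivity of the new $\Tt[v]$'s, and the $\Oh(bn)$ time analysis, checks out.
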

\begin{proof}
	Root the given tree decomposition in an arbitrary node. For $v\in V(G)$, let $t(v)$ be the topmost node whose bag contains $v$ -- there is such a bag by properties of tree decompositions (the so called \emph{forget} bag for $v$).	
	Bodlaender, Bonsma and Lokshtanov~\cite[Lemma~6]{BodlaenderBL13} describe how to transform a tree decomposition in time $\Oh(n\cdot b)$ to a `nice tree decomposition', which in particular has the following properties: it has at most $4|V(G)|$ nodes, each with at most two children, and furthermore $t(v)$ has at most one child for each $v\in V(G)$, and for $u\neq v$, either $t(u)\neq t(v)$ or $t(u)=t(v)$ is the root of the decomposition tree.
	To remedy this last possibility, if the root node's bag is $B=\{b_1,\dots,b_{\ell}\}$, add atop of it a path of nodes with bags $\{b_1,\dots,b_i\}$ for $i=\ell,\ell-1,\dots,1,0$, rooted at the last, empty bag.
	This adds at most $\ell\leq |V(G)|$ nodes to the decomposition tree.
	Now $t(v)\neq t(u)$ for all vertex pairs $u\neq v \in V(G)$.
		
	Consider now any edge $uv \in E(G)$.
	Since the sets of nodes whose bags contain $u$ and those who contain $v$ induce connected subtrees of the decomposition tree, their intersection is also a connected subtree whose topmost bag cannot be a descendant of both $t(u)$ and $t(v)$.
	That is, the topmost node whose bag contains both $u$ and $v$ is either $t(u)$ or $t(v)$.
	Therefore, if we assign each edge to the topmost bag that contains both its endpoints, then for each node $t$, the edges assigned to it must be incident to $v$, where $v\in V(G)$ is such that  $t=t(v)$.
	Since such edges have both endpoints in the bag of $t$, there can be at most $b$ of them. 
\end{proof}

\begin{lemma}\label{lem:twsplitting}
	Let $M$ be an $n\times m$ matrix with bipartite graph $G=G_M$.
	Let $\treedecomp$ be a tree decomposition of $G$ of width $b$ obtained from Lemma~\ref{lem:niceDecomp}.
	Then $M_\treedecomp$ has the following properties, for some $N=\Oh(b\cdot (n+m))$:
	\begin{enumerate}[(a)]
		\item\label{pr:dim} $M_\treedecomp$ has $n+N$ rows and $m+N$ columns,
		\item\label{pr:sparse} Every row and column of $M_\treedecomp$ has at most $b+3$ non-zero entries, and $M_\treedecomp$ has at most $|E(G)|+4N$ such entries in total,
		\item\label{pr:tptw} The bipartite graph of $M_\treedecomp$ has a tree-partition decomposition of width $b$, with a tree that is the 1-subdivision of the tree of $\treedecomp$,
		\item\label{pr:time} $M_\treedecomp$ and the decomposition can be constructed in time $\Oh(N)$,
		\item\label{pr:rank} $\rk M = \rk M_\treedecomp - N$,
		\item\label{pr:minor} $\det[M]_{I,J} = (-1)^{N \cdot (n+m)} \cdot \det[M_{\treedecomp}]_{I',J'}$ for any sets $I$ and $J$ of rows and columns of $M$ of equal size,
		where $I'$ ($J'$) is obtained from $I$ ($J$) by adding the last $N$ rows (columns) of $M_\treedecomp$ to it,\\
		(in particular $\det M_{\treedecomp} = \det M$, if $n=m$),
	\end{enumerate}
\end{lemma}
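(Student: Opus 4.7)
I would apply Lemma~\ref{lem:split} directly with the tree-split $\E := \E(\treedecomp)$ induced by the (nice) tree decomposition, so the main content of the proof is a careful accounting of sizes and sparsity, plus the construction of the promised tree-partition decomposition.

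First I would set $N := \|\E\|$ and bound it. By definition, $\|\E\| = \sum_{v \in V(G)} (|V(\E(v))| - 1) = \sum_{t \in V(\treedecomp)} |B_t| - (n+m)$. By Lemma~\ref{lem:niceDecomp}, $\treedecomp$ has at most $5(n+m)$ nodes, each of bag size at most $b+1$, giving $N \leq 5(n+m)(b+1) - (n+m) = \Oh(b(n+m))$. Property (\ref{pr:dim}) is then immediate from the definition of $M_\E$, which has $n + \sum_i (|V(\E(r_i))|-1) + \sum_j (|V(\E(c_j))|-1) = n + N$ rows and symmetrically $m+N$ columns.

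For property (\ref{pr:sparse}), I would classify rows of $M_\E$ by their type. A row $(r_i,t)$ can have two kinds of non-zero entries. The first kind, inherited from $M$, is at columns $(c_j,t')$ such that $\E(r_i c_j)=(t,t')$, which requires the edge $r_i c_j$ to have $t$ as its topmost bag in $\treedecomp$; by the guarantee of Lemma~\ref{lem:niceDecomp}, the total number of such edges incident to any given $t$ is at most $b$, so there are at most $b$ such entries. The second kind, coming from the splitting of $r_i$, gives $\pm 1$ entries at columns $(r_i, tt')$ for each tree edge $tt'$ of $\E(r_i)$ incident to $t$; since $\treedecomp$ is binary, every node has degree at most $3$ in $\E(r_i)$. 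This gives at most $b+3$ non-zeros per such row. Rows of type $(c_j,tt')$ contain only the two $\pm 1$ entries from the definition. The symmetric argument covers columns. Summing, the total number of non-zero entries is $|E(G)| + 2\|\E\|$ (each tree edge contributes two $\pm 1$ entries), which is at most $|E(G)| + 4N$.

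For property (\ref{pr:tptw}), let $\treedecomp'$ be the $1$-subdivision of the tree of $\treedecomp$, with fresh nodes $s_{tt'}$ for every edge $tt'\in E(\treedecomp)$. I would set the bag at an original node $t$ to be $\{(v,t): v\in B_t\}$ and the bag at a subdivision node $s_{tt'}$ to be $\{(v,tt'): v \in B_t\cap B_{t'}\}$. Each vertex of $M_\E$ (either of the form $(v,t)$ or of the form $(v,tt')$) lies in exactly one bag, so this is a partition. Each bag has size at most $b+1$, yielding width $\leq b+1$ (matching the claim up to an additive constant). Verifying the edge condition is the main combinatorial obstacle and goes edge by edge: the non-zero entries inherited from $M$, i.e.\ edges between $(r_i,t)$ and $(c_j,t')$ with $\E(r_ic_j)=(t,t')$, live in the same bag (since the definition of $\E(\treedecomp)$ forces $t=t'$); the $\pm 1$ entries linking $(v,t)$ with $(v,tt')$, and $(v,t')$ with $(v,tt')$, connect an original bag with its adjacent subdivision bag, as required. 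Property (\ref{pr:time}) is then immediate: walking through $\treedecomp$ once produces $M_\E$, the tree-partition decomposition, and all indexing in $\Oh(N)$ time. Finally, properties (\ref{pr:rank}) and (\ref{pr:minor}) follow at once from Lemma~\ref{lem:split} with $\|\E\|=N$.
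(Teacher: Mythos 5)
Your proof is correct and follows the same route as the paper: it applies Lemma~\ref{lem:split} with $\E = \E(\treedecomp)$, bounds $N = \|\E\|$ via the bag sizes from Lemma~\ref{lem:niceDecomp}, and reads off the sparsity and tree-partition structure directly from the definition of $M_\E$, exactly as the paper does. The only differences are cosmetic: you count the non-zero entries exactly as $|E(G)|+2N$ (a tighter count than the stated $|E(G)|+4N$), and you correctly notice that since bags of $\treedecomp$ have size up to $b+1$ the resulting tree-partition width is $b+1$ rather than $b$ --- a harmless off-by-one in the lemma's statement.
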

\begin{proof}
	Let $\E$ be the tree-split corresponding to $\treedecomp$, let $M_\treedecomp=M_\E$ and let $N=\|\E\|$.
	Property~\eqref{pr:dim} follows directly from the definition of $M_\E$.	
	Since by definition trees $\E(v)$ are subtrees of $\treedecomp$, and, by the definition of a decomposition's width, at most $b+1$ such subtrees can share each edge of this tree, we have that $N=\|\E\|=\Oh(b \cdot (n+m))$.
	
	Property~\eqref{pr:sparse} follows from the fact that every row of $M_\treedecomp$ is either indexed as $(c,tt')$ (for some column $c$ of $M$) --- in which case it has exactly two non-zero entries --- or it is indexed as $(r,t)$ for some row $r$ of $M$ and some node $t$ of $\treedecomp$ whose bag contains $r$.
	The only non-zero entries of row $(r,t)$ are: $M_\treedecomp[(r,t),(c,t)]$ for edges $rc\in E(G)$ such that  $\E(rc)=(t,t)$;
	and $M_\treedecomp[(r,t),(r,tt')]$ for neighbors $t'$ of $t$ in $\treedecomp$.
	By the guarantees of Lemma~\ref{lem:niceDecomp}, there are at most $b$ edges $rc$ such that $t$ is the topmost bag containing both endpoints, that is, such that $\E(rc)=(t,t)$.
	Furthermore, every node has at most three neighbors in the tree of $\treedecomp$, thus the row $(r,t)$ has at most $b+3$ non-zero entries in total.
	The proof is symmetric for columns.
	Similarly, the total number of non-zero entries can be bounded by $2N$ for entries in rows indexed as $(c,tt')$, $2N$ for entries in columns indexed as $(r,tt')$ and $|E(G)|$ for the remaining entries, which must be of the form
	$M_\treedecomp[(r,t),(c,t)]$ where $rc \in E(G)$ and $(t,t)=\E(rc)$.
	
	Properties~\eqref{pr:tptw} and~\eqref{pr:time} follow easily from the construction:
	the tree-partition decomposition of $M_\treedecomp$'s bipartite graph will have a bag $V_t$ for every node $t$ of $\treedecomp$ and a bag $V_{tt'}$ for every edge $tt'$ of $\treedecomp$ -- these bags are naturally assigned to the nodes of the 1-subdivision of $\treedecomp$'s tree.
	Each bag $V_t$ contains all rows and columns indexed as $(v,t)$, for some $v\in V(G)$ (contained in the bag of $t$ in $\treedecomp$) and each bag $V_{tt'}$ contains all rows and columns indexed as $(v,tt')$, for some $v\in V(G)$ (contained in both bags of $t,t'$ in $\treedecomp$).
	This defines a valid tree-partition decomposition of width at most $b$, as non-zero entries are  either of the form $M_\treedecomp[(r,t),(c,t)]$ and hence its row and column fall into the same bag $V_t$, or of the form $M_\treedecomp[(r,t),(r,tt')]$ (or symmetrically for columns) and hence its row and column fall into adjacent bags $V_t$, $V_{tt'}$.
	
	Properties~\eqref{pr:rank} and~\eqref{pr:minor} follow directly from Lemma~\ref{lem:split}.
\end{proof}

We remark that the construction of Lemma~\ref{lem:twsplitting} actually preserves the symmetry of the matrix, i.e., if $M$ is symmetric then so is $M_\E$ as well. The construction can be also easily adapted to preserve skew-symmetry, if needed.


As a final ingredient for Theorem~\ref{thm:tw-det} we need to provide a generalized $LU$-factorization for the original matrix and retrieve a maximal nonsingular submatrix.

\begin{lemma}\label{lem:twLU}
	Let $M$ be an $n \times m$ matrix over a field $\F$ and let $\E$ be a tree-split of the bipartite graph $G=G_M$.
	Given a $PLUQ$-factorization of $M_\E$ with $N$ non-zero entries in total, a generalized $LU$-factorization of $M$ with at most $N'=N+2\|\E\|+2n+2m$ non-zero entries can be constructed in $\Oh(N')$ time.
	Hence given any vector $r\in \F^m$, the system of linear equation $Mx=r$ can be solved in $\Oh(N')$ additional field operations and time.
\end{lemma}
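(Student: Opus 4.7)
The plan is to express $M$ as $S_R \cdot M_\E \cdot S_C$ for two simply structured ``summation'' matrices $S_R$ and $S_C$, and then to observe that these matrices are themselves in echelon form under the natural orderings that $M_\E$ inherits from its definition; this lets us sandwich the given $PLUQ$-factorization of $M_\E$ between $S_R$ and $S_C$ to obtain the desired generalized $LU$-factorization of $M$.

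Concretely, we define $S_R$ to be the $n \times (n + \|\E\|)$ matrix whose entry $S_R[i, (r_i, t)]$ equals $1$ for every $t \in V(\E(r_i))$ and which is zero on all columns of the form $(c_j, tt')$; we define $S_C$ symmetrically as an $(m + \|\E\|) \times m$ matrix. The identity $M = S_R \cdot M_\E \cdot S_C$ follows by a direct entry-by-entry check: for every edge $r_i c_j$ of $G_M$, the definition of $M_\E$ places the scalar $M[r_i, c_j]$ at the unique position $((r_i, t), (c_j, t'))$ with $(t,t') = \E(r_i c_j)$, which is precisely the sole term collected by the double summation over $t \in V(\E(r_i))$ and $t' \in V(\E(c_j))$ that forms the $(i,j)$-entry of $S_R M_\E S_C$; the $\pm 1$ entries of $M_\E$ in columns of the form $(r_i, tt')$ are discarded because $S_C$ is zero on those rows, and symmetrically the $\pm 1$ entries in rows $(c_j, tt')$ are discarded by $S_R$.

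The key observation is that $S_R$ is already in row-echelon form under the natural column ordering inherited from $M_\E$'s row ordering: the first $n$ rows of $M_\E$ are $(r_i, \text{root of } \E(r_i))$ for $i = 1, \ldots, n$, so row $i$ of $S_R$ has its leading $1$ at position $i$, strictly to the right of every previous row's leading $1$, and no row of $S_R$ is zero since every tree $\E(r_i)$ contains its root. Symmetrically, $S_C$ is in column-echelon form. Substituting $M_\E = P L U Q$ yields
\[
M \;=\; S_R \cdot P \cdot L \cdot U \cdot Q \cdot S_C,
\]
a product of six matrices, each of which is a permutation matrix or a matrix in row/column-echelon form, which is exactly a generalized $LU$-factorization of $M$. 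The number of non-zero entries of $S_R$ equals $\sum_i |V(\E(r_i))| \le n + \|\E\|$, and analogously $|S_C| \le m + \|\E\|$, so the total count is at most $N + (n + \|\E\|) + (m + \|\E\|) \le N'$; the construction runs in $\Oh(N')$ time, as $S_R$ and $S_C$ are read off directly from $\E$.

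For solving $Mx = r$, the cleanest route is not to apply the factorization factor by factor, but to embed $r$ into $\hat{r} \in \F^{n + \|\E\|}$ by placing $r_i$ at the coordinate $(r_i, \text{root})$ and zeros elsewhere, solve $M_\E z = \hat{r}$ using Lemma~\ref{lem:usingGaussian} on the given $PLUQ$-factorization, and return $x_j := z_{(c_j, \text{root})}$. The main step requiring verification, which we expect to be the only non-routine part of the proof, is that this procedure correctly solves $Mx = r$ or detects infeasibility. Correctness reduces to two algebraic facts already used in the proof of Lemma~\ref{lem:split}: (a) the constraint rows $(c_j, tt')$ of $M_\E z = \hat{r}$ force $z_{(c_j, t)}$ to be constant in $t$, so $x_j$ is well-defined, and summing the row equations $(r_i, t)$ of $M_\E z = \hat{r}$ over $t \in V(\E(r_i))$ telescopes the $\pm 1$ entries in columns $(r_i, tt')$ pairwise, leaving $(Mx)_i = r_i$; and (b) any solution $x$ of $Mx = r$ lifts to a solution $z$ of $M_\E z = \hat{r}$ by setting $z_{(c_j, t)} := x_j$ uniformly and determining $z_{(r_i, tt')}$ via a standard tree-flow argument on $\E(r_i)$, feasible because the total demand on the tree sums to zero. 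The embedding, $PLUQ$-solving, and extraction take time linear in $n$, $N$, and $m$ respectively, so the additional running time is $\Oh(N) = \Oh(N')$.
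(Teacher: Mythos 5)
Your factorization argument is essentially the paper's: $S_R$ and $S_C$ are exactly the matrices called $U_\E$ and $L_\E$ in the paper's proof, and both arguments verify $M = S_R M_\E S_C$ before substituting the given $PLUQ$-factorization of $M_\E$. The one simplification you make is observing that $S_R$ (respectively $S_C$) is already in row- (column-)echelon form under the column (row) ordering inherited from the rows (columns) of $M_\E$ — its leading $1$ in row $i$ sits at the column $(r_i,\mathrm{root})$, which is column $i$ — so $M = S_R\,P\,L\,U\,Q\,S_C$ is directly a valid generalized $LU$-factorization. The paper instead absorbs $P$ into $U_\E$ and compensates with fresh permutations $P'$, $Q'$, producing $P'^{-1}U'LUL'Q'^{-1}$; your route avoids constructing $P'$ and $Q'$, a small but genuine cleanup. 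For solving $Mx=r$, however, you take a genuinely different route: the paper simply back-substitutes through the factors of the generalized factorization exactly as in Lemma~\ref{lem:usingGaussian}, whereas you lift $r$ to $\hat r$ (placing $r_i$ at the root coordinate of each row-tree), solve $M_\E z = \hat r$ using the $PLUQ$-factorization of $M_\E$, and extract $x_j := z_{(c_j,\mathrm{root})}$. Your correctness argument — the constraint rows $(c_j,tt')$ force $z_{(c_j,\cdot)}$ constant on each column-tree, summing the row equations over $V(\E(r_i))$ telescopes the $\pm1$ entries to give $(Mx)_i = r_i$, and any solution $x$ lifts by a tree-flow assignment of the $z_{(r_i,tt')}$ — is sound, though these facts are only analogous to (not literally stated in) the proof of Lemma~\ref{lem:split}, so the fresh verification you sketch is needed. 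The paper's back-substitution is the more routine choice; your route has the modest advantage of not needing the generalized factorization at all for the solving step, and both give the required $\Oh(N')$ bound.
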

\begin{proof}
	Define $U_\E$ as the following matrix with $n$ rows corresponding to rows of $M$ and $n+\|\E\|$ columns corresponding to rows of $M_\E$ (indexed as $(r,t)$ or $(c,tt')$).
	The only non-zero entries of $U_\E$ are $U_\E[r,(r,t)]=1$ for rows $r$ of $M$ and nodes $t$ of the tree $\E(r)$.
	Analogously, define $L_\E$ as the following matrix with $m$ columns corresponding to columns of $M$ and $m+\|\E\|$ rows corresponding to columns of $M_\E$ (indexed as $(c,t)$ or $(r,tt')$).
	The only non-zero entries of $L_\E$ are $L_\E[(c,t),c]=1$ for columns $c$ of $M$ and nodes $t$ of the tree $\E(c)$.
	It is straightforward from the definition of $M_\E$ that $M=U_\E M_\E L_\E$.
	
	Let $P,L,U,Q$ define the given $PLUQ$-factorization of $M_\E$.
	Then $M=U_\E P L U Q L_\E$.
	Observe that for any ordering of columns of $U_\E$, the matrix can be given in row-echelon form by ordering rows so that $r_1$ comes before $r_2$ whenever the first column indexed as $(r_1,t)$ (for $t\in\E(r_1)$) comes before the first column indexed as $(r_2,t')$ (for $t'\in\E(r_2)$).
	Hence we can construct from $P$ a permutation $n\times n$ matrix $P'$ such that $U'=P' U_\E P$ is in row-echelon form.
	Similarly we can find an $m\times m$ permutation matrix $Q'$ and a reordering $L'$ of $L_\E$ such that $L'=QL_\E Q'$ is in column-echelon form.
	Then $M=P'^{-1} U' L U L' Q'^{-1}$ gives a generalized $LU$-factorization.
	The number of non-zero entries in $L_\E$ and in $U_\E$ is equal to $n+\|\E\|$ and $m+\|\E\|$, respectively. Hence the total number of non-zero entries in the factorization is at most $N'=2n+\|E\|+N+\|E\|+2m$.
	
	
	To solve a system of linear equations $Mx = r$ with input vector $r$, we proceed with each matrix of the generalized factorization just as in Lemma~\ref{lem:usingGaussian}, either using back-substitution or permuting entries, using a total number of field operations equal to twice the number of non-zero entries and $\Oh(N')$ time.
\end{proof}

Given Lemma~\ref{lem:twsplitting}, from an $n\times m$ matrix $M$ and a tree decomposition of $G_M$ of width at most $b$, we can construct an $(n+N)\times(m+N)$ matrix $M_\treedecomp$ with a tree-partition decomposition of width at most $b$, in time $\Oh(N)$, for $N=\Oh((n+m)\cdot b)$.
Then, by Lemma~\ref{lem:stw-ordering} we can find a completion $H$ of $G_{M_\treedecomp}$ with at most $N'=(n+m+2N)\cdot b$ edges and a strong $H$-ordering of width at most $2b$.
Therefore, Gaussian elimination can be performed on $M_\treedecomp$ using $\Oh(N'\cdot b)=\Oh((n+m)\cdot b^3)$ field operations and time by Lemma~\ref{lem:gaussian}, yielding matrices with at most $2N'+n+m+2N=\Oh((n+m)\cdot b^2)$ non-zero entries by Lemma~\ref{lem:invariant}.
This allows us to retrieve the rank and determinant of $M_\treedecomp$, by Lemma~\ref{lem:usingGaussian}, and hence also of the original matrix $M$, by Lemma~\ref{lem:twsplitting}.
We can also retrieve a $PLUQ$-factorization of $M_\treedecomp$ and hence, by Lemma~\ref{lem:twLU}, a generalized $LU$-factorization of $M$ with $\Oh((n+m)\cdot b^2)$ non-zero entries, which allows us to solve a system of linear equations $Mx=r$ with given $r$ in $\Oh((n+m)\cdot b^2)$ additional field operations and time. 
This concludes the proof of Theorem~\ref{thm:tw-det}.

\restatetwdet*

\section{Maximum matching}\label{sec:matching}

\newcommand{\Vars}{\mathcal{E}}
\newcommand{\Z}{\mathbb{Z}}

\paragraph*{Computation model.} As we have already mentioned in the introduction, in this section we will consider arithmetic operations in a field $\F$ of size $\Oh(n^c)$ for some constant $c$. Although any such field would suffice, let us focus our attention on $\F=\F_p$ for some prime $p=\Oh(n^c)$. Then an element of $\F$ can be stored in a constant number of machine words of length $\log n$, and the following arithmetic operations in $\F_p$ can be easily implemented in constant time, assuming constant-time arithmetic on machine words of length $\log n$: addition, subtraction, multiplication, testing versus zero. The only operation that is not easily implementable in constant time is inversion, and hence also division. In a standard RAM machine, we need to apply the extended Euclid's algorithm, which takes $\Oh(\log n)$ time. However, in real-life applications, operations such as arithmetic in a field are likely to be heavily optimized, and hence we find it useful to separate this part of the running time.

In the algorithms in the sequel we state the consumed time resources in terms of {\em{time}} (standard operations performed by the algorithm) and {\em{field operations}}, each time meaning operations in some $\F_p$ for a polynomially bounded $p$. 

\paragraph*{Computing the size of a maximum matching.} To address the problem of computing the maximum matching in a graph, we need to recall some classic results on expressing this problem algebraically. Recall that a matrix $A$ is called {\em{skew-symmetric}} if $A=-A^T$. For a graph $G$ with vertex set $\{v_1,v_2,\ldots,v_n\}$, let $\Vars=\{x_{ij}\,\colon\, i<j,\, v_iv_j\in E(G)\}$ be a set of indeterminates associated with edges of $G$. With $G$ we can associate the {\em{Tutte matrix of $G$}}, denoted $\tilde{A}(G)$ and defined as follows. The matrix $\tilde{A}(G)=[a_{ij}]_{1\leq i,j\leq n}$ is an $n\times n$ matrix over the field $\Z(\Vars)$, i.e., the field of fractions of multivariate polynomials over $\Vars$ with integer coefficients. Its entries are defined as follows:
$$a_{ij}=\begin{cases} x_{ij} & \textrm{if $i<j$ and $v_iv_j\in E(G)$;}\\-x_{ji} & \textrm{if $i>j$ and $v_iv_j\in E(G)$;}\\ 0 & \textrm{otherwise.}\end{cases}$$
Clearly $\tilde{A}(G)$ is skew-symmetric. The following result is due to Tutte~\cite{Tutte} (for perfect matchings) and Lov\'asz~\cite{Lovasz79} (for maximum matchings).

\begin{theorem}[\cite{Tutte,Lovasz79}]\label{thm:tutte}
$\tilde{A}(G)$ is nonsingular if and only if $G$ has a perfect matching. Moreover, $\rk \tilde{A}(G)$ is equal to twice the maximum size of a matching in $G$.
\end{theorem}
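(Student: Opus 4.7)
The plan is to split the theorem into its two halves and reduce each to a classical fact about skew-symmetric matrices. The nonsingularity statement will come from the Pfaffian identity, while the rank formula will follow by applying the nonsingularity statement to principal submatrices, which turn out to be Tutte matrices of induced subgraphs.

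For the first half, I would invoke the fact that every skew-symmetric $n\times n$ matrix $A$ over a field of characteristic zero satisfies $\det(A)=\mathrm{Pf}(A)^2$, where
\[
\mathrm{Pf}(A)=\sum_{M}\mathrm{sgn}(M)\prod_{\{i,j\}\in M,\ i<j}a_{ij}
\]
is summed over perfect matchings $M$ of $\{1,\dots,n\}$ (in particular, $\det A=0$ whenever $n$ is odd, while $G$ clearly has no perfect matching in that case). Applied to $\tilde{A}(G)$, any term corresponding to a matching that uses a non-edge of $G$ contains a factor $a_{ij}=0$ and vanishes, while for each perfect matching $M$ of $G$ the corresponding term is a signed monomial $\pm\prod_{\{i,j\}\in M,\ i<j}x_{ij}$. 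The variables $x_{ij}$ in $\Vars$ are algebraically independent and distinct matchings give distinct monomials, so no cancellation occurs; hence $\mathrm{Pf}(\tilde{A}(G))$ is a non-zero element of $\Z[\Vars]$ if and only if $G$ admits a perfect matching, and the same conclusion transfers to $\det(\tilde{A}(G))$.

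For the rank formula, I would use a second standard fact: the rank of a skew-symmetric matrix over a characteristic-zero field equals the size of its largest nonsingular principal submatrix. For any $S\subseteq V(G)$ the principal submatrix $[\tilde{A}(G)]_{S,S}$ is literally $\tilde{A}(G[S])$, the Tutte matrix of the induced subgraph, so by the first part it is nonsingular if and only if $G[S]$ admits a perfect matching. Consequently $\rk\tilde{A}(G)$ equals the largest $|S|$ such that $G[S]$ contains a perfect matching. The upper bound $\rk\tilde{A}(G)\leq 2\cdot\mathrm{mm}(G)$ is immediate since $|S|$ is then necessarily twice a matching size, and the lower bound is witnessed by taking $S$ to be the set of endpoints of any maximum matching.

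The main obstacle in a fully self-contained treatment is the Pfaffian identity itself, whose justification requires a separate combinatorial sign computation. This can be bypassed by expanding $\det(\tilde{A}(G))$ directly as a signed sum over permutations and grouping terms by cycle type: permutations with a fixed point vanish because of the zero diagonal; permutations containing a cycle of odd length at least $3$ cancel in pairs under reversal of that cycle, since skew-symmetry introduces a sign from each of the odd number of reversed entries while the cycle structure (and hence the permutation sign) is preserved; and the surviving involutions with no fixed points are exactly in (sign-preserving, up to a global sign) bijection with perfect matchings of $G$. Either route yields the theorem; the Pfaffian route is shorter if one is willing to cite the squaring identity.
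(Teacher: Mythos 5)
The paper does not prove this theorem --- it is cited as a classical result of Tutte and Lov\'asz --- so there is no in-text argument to compare against; I evaluate your proof on its own. Your main (Pfaffian) route and your rank argument are both correct and are the standard ones. The identity $\det A=\mathrm{Pf}(A)^2$ together with the observation that $\mathrm{Pf}(\tilde{A}(G))=\sum_M\pm\prod_{e\in M}x_e$ is a sum of distinct monomials in algebraically independent indeterminates settles the nonsingularity criterion. For the rank statement, the fact that a skew-symmetric matrix of rank $r$ has a nonsingular $r\times r$ \emph{principal} submatrix is exactly what the Frobenius theorem (later quoted in Section~\ref{sec:matching}, and used there in the same way via Corollary~\ref{cor:base-matching}) delivers, and the identification $[\tilde{A}(G)]_{S,S}=\tilde{A}(G[S])$ is immediate; your max/min bookkeeping at the end is correct.

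Your sketched ``elementary'' alternative, however, has a small but real gap. Reversing an odd cycle of length $\geq 3$ flips the sign of the product of $a_{i,\sigma(i)}$ while preserving $\operatorname{sgn}(\sigma)$, so those terms cancel in pairs; but reversing an \emph{even} cycle of length $\geq 4$ produces an even number of sign flips and does \emph{not} cancel anything. Consequently the survivors are all permutations whose cycles have even length, not merely the fixed-point-free involutions, and your closing sentence does not by itself establish the equivalence. The gap is easy to close: (i) if any surviving permutation $\sigma$ gives a non-zero term, then taking alternate edges along each even cycle of $\sigma$ yields a perfect matching of $G$, so $\det\tilde{A}(G)\neq 0$ implies a perfect matching exists; and (ii) for a perfect matching $M$, the involution $\sigma_M$ contributes the monomial $\prod_{e\in M}x_e^2$, which is squarefull in every variable it uses and therefore cannot arise from any permutation containing a cycle of length $\geq 4$ --- so it survives with coefficient $+1$ and forces $\det\tilde{A}(G)\neq 0$. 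With (i) and (ii) added, the alternative route is also complete.
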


From Theorem~\ref{thm:tutte} we can derive our first result on finding the cardinality of a maximum matching, that is, Theorem~\ref{thm:matching-size}, which we recall below.

\restatesize*
\begin{proof}
Arbitrarily enumerate $V(G)$ as $\{v_1,v_2,\ldots,v_n\}$. Let $\tilde{A}(G)$ be the Tutte matrix of $G$. Let $p$ be a prime with $n^{c+1}\leq p< 2\cdot n^{c+1}$, and let $\F=\F_{p}$ ($p$ can be found in polylogarithmic expected time by iteratively sampling a number and checking whether it is prime using the AKS algorithm). 

Construct matrix $A(G)$ from $\tilde{A}(G)$ by substituting each indeterminate from $\Vars$ with a value chosen uniformly and independently at random from $\F$. Since the determinant of the largest nonsingular square submatrix of $\tilde{A}(G)$ is a polynomial over $\Vars$ of degree at most $n$, from Schwarz-Zippel lemma it follows that this submatrix remains nonsingular in $A(G)$ with probability at least $1-\frac{n}{p}\geq 1-\frac{1}{n^c}$. Moreover, for any square submatrix of $A(G)$ that is nonsingular, the corresponding submatrix of $\tilde{A}(G)$ is nonsingular as well. Hence, with probability at least $1-\frac{1}{n^c}$ we have that matrix $A(G)$ has the same rank as $\tilde{A}(G)$, and otherwise the rank of $A(G)$ is smaller than that of $\tilde{A}(G)$.

Let $H$ be the bipartite graph $G_{A(G)}$ associated with matrix $A(G)$. Then $H$ has $2n$ vertices: for every vertex $u$ of $G$, $H$ contains a {\em{row-copy}} and a {\em{column-copy}} of $u$. Based on a decomposition of $G$ of width $k$, it is easy to construct a tree decomposition of $H$ of width at most $2k+1$ as follows: the decomposition has the same tree, and in each bag we replace each vertex of $G$ by both its copies in $H$. The construction of $H$ and its tree decomposition takes time $\Oh(kn)$.

We now use the algorithm of Theorem~\ref{thm:tw-det} to compute the rank of $A(G)$; this uses $\Oh(k^3\cdot n)$ time and field operations. Supposing that $A(G)$ indeed has the same rank as $\tilde{A}(G)$ (which happens with probability at least $1-\frac{1}{n^c}$), we have by Theorem~\ref{thm:tutte} that the size of a maximum matching in $G$ is equal to the half of this rank, so we report this value. In the case when $\rk A(G)<\rk \tilde{A}(G)$, which happens with probability at most $\frac{1}{n^c}$, the algorithm will report a value smaller than the maximum size of a matching in $G$.
\end{proof}

\paragraph*{Reconstructing a maximum matching.} Theorem~\ref{thm:matching-size} gives only the maximum size of a matching, but not the matching itself. To recover the maximum matching we need some extra work. 

First, we need to perform an analogue of the splitting operation from Section~\ref{sec:gaussian} in order to reduce the case of tree decompositions to tree-partition decompositions. The reason is that Theorem~\ref{thm:tw-det} does not give us a maximum nonsingular submatrix of the Tutte matrix of the graph, which we need in order to reduce finding a maximum matching to finding a perfect matching in a subgraph. 

\newcommand{\off}{\Lambda}

\begin{lemma}\label{lem:reduction-tw-tpw}
There exists an algorithm that given a graph $G$ together with its clean tree decomposition $(\Tt,\{B_x\}_{x\in V(\Tt)})$ of width at most $k$, constructs another graph $G'$ together with its tree-partition decomposition $(\Tt',\{C_x\}_{x\in V(\Tt')})$ of width at most $k$, such that the following holds:
\begin{enumerate}[(i)]
\item\label{pr:time2} The algorithm runs in time $\Oh(kn)$;
\item\label{pr:sz} $|V(G')|\leq 2kn$;
\item\label{pr:GtoGp} Given a matching $M$ in $G$, one can construct in $\Oh(kn)$ time a matching $M'$ in $G'$ with $|M'|=|M|+\off/2$, where $\off=|V(G')|-|V(G)|$;
\item\label{pr:GptoG} Given a matching $M'$ in $G'$, one can construct in $\Oh(kn)$ time a matching $M$ in $G$ with $|M|\geq |M'|-\off/2$.
\end{enumerate}
\end{lemma}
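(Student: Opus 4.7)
Normalize $(\T,\{B_x\})$ via Lemma~\ref{lem:niceDecomp} so that $\T$ is rooted and every $uv\in E(G)$ has a unique topmost bag $t(uv)$ containing both endpoints. For each $v\in V(G)$ let $T_v$ be the subtree of $\T$ induced by the bags containing $v$, and set $n_v:=|V(T_v)|$. Let $T^*_v$ denote the 1-subdivision of $T_v$, with a \emph{copy} vertex $v_t$ per $t\in V(T_v)$ and a \emph{midpoint} $w^v_{tt'}$ per edge $tt'\in E(T_v)$, linked by \emph{subdivision edges} $v_tw^v_{tt'}$ and $w^v_{tt'}v_{t'}$. The graph $G'$ has vertex set $\bigsqcup_v V(T^*_v)$, and its edges are all subdivision edges together with one \emph{image edge} $u_{t(uv)}v_{t(uv)}$ per edge $uv\in E(G)$. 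The tree-partition decomposition of $G'$ uses the 1-subdivision of the tree of $\T$: at every original node $t$ place the bag $C_t=\{v_t:v\in B_t\}$, and at every new subdivision node $s_{tt'}$ place $C_{s_{tt'}}=\{w^v_{tt'}:v\in B_t\cap B_{t'}\}$. These bags are pairwise disjoint and cover $V(G')$; every image edge lies inside a single $C_t$ while every subdivision edge joins adjacent bags $C_t$ and $C_{s_{tt'}}$, so the width is $\Oh(k)$. Since $\sum_v n_v=\sum_t|B_t|=\Oh(kn)$, we get $|V(G')|=\Oh(kn)$, and the construction runs in $\Oh(kn)$ time, settling items (i)--(ii) together with the decomposition.

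\textbf{Structural fact and direction (iii).} In $T^*_v$ every matching edge is incident to a midpoint, so $\nu(T^*_v)\le n_v-1$; conversely, for any designated copy $v_{t^\ast}$, rooting $T_v$ at $t^\ast$ and pairing each midpoint with its child-side copy yields a matching of size exactly $n_v-1$ leaving only $v_{t^\ast}$ exposed. More generally, after removing any $j\ge 1$ copies of $v$, the residual bipartite forest has maximum matching exactly $n_v-j$: the upper bound is trivial, and a Hall argument (using that the deleted set is a nonempty subset of the tree, so every component of the induced subgraph on the remaining copies contributes at least one boundary edge in $T_v$) gives achievability. For direction (iii), given $M$ in $G$, insert into $M'$ every image edge $u_{t(uv)}v_{t(uv)}$ for $uv\in M$ and then, independently inside each $T^*_v$, add the $n_v-1$ interior edges from the structural fact, designating as $v_{t^\ast}$ the endpoint of the incident image edge if $v$ is $M$-saturated and an arbitrary copy otherwise. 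This gives $|M'|=|M|+\sum_v(n_v-1)=|M|+\off/2$ in $\Oh(kn)$ time.

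\textbf{Direction (iv) and the main obstacle.} Given $M'$, partition its edges into image edges and subdivision edges and, for each $v$, let $S_v$ denote the set of copies of $v$ saturated by image edges. Process vertices one by one: whenever $|S_v|\ge 2$, keep one arbitrary copy $v_{t^\ast}\in S_v$, delete from $M'$ the image edges saturating the other $|S_v|-1$ copies of $v$, and replace the interior matching of $T^*_v$ by the $(n_v-1)$-edge matching from the structural fact with $v_{t^\ast}$ designated. Because $T^*_v$ is vertex-disjoint from every other $T^*_u$ and from the image-edge set, this replacement introduces no conflicts; the $|S_v|-1$ removed image edges are exactly compensated by the $|S_v|-1$ extra interior edges, so $|M'|$ stays the same while $|S_v|$ drops to $1$ and the remaining $|S_u|$'s can only shrink further. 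Each $v$ is handled in $\Oh(n_v)$ time by a single depth-first sweep of $T_v$, yielding total time $\Oh(\sum_v n_v)=\Oh(kn)$. After all rounds every $|S_v|\le 1$, so the remaining image edges project injectively onto a matching $M$ of $G$ with $|M|=|M'|-|\text{subdivision edges in }M'|\ge|M'|-\off/2$, using that there are only $\off/2$ midpoints in total. The main subtlety is the equality $\nu(T^*_v\setminus\{j\text{ copies}\})=n_v-j$ for $j\ge 1$ that underlies the size-preserving swap: this strengthening of the standard 1-subdivision fact is what guarantees that replacing the interior matching truly recovers the edges lost by pruning image edges, and its proof needs the Hall-style argument mentioned above rather than the direct tree rooting that sufficed for the pristine case.
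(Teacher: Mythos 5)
Your proof is correct and follows essentially the same route as the paper's: replace each vertex $v$ by the 1-subdivision $T^*_v$ of the subtree $\Tt[v]$ of the tree decomposition, map edges of $G$ to "image edges" inside a common bag, glue the pieces into a tree-partition decomposition on the 1-subdivision of $\Tt$, and prove both directions of the matching correspondence by filling each $T^*_v$ with a near-perfect matching that leaves a designated copy exposed, then (for direction (iv)) locally swapping to ensure each $T^*_v$ is incident to at most one image edge. Two remarks.

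First, you misdiagnose the "main subtlety." The equality $\nu\bigl(T^*_v \setminus \{j\text{ copies}\}\bigr) = n_v - j$ for all $j\ge 1$ is never needed. For direction (iv), the swap only requires (a) the trivial upper bound $m_v \le n_v - |S_v|$ on the pre-swap interior matching (since it lives on the $n_v - |S_v|$ copies not in $S_v$, one side of a bipartite graph), and (b) the $j=1$ achievability, i.e., the plain tree-rooting construction yielding $n_v - 1$ interior edges with one designated exposed copy — exactly the paper's Claim~\ref{cl0}. Your statement that the $|S_v|-1$ removed image edges are "exactly compensated" (hence $|M'|$ stays the same) presupposes the old interior matching was maximum, which need not hold; what is true, and all that is needed, is that $|M'|$ does not decrease. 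So the Hall-style argument for $j\ge 2$ is a detour, and labeling it the crux is misleading — the standard $j=1$ fact suffices.

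Second, normalizing via Lemma~\ref{lem:niceDecomp} is a clean way to get a unique image edge per original edge (the paper instead keeps one image edge $(u,t)(v,t)$ for every bag $t$ containing both endpoints, which costs nothing extra in the analysis), but it is not free on the size bound: Lemma~\ref{lem:niceDecomp} can blow the number of bags up to $5n$, so $\sum_v n_v$ and hence $|V(G')|$ grows by a constant factor beyond the stated $2kn$. The paper's own count from a clean decomposition with $\le n$ bags gives $(2k+1)n$, which is already slightly over $2kn$, so both analyses only deliver $\Oh(kn)$ rather than the precise constant in item~(\ref{pr:sz}) — harmless for the downstream theorem, but worth noting since your variant makes the constant strictly worse. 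Similarly, both constructions actually give bags of size up to $k+1$ (not $k$) in the tree-partition decomposition, so stating the width as $\Oh(k)$ as you do is accurate, while the lemma's "at most $k$" is an off-by-one that neither proof truly achieves.
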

\begin{proof}
The vertex set of $G'$ consists of the following vertices:
\begin{itemize}
\item For every vertex $u\in V(G)$ and every node $t\in V(\Tt)$ with $u\in B_t$, create a vertex $(u,t)$;
\item For every vertex $u\in V(G)$ and every pair of adjacent nodes $t,t'\in V(\Tt)$ with $u\in B_t\cap B_{t'}$, create a vertex $(u,tt')$.
\end{itemize}
The edge set of $G'$ is defined as follows:
\begin{itemize}
\item For every vertex $u\in V(G)$ and every pair of adjacent nodes $t,t'\in V(\Tt)$ with $u\in B_t\cap B_{t'}$, add edges $(u,tt')(u,t)$ and $(u,tt')(u,t')$.
\item For every edge $uv\in E(G)$ and every node $t\in V(\Tt)$ with $\{u,v\}\subseteq B_t$, create an edge $(u,t)(v,t)$. 
\end{itemize}
This finishes the description of $G'$. Construct $\Tt'$ from $\Tt$ by subdividing every edge of $\Tt$ once, i.e., $\Tt'$ is the $1$-subdivision of $\Tt$. Let the subdivision node used to subdivide edge $tt'$ be also called $tt'$. Place all the vertices of $G'$ of the form $(u,t)$ in a bag $C_t$, for each $t\in V(\Tt)$, and place all the vertices of $G'$ of the form $(u,tt')$ in a bag $C_{tt'}$, for each $tt'\in E(\Tt)$. Then it readily follows that $(\Tt',\{C_x\}_{x\in V(\Tt')})$ is a tree-partition decomposition of $G'$ of width at most $k$. Moreover, since $\Tt$ was clean, it has at most most $n$ nodes
, and hence also at most $n-1$ edges. Hence $G'$ has at most $2kn$ vertices, and~\eqref{pr:sz} is satisfied. It is straightforward to implement the construction above in time $\Oh(kn)$, and hence~\eqref{pr:time2} also follows. We are left with showing how matchings in $G$ and $G'$ can be transformed one to the other.

Before we proceed, let us introduce some notation. Let $W^V$ be the set of all the vertices of $G'$ of the form $(u,t)$ for some $t\in V(\Tt)$, and let $W^E$ be the set of all the vertices of $G'$ of the form $(u,tt')$ for some $tt'\in E(\Tt)$. Then $(W^V,W^E)$ is a partition of $V(G')$. For some $u\in V(G)$, let $T_u$ be the subgraph of $G'$ induced by all the vertices with $u$ on the first coordinate, i.e., of the form $(u,t)$ or $(u,tt')$. It follows that $T_u$ is a tree. Moreover, if we denote $W^V_u=W^V\cap V(T_u)$ and $W^E_u=W^E\cap V(T_u)$, then $(W^V_u,W^E_u)$ is a bipartition of $T_u$, all the vertices of $W^E_u$ have degrees $2$ in $T_u$ and no neighbors outside $T_u$, and $|W^V_u|=|W^E_u|+1$. Then, we have that 
$$|V(G')|=\sum_{u\in V(G)} |W^V_u|+|W^E_u|=\sum_{u\in V(G)} (2|W^E_u|+1)=|V(G)|+2\sum_{u\in V(G)} |W^E_u|,$$
so $\off/2=\sum_{u\in V(G)} |W^E_u|$.

\begin{claim}\label{cl0}
For each $u\in V(G)$ and each $w\in W^V_u$, there is a matching $M_{u,w}$ in $T_u$ that matches all the vertices of $T_u$ apart from $w$. Moreover, $M_{u,w}$ can be constructed in time $\Oh(|V(T_u)|)$.
\end{claim}
\begin{proof}
Root $T_u$ in $w$. This imposes a parent-child relation on the vertices of $T_u$, and in particular each vertex of $W^E_u$ has exactly one child, which of course belongs to $W^V_u$. Construct $M_{u,w}$ by matching each vertex of $W^E_u$ with its only child. Then only the root $w$ is left unmatched in $T_u$.
\cqed\end{proof}

Now, the first direction is apparent.

\begin{claim}\label{cl1}
Condition~\eqref{pr:GtoGp} holds.
\end{claim}
\begin{proof}
Let $M$ be a given matching in $G$. Construct $M'$ as follows: For every $uv\in M$, pick an arbitrary node $t$ of $\Tt$ with $\{u,v\}\subseteq B_t$, and add $(u,t)(v,t)$ to $M'$. After this, from each subtree $T_u$, for $u\in V(G)$, at most one vertex is matched so far, and it must belong to $W^V_u$. Let $w_u\in V(T_u)$ be this matched vertex in $T_u$; in case no vertex of $T_u$ is matched so far, we take $w_u$ to be an arbitrary vertex of $W^V_u$. For each $u\in V(G)$, we add to $M'$ the matching $M_{u,w_u}$, which has cardinality $|W^E_u|$; this concludes the construction of $M'$. It is clear that $M'$ is a matching in $G'$, and moreover 
$$|M'|=|M|+\sum_{u\in V(G)} |W^E_u| = |M|+\off/2.$$
It is easy to see that a straightforward construction of $M'$ takes time $\Oh(kn)$.
\cqed\end{proof}

We now proceed to the proof of the second direction.

\begin{claim}\label{cl2}
Condition~\eqref{pr:GptoG} holds.
\end{claim}
\begin{proof}
Let $M'$ be a given matching in $G'$. We first transform it to a matching $M''$ in $G'$
such that $|M''|\geq |M'|$ and for each $u\in V(G)$ at most one vertex from $T_u$ is matched in $M'$ with a vertex outside $T_u$.
If for some vertex $u$ this is not true, we arbitrarily select one edge $ww'$ of those in the matching with exactly one endpoint, say $w$, in $T_u$ (and hence in $W^V_u$).
We then remove all edges incident to $T_u$ from the matching except $ww'$ and add $M_{u,w}\subseteq E(T_u)$ from Claim~\ref{cl0} to the matching.
We removed at most $|W^V_u|-1$ edges, as every edge of a matching incident to $T_u$ must contain a vertex of $W^V_u$.
However, we added $|W^E_u|=|W^V_u|-1$ edges, hence the matching could only increase in size.

Construct the matching $M$ in $G$ as follows: Inspect all the edges of $M''$, and for each edge of the form $(u,t)(v,t)\in M''$ for some $u,v\in V(G)$, add the edge $uv$ to $M$. The 
construction of $M''$ ensures that $M$ is indeed a matching in $G$, and no edge of $G$ is added twice to $M$. Moreover, for each $u\in V(G)$, we have that $M''\cap E(T_u)$ is a matching in $T_u$, so in particular $|M''\cap E(T_u)|\leq |W^E_u|$. Hence, from the construction we infer that
$$|M'|\leq |M''|=|M|+\sum_{u\in V(G)} |M''\cap E(T_u)|\leq |M|+\sum_{u\in V(G)} |W^E_u|=|M|+\off/2.$$
A straightforward implementation of the construction of $M$ runs in time $\Oh(kn)$.
\cqed\end{proof}

Claims~\ref{cl1} and~\ref{cl2} conclude the proof.
\end{proof}

Next, we need to recall how finding a maximum matching can be reduced to finding a perfect matching using a theorem of Frobenius and the ability to efficiently compute a largest nonsingular submatrix; this strategy was used e.g. by Mucha and Sankowski in~\cite{MuchaS06}.

\begin{theorem}[Frobenius Theorem]
Suppose $A$ is an $n\times n$ skew-symmetric matrix, and suppose $X,Y\subseteq [n]$ are such that $|X|=|Y|=\rk A$. Then
$$\det [A]_{X,X}\cdot \det [A]_{Y,Y} = (-1)^{|X|}\cdot \left(\det [A]_{X,Y}\right)^2.$$
\end{theorem}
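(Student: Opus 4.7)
The plan is to derive the claim from a rank factorization of $A$, which reduces it to a short identity about products of square determinants; skew-symmetry of $A$ is then invoked only at the very end, to relate two of the mixed minors.

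Let $r = \rk A = |X| = |Y|$. First, I would use the standard rank factorization $A = BC$, with $B \in \F^{n \times r}$ and $C \in \F^{r \times n}$ both of full rank $r$ (obtained, e.g., by taking the columns of $B$ to form a basis of the column space of $A$ and reading off $C$ from the coordinates of the columns of $A$ in that basis). The crucial observation is that for any index set $Z \subseteq [n]$ with $|Z| = r$, the factorization descends to the submatrix: $[A]_{X,Z} = [B]_{X,\cdot}\cdot[C]_{\cdot,Z}$ is a product of two square $r \times r$ matrices, so multiplicativity of determinants gives $\det[A]_{X,Z} = \det[B]_{X,\cdot}\cdot\det[C]_{\cdot,Z}$, and similarly with $X$ replaced by $Y$.

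Applying this formula to the four pairs $(X,X)$, $(Y,Y)$, $(X,Y)$, $(Y,X)$, the product on each side of the desired identity becomes a product of the same four quantities $\det[B]_{X,\cdot}, \det[B]_{Y,\cdot}, \det[C]_{\cdot,X}, \det[C]_{\cdot,Y}$, only reordered. Comparing them directly yields the general Sylvester-type identity
$$\det[A]_{X,X}\cdot\det[A]_{Y,Y} \;=\; \det[A]_{X,Y}\cdot\det[A]_{Y,X},$$
valid for \emph{any} matrix of rank $r$, not just skew-symmetric ones. This is the main content; the skew-symmetry has not yet been used.

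Finally, I would use skew-symmetry to eliminate $\det[A]_{Y,X}$. From $A^T = -A$ we have the entrywise equality $[A]_{Y,X} = -\bigl([A]_{X,Y}\bigr)^T$ between $|X|\times|X|$ matrices, so $\det[A]_{Y,X} = (-1)^{|X|}\det[A]_{X,Y}$, and substituting into the identity above produces exactly the claimed formula. I do not foresee a real obstacle; the only step that wants a touch of care is the sign bookkeeping in the last line, where the transpose contributes nothing but the scalar $-1$ in every entry contributes $(-1)^{|X|}$. As a sanity check, $A$ skew-symmetric forces $\rk A$ to be even over any field of characteristic $\neq 2$, so in fact $(-1)^{|X|}=+1$ here, though the proof does not need this observation.
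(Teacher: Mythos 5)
The paper does not supply a proof of the Frobenius theorem---it cites it as a known classical result---so there is no in-paper argument to compare against. Your derivation is correct: the rank factorization $A=BC$ (with $B$ of size $n\times r$ and $C$ of size $r\times n$, $r=\rk A$) gives $[A]_{X,Z}=[B]_{X,\cdot}[C]_{\cdot,Z}$ for every $r$-element $Z$, whence $\det[A]_{X,Z}=\det[B]_{X,\cdot}\cdot\det[C]_{\cdot,Z}$; multiplying the four resulting identities yields the Sylvester-type relation $\det[A]_{X,X}\cdot\det[A]_{Y,Y}=\det[A]_{X,Y}\cdot\det[A]_{Y,X}$ for any rank-$r$ matrix, and then $[A]_{Y,X}=-\bigl([A]_{X,Y}\bigr)^{T}$ gives $\det[A]_{Y,X}=(-1)^{|X|}\det[A]_{X,Y}$, completing the claim. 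The sign bookkeeping is right, and your side remark that $(-1)^{|X|}=1$ because the rank of a skew-symmetric matrix is even (in characteristic $\neq 2$) is a correct and useful sanity check, though as you note it plays no role in the argument.
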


\begin{corollary}\label{cor:base-matching}
Let $G$ be a graph with vertex set $\{v_1,v_2,\ldots,v_n\}$, and suppose $[\tilde{A}(G)]_{X,Y}$ is a maximal nonsingular submatrix of $\tilde{A}(G)$.
Then $\overline{X}=\{v_i\colon i\in X\}$ is a subset of $V(G)$ of maximum size for which $G[\overline{X}]$ contains a perfect matching.
\end{corollary}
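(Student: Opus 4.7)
The plan is to apply the Frobenius theorem directly to the maximal nonsingular submatrix $[\tilde{A}(G)]_{X,Y}$, and then invoke Theorem~\ref{thm:tutte} on the induced subgraph $G[\overline{X}]$. First, since $[\tilde{A}(G)]_{X,Y}$ is nonsingular we have $|X|=|Y|=\rk \tilde{A}(G)$ and $\det [\tilde{A}(G)]_{X,Y}\neq 0$. Plugging into the Frobenius identity
\[
\det [\tilde{A}(G)]_{X,X}\cdot \det [\tilde{A}(G)]_{Y,Y} = (-1)^{|X|}\cdot \bigl(\det [\tilde{A}(G)]_{X,Y}\bigr)^{2},
\]
we conclude that the right-hand side is a nonzero element of $\Z(\Vars)$, hence both principal minors $\det [\tilde{A}(G)]_{X,X}$ and $\det [\tilde{A}(G)]_{Y,Y}$ are nonzero as well.

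Next I would observe that $[\tilde{A}(G)]_{X,X}$ is, by construction, literally the Tutte matrix of the induced subgraph $G[\overline{X}]$: its rows and columns are indexed by vertices of $\overline{X}$, and its $(i,j)$-entry is $\pm x_{ij}$ precisely when $v_iv_j\in E(G[\overline{X}])$, and $0$ otherwise. Since $\det \tilde{A}(G[\overline{X}])=\det [\tilde{A}(G)]_{X,X}\neq 0$, Theorem~\ref{thm:tutte} tells us that $G[\overline{X}]$ admits a perfect matching.

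Finally, I would verify the maximality of $|\overline{X}|$. Let $\mu(G)$ denote the maximum matching size in $G$. By Theorem~\ref{thm:tutte}, $\rk \tilde{A}(G)=2\mu(G)$, so $|\overline{X}|=|X|=2\mu(G)$. If some $S\subseteq V(G)$ with $|S|>|\overline{X}|$ had $G[S]$ containing a perfect matching, then $G$ would contain a matching of size $|S|/2 > \mu(G)$, contradicting the definition of $\mu(G)$. Thus $\overline{X}$ is of maximum possible cardinality among subsets inducing a subgraph with a perfect matching.

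There is no real obstacle here — the corollary is a clean combination of Frobenius with Tutte/Lov\'asz. The only subtlety worth stating carefully is the identification of the principal submatrix $[\tilde{A}(G)]_{X,X}$ with the Tutte matrix of $G[\overline{X}]$, which follows immediately from the definition of $\tilde{A}(G)$ and does not require any further computation.
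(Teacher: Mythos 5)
Your proof is correct and follows essentially the same route as the paper: apply the Frobenius theorem to deduce that both principal minors $[\tilde{A}(G)]_{X,X}$ and $[\tilde{A}(G)]_{Y,Y}$ are nonsingular, identify $[\tilde{A}(G)]_{X,X}$ with the Tutte matrix of $G[\overline{X}]$, invoke Theorem~\ref{thm:tutte} to get a perfect matching, and conclude maximality from $|\overline{X}|=\rk\tilde{A}(G)=2\mu(G)$. Your write-up merely spells out the maximality argument a bit more explicitly than the paper does.
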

\begin{proof}
By the assumption about $X$ we have that $|X|=\rk \tilde{A}(G)$. Since
$[\tilde{A}(G)]_{X,Y}$ is nonsingular,
by the Frobenius Theorem both $[\tilde{A}(G)]_{X,X}$ and $[\tilde{A}(G)]_{Y,Y}$ are nonsingular as well. Observe that $[\tilde{A}(G)]_{X,X}=\tilde{A}(G[\overline{X}])$, so by Theorem~\ref{thm:tutte} we infer that $G[\overline{X}]$ contains a perfect matching. Moreover, $\overline{X}$ has to be the largest possible set with this property, because $|\overline{X}|=\rk \tilde{A}(G)$.
\end{proof}

Now we are ready to prove the analogue of Theorem~\ref{thm:matching-reconstruct} for tree-partition width. This proof contains the main novel idea of this section. Namely, it is known if a perfect matching is present in the graph, then from the inverse of the Tutte matrix one can derive the information about which edges are contained in some perfect matching. Finding one column of the inverse amounts to solving one system of linear equations, so in time roughly $\Oh(k^2 n)$ we are able to find a suitable matching edge for any vertex of the graph. Having found such an edge, both the vertex and its match can be removed; we call this operation {\em{ousting}}. However, ousting iteratively on all the vertices would result in a quadratic dependence on $n$. Instead, we apply a Divide\&Conquer scheme: we first oust the vertices of a balanced bag of the give tree-partition decomposition, and then recurse on the connected components of the remaining graph. This results in $\Oh(\log n)$ levels of recursion, and the total work used on each level is $\Oh(k^3\cdot n)$. We remark that in Section~\ref{sec:max-flow} we apply a similar approach to the maximum flow problem.

\begin{lemma}\label{lem:matching-tpw}
There exists an algorithm that, given a graph $G$ together with its tree-partition decomposition of width at most $k$, uses $\Oh(k^3\cdot n\log n)$ time and field operations and computes a maximum matching in $G$. The algorithm is randomized with one-sided error: it is correct with probability at least $1-\frac{1}{n^c}$ for an arbitrarily chosen constant $c$, and in the case of an error it reports a failure or a sub-optimal matching.
\end{lemma}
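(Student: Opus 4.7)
The plan is to first reduce to finding a perfect matching in an induced subgraph, and then to extract this perfect matching by a Divide\&Conquer scheme guided by the tree-partition decomposition that peels off a balanced bag at each level. The reduction proceeds as in Theorem~\ref{thm:matching-size}: I would pick a prime $p=\Omega(n^{c+3})$, substitute the indeterminates of the Tutte matrix $\tilde{A}(G)$ with values chosen independently and uniformly at random from $\F=\F_p$ to obtain a matrix $A$, derive from the given decomposition of $G$ a tree-partition decomposition of the bipartite graph $G_A$ of width at most $2k$ (duplicating each vertex into its row- and column-copy in the same bag), and apply Theorem~\ref{thm:pw-det} to obtain a $PLUQ$-factorization of $A$ together with a maximal nonsingular submatrix $[A]_{X,Y}$. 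By Corollary~\ref{cor:base-matching} together with Schwarz--Zippel, with high probability the set $\overline{X}:=\{v_i:i\in X\}$ has maximum size among vertex subsets whose induced subgraph admits a perfect matching, so it suffices to produce a perfect matching of $G[\overline{X}]$ with the tree-partition decomposition obtained by restriction.

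The key subroutine is a single \emph{ousting step}, based on the classical Rabin--Vazirani observation: if $A'$ is the (substituted) Tutte matrix of a graph $G'$ on $n'$ vertices that has a perfect matching, then, symbolically, $(A')^{-1}[j,i]\neq 0$ if and only if $G'-\{v_i,v_j\}$ has a perfect matching, and hence for every vertex $v_i$ at least one neighbor $v_j$ satisfies $(A')^{-1}[j,i]\neq 0$; committing the edge $v_iv_j$ preserves the existence of a perfect matching in the remaining graph. For a given $v_i$, I would carry out an ousting step by invoking Theorem~\ref{thm:pw-det} to factorize $A'$ in $\Oh(k^2 n')$ time and field operations, then solving $A'x=e_i$ in $\Oh(kn')$ additional time to obtain the $i$-th column of $(A')^{-1}$, and finally scanning this column for a non-zero entry whose index is a neighbor of $v_i$. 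If no such entry is found, the algorithm reports failure.

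To avoid the obvious quadratic blow-up of ousting vertices one by one, I would use the following recursive procedure on a graph $G'$ with tree-partition decomposition $(T',\{B_t\})$ of width $\leq k$ known to admit a perfect matching. Apply Lemma~\ref{lem:bal-bag} to $T'$ with the measure $\mu(t)=|B_t|$ to find in linear time a node $x$ such that every component of $T'-x$ has $\mu$-measure at most $|V(G')|/2$; the bag $B:=B_x$ has $|B|\leq k$ vertices. Then perform one ousting step for each vertex of $B$ in turn, each time refactorizing the Tutte matrix of the current residual graph from scratch and removing the newly ousted pair; this commits $|B|$ matching edges (some partners may happen to lie in $B$ as well). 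After $B$ has been fully processed, no vertex of $B$ remains in the residual graph, so by the tree-partition property its connected components lie in distinct subtrees of $T'-x$; each has at most $|V(G')|/2$ vertices and inherits a tree-partition decomposition of width $\leq k$. Each such component still admits a perfect matching, since every ousting step preserved this property on the whole residual graph and matchings do not cross between components. Recurse on each component and return the union of all committed edges.

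For the analysis, the work of a single recursive call on $G'$ is dominated by at most $|B|\leq k$ ousting steps at $\Oh(k^2\cdot|V(G')|)$ each, so $\Oh(k^3\cdot |V(G')|)$ in total. Since sizes halve at each recursion level, the recursion depth is $\Oh(\log n)$, and calls at the same depth operate on disjoint vertex sets of $V(G)$, giving a total running time of $\Oh(k^3\cdot n\log n)$. The main obstacle is ensuring probabilistic correctness: every algorithmic decision reduces to a non-zero test on an entry of the inverse of some submatrix of $A$, which symbolically is a rational function whose numerator and denominator have total degree $\Oh(n)$ in the original indeterminates; the algorithm performs $\poly(n)$ such tests, so by Schwarz--Zippel with $p=\Omega(n^{c+3})$ and a union bound all of them are faithful with probability at least $1-1/n^c$. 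Any individual failure causes the algorithm to output ``failure'' or a sub-optimal matching, as permitted by the lemma statement.
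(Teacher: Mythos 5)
Your algorithmic design is essentially identical to the paper's: reduce to perfect matching via Corollary~\ref{cor:base-matching}, use the Rabin--Vazirani characterization to oust a vertex by solving one linear system per ousting, and apply a Divide\&Conquer scheme that peels off a balanced bag of the tree-partition decomposition at each level. Your measure $\mu(t)=|B_t|$ is a minor variant of the paper's uniform measure on bags (both give an $\Oh(\log n)$ recursion depth, and the disjointness of the vertex sets across a level gives the claimed $\Oh(k^3 n\log n)$ total). The one place you deviate is the probabilistic bookkeeping, and that is where the proposal has a genuine gap.

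You substitute the indeterminates of the Tutte matrix by random field elements \emph{once}, at the outset, and then assert ``the algorithm performs $\poly(n)$ such tests, so by Schwarz--Zippel \dots{} and a union bound all of them are faithful.'' But the family of tests the algorithm performs is itself a random object: the set $\overline{X}$, the residual subgraphs and, through them, the exact submatrices whose singularity you need to avoid are all functions of the same random substitution you are trying to analyze. A union bound over an adaptively-determined collection of polynomials is not a union bound, and the set of all submatrices the algorithm \emph{could} encounter is exponentially large, so one cannot simply enlarge the family either. The paper sidesteps this entirely by \emph{resampling}: before each ousting, fresh random values are drawn for the Tutte matrix of the current residual graph, so that conditioned on the entire history, each resample independently yields a singular matrix with probability at most $n/p$, and the union bound over the $\Oh(kn\log n)$ resamples is immediate. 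That is also why the paper needs the slightly larger prime $p=\Omega(n^{c+5})$, versus your $\Omega(n^{c+3})$. Your single-sampling version can likely be salvaged — for instance, by arguing around a fixed ``canonical'' execution path and using that Schwarz--Zippel only produces false negatives, so deviations from the canonical path only increase failure — but as written the union bound step is not justified, and the cleanest repair is simply to resample at each ousting step as the paper does.
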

\begin{proof}
Let $\{v_1,v_2,\ldots,v_n\}$ be an arbitrary enumeration of the vertices of $G$ and let $\tilde{A}(G)$ be the Tutte matrix of $G$. For the entire proof we fix a field $\F=\F_p$ for some prime $p$ with $n^{c+5}\leq p< 2\cdot n^{c+5}$. Construct $A(G)$ from $\tilde{A}(G)$ by substituting each indeterminate from $\Vars$ with a value chosen uniformly and independently at random from $\F$. Since the determinant of the largest nonsingular submatrix of $\tilde{A}(G)$ is a polynomial over $\Vars$ of degree at most $n$, from the Schwarz-Zippel lemma we obtain that this submatrix remains nonsingular in $A(G)$ with probability at least $1-\frac{n}{n^{c+5}}=1-\frac{1}{n^{c+4}}$; of course, in this case $A(G)$ has the same rank as $\tilde{A}(G)$.

Similarly as in the proof of Theorem~\ref{thm:matching-size}, let $H=G_{A(G)}$ be the bipartite graph associated with $A(G)$. Then, a tree-partition decomposition of $H$ of width at most $2k$ can be constructed from the given tree-partition decomposition of $G$ of width at most $k$ by substituting every vertex with its row-copy and column-copy in $H$ in the corresponding bag. Therefore, we can apply the algorithm of Theorem~\ref{thm:pw-det} to $A(G)$ with this decomposition of $H$, and retrieve a largest nonsingular submatrix of $A(G)$ using $\Oh(k^2\cdot n)$ time and field operations. Suppose that this submatrix is $[A(G)]_{X,Y}$ for some $X,Y\subseteq [n]$. Then $|X|=|Y|=\rk A(G)$ and of course $[\tilde{A}(G)]_{X,Y}$ is nonsingular as well. Recall that with probability at least $1-\frac{1}{n^{c+4}}$ we have $\rk A(G)=\rk\tilde{A}(G)$, so if this holds, then $[\tilde{A}(G)]_{X,Y}$ is also the largest nonsingular submatrix of $\tilde{A}(G)$. Then the rows of $\tilde{A}(G)$ with indices of $X$ form a base of the subspace of $\Z(\Vars)^n$ spanned by all the rows of $\tilde{A}(G)$. By Corollary~\ref{cor:base-matching} we infer that $\overline{X}=\{v_i\colon i\in X\}$ is the maximum size subset of $V(G)$ for which $G[\overline{X}]$ contains a perfect matching.

We now constrain our attention to the graph $G[\overline{X}]$, and our goal is to find a perfect matching there. Observe that a tree-partition decomposition of $G[\overline{X}]$ of width at most $k$ can be obtained by taking the input tree-partition decomposition of $G$, removing all the vertices of $V(G)\setminus \overline{X}$ from all the bags, and deleting bags that became empty. Hence, in this manner we effectively reduced finding a maximum matching to finding a perfect matching, and from now on we can assume w.l.o.g. that the input graph $G$ has a perfect matching.

If $G$ has a perfect matching, then from Theorem~\ref{thm:tutte} we infer that $\tilde{A}(G)$ is nonsingular. Again, construct $A(G)$ from $\tilde{A}(G)$ by substituting each indeterminate from $\Vars$ with a value chosen uniformly and independently at random from $\F$. Since $\det \tilde{A}(G)$ is a polynomial of degree at most $n$ over $\Vars$, from Schwarz-Zippel lemma we infer that with probability at least $1-\frac{n}{n^{c+5}}=1-\frac{1}{n^{c+4}}$ matrix $A(G)$ remains nonsingular. Similarly as before, a suitable tree-partition decomposition of the bipartite graph $H=G_{A(G)}$ can be constructed from the input tree-partition decomposition of $G$, and hence we can apply Theorem~\ref{thm:pw-det} to $A(G)$. In particular, we can check using $\Oh(k^2\cdot n)$ time and field operations whether $A(G)$ is indeed nonsingular, and otherwise we abort the computations and report failure.

Call an edge $e\in E(G)$ {\em{allowed}} if there is some perfect matching $M$ in $G$ that contains $e$. The following result of Rabin and Vazirani~\cite{RabinV89} shows how it can be retrieved from the inverse of $\tilde{A}(G)$ whether an edge is allowed; it is also the cornerstone of the approach of Mucha and Sankowski~\cite{MuchaS06,MuchaS04}.

\begin{claim}[\cite{RabinV89}]\label{cl:allowed}
Edge $v_iv_j$ is allowed if and only if the entry $(\tilde{A}(G)^{-1})[j,i]$ is non-zero.
\end{claim}

This very easily leads to an algorithm that identifies some allowed edge incident on a vertex.

\begin{claim}\label{cl:allowedFind}
For any vertex $u\in V(G)$, one can find an allowed edge incident on $u$ using $\Oh(k^2 n)$ time and field operations.
\end{claim}
\begin{proof}
Suppose $u=v_i$ for some $i\in [n]$. We recover the $i$-th column $c_i$ of $A(G)^{-1}$ by solving the system of equations $A(G)c_i = e_i$, where $e_i$ is the unit vector with $1_\F$ on the $i$-th coordinate, and zeros on the other coordinates. Using Theorem~\ref{thm:pw-det} this takes time $\Oh(k^2 n)$, including the time needed for Gaussian elimination. Now observe that $c_i$ must have a non-zero entry on some coordinate, say the $j$-th, which corresponds to an edge $v_iv_j\in E(G)$. Indeed, the $i$-th row $r_i$ of $A(G)$ has non-zero entries only on the coordinates that correspond to neighbors of $v_i$ and we have that $r_ic_i=1_\F$, so it cannot happen that all the coordinates of $c_i$ corresponding to the neighbors of $v_i$ have zero values.

This means that $(A(G)^{-1})[j,i]\neq 0$, which implies that $(\tilde{A}(G)^{-1})[j,i]\neq 0$. Since $v_iv_j\in E(G)$, by Claim~\ref{cl:allowed} this means that $v_iv_j$ is allowed.
\cqed\end{proof}

Let us introduce the operation of {\em{ousting}} a vertex $u$, defined as follows: Apply the algorithm of Claim~\ref{cl:allowedFind} to find an allowed edge $uv$ incident on $u$, add $uv$ to a constructed matching $M$, and remove both $u$ and $v$ from the graph. Clearly, the resulting graph $G'$ still has a perfect matching due to $uv$ being allowed, so we can proceed on $G'$. Note that we can compute a suitable tree decomposition of $G'$ by deleting $u$ and $v$ from the corresponding bags of the current tree-partition decomposition, and removing bags that became empty. By Claim~\ref{cl:allowedFind}, any vertex can be ousted within $\Oh(k^2\cdot n)$ time and field operations.

We can now describe the whole algorithm. Let $(\Tt,\{B_x\}_{x\in V(\Tt)})$ be the given tree-partition decomposition of $G$, and let $q=|V(\Tt)|$; as each bag can be assumed to be non-empty, we have $q\leq n$. Define a uniform measure $\meas$ on $V(\Tt)$: $\meas(x)=1$ for each $x\in V(\Tt)$. Using the algorithm of Lemma~\ref{lem:balanced}, find in time $\Oh(q)$ a node $x\in V(\Tt)$ such that each connected component of $\Tt-x$ has at most $q/2$ nodes. Iteratively apply the ousting procedure to all the vertices of $B_x$; each application boils down to solving a system of equations using $\Oh(k^2\cdot n)$ time and field operations, so all the ousting procedures in total use $\Oh(k^3\cdot n)$ time and field operations. Note that in consecutive oustings we work on a graph with fewer and fewer vertices, so after each ousting we again resample the Tutte matrix of the current graph, and perform the Gaussian elimination again.

Let $G'$ be the graph after applying all the ousting procedures. For every subtree $\Cc\in \cc(\Tt-x)$, let $G_\Cc$ be the subgraph of $G'$ induced by the vertices of $G'$ that are placed in the bags of $\Cc$. Then graphs $G_\Cc$ for $\Cc\in \cc(\Tt-x)$ are pairwise disjoint and non-adjacent, and their union is $G'$. Since $G'$ has a perfect matching (by the properties of ousting), so does each $G_\Cc$. Observe that a tree-partition decomposition $\Tt_\Cc$ of $G_\Cc$ of width at most $k$ can be obtained by inspecting $\Cc$, removing from each bag all the vertices not belonging to $G_\Cc$, and removing all the bags that became empty in this manner. Hence, to retrieve a perfect matching of $G$ it suffices to apply the algorithm recursively on all the instances $(G_\Cc,\Tt_\Cc)$, and take the union of the retrieved matchings together with the edges gathered during ousting.

We first argue that the whole algorithm runs in time $\Oh(k^3\cdot n\log n)$. Let $\Rt$ denote the recursion tree of the algorithm --- a rooted tree with nodes labelled by instances solved in recursive subcalls, where the root corresponds to the original instance $(G,\Tt)$ and the children of each node correspond to subcalls invoked when solving the instance associated with this node. Note that the leaves of $\Rt$ correspond to instances where there is only one bag. Since the number of bags is halved in each recursive subcall, and at the beginning we have $q\leq n$ bags, we immediately obtain the following:
\begin{claim}\label{cl:depthmatching}
The height of $\Rt$ is at most $\log_2 n$.
\end{claim}

We partition the work used by the algorithm among the nodes of $\Rt$. Each node $a$ is charged by the work needed to (i) find the balanced node $x$, (ii) oust the vertices of $B_x$, (iii) construct the subcalls $(G_\Cc,\Tt_\Cc)$ for $\Cc\in \cc(\Tt-x)$, (iv) gather the retrieved matchings and the ousted edges into a perfect matching of the graph. From the description above it follows that if in the instance associated with a node $a$ the graph has $n_a$ vertices, then the total work associated with $a$ is $\Oh(k^3\cdot n_a)$.
Since subinstances solved at each level of the recursion correspond to subgraphs of $G$ that are pairwise disjoint, and since there are at most $\log_2 n$ levels,
the total work used by the algorithm is $\Oh(k^3\cdot n\log n)$.
%

We now estimate the error probability of the algorithm. Since on each level $L_i$ all the graphs associated with the subcalls are disjoint, we infer that the total number of subcalls is $\Oh(n\log n)$. In each subcall we apply ousting at most $k$ times, and each time we resample new elements of $\F$ to the Tutte matrix, which can lead to aborting the computations with probability at most $\frac{1}{n^{c+4}}$ in case the sampling led to making the matrix singular. By union bound, no such event will occur with probability at least $1-\frac{k\cdot n\log n}{n^{c+4}}\geq 1-\frac{1}{n^{c+1}}$. Together with the error probability of at most $\frac{1}{n^{c+4}}$ when initially reducing finding a maximum matching to finding a perfect matching, this proves that the error probability is at most $\frac{1}{n^{c+1}}+\frac{1}{n^{c+4}}\leq \frac{1}{n^c}$.
\end{proof}

We can now put all the pieces together and prove Theorem~\ref{thm:matching-reconstruct}, which we now restate for the reader's convenience.

\restatereconst*
\begin{proof}
Apply the algorithm of Lemma~\ref{lem:reduction-tw-tpw} to construct a graph $G'$ together with a tree-partition decomposition $\Tt'$ of $G'$ of width at most $k$; this takes time $\Oh(kn)$. Apply the algorithm of Lemma~\ref{lem:matching-tpw} to construct a maximum matching $M'$ in $G'$; this uses $\Oh(k^4\cdot n\log n)$ time and field operations, because $G'$ has $\Oh(kn)$ vertices. Using Lemma~\ref{lem:reduction-tw-tpw}\eqref{pr:GptoG}, construct in time $\Oh(kn)$ a matching $M$ in $G$ of size at least $|M'|-\off/2$, where $\off=|V(G')|-|V(G)|$. Observe now that $M$ has to be a maximum matching in $G$, because otherwise, by Lemma~\ref{lem:reduction-tw-tpw}\eqref{pr:GtoGp}, there would be a matching in $G'$ of size larger than $|M|+\off/2\geq |M'|$, which is a contradiction with the maximality of $M'$. In case the algorithm of Lemma~\ref{lem:matching-tpw} reported failure or returned a suboptimal matching of $G'$, which happens with probability at most $\frac{1}{n^c}$, the same outcome is given by the constructed procedure.
\end{proof}

Finally, we remark that actually the number of performed inversions in the field in the algorithms of Theorems~\ref{thm:matching-size} and~\ref{thm:matching-reconstruct} is $\Oh(kn)$ and $\Oh(k^2n)$, respectively. Hence, if inversion is assumed to cost $\Oh(\log n)$ time, while all the other operations take constant time, the running times of Theorems~\ref{thm:matching-size} and~\ref{thm:matching-reconstruct} can be stated as $\Oh(k^3\cdot n +k\cdot n\log n)$ and $\Oh(k^4\cdot n\log n +k^2\cdot n\log^2 n)$, respectively.

\section{Maximum flow}\label{sec:max-flow}
\newcommand{\conn}{\kappa}
\newcommand{\Iff}{\mathfrak{I}}
\newcommand{\apx}{\textrm{apx}}
\newcommand{\mexp}{\textrm{exp}}

In this section we prove Theorem~\ref{thm:max-flow}. Let us remind that our definition of $(s,t)$-vertex flows corresponds to a family of vertex-disjoint paths from $s$ to $t$; that is, we work with directed graphs with unit capacities on vertices.

\restatemaxflow*
\begin{proof}
Assume without loss of generality that the given tree decomposition $(\Tt,\{B_x\}_{x\in V(\Tt)})$ of $G$ is clean (otherwise clean it in linear time) and let $q \leq n$ be the number of its nodes.
For a graph $H$ containing $s$ and $t$, by $\conn_H(s,t)$ we will denote the maximum size of a flow from $s$ to $t$ in $H$. 

Let $L$ be the set of all the nodes $x\in V(\Tt)$ for which $\{s,t\}\subseteq B_x$; note that $L$ can be constructed in time $\Oh(kq)$. By the properties of a tree decomposition, $L$ is either empty or it induces a connected subtree of $\Tt$. Let $\ell=|L|$. We will design an algorithm with running time $\Oh(k^2\cdot n\log (\ell+2))$, which clearly suffices, due to $\ell\leq q\leq n$.

First, we need to introduce several definitions and prove some auxiliary claims.

\begin{claim}\label{cl:emptyL}
If $\ell=0$, then $\kappa_G(s,t)\leq k$.
\end{claim}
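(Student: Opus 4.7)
The plan is to exhibit an $(s,t)$-vertex separator of size at most $k$ and then invoke Menger's theorem (as recalled in Section~\ref{sec:prelims}). Since the given tree decomposition is clean, for every edge $xy\in E(\Tt)$ we have $B_x\cap B_y \subsetneq B_x$, which together with $|B_x|\le k+1$ gives $|B_x\cap B_y|\le k$; hence it suffices to exhibit a tree edge $xy$ for which $B_x\cap B_y$ separates $s$ from $t$ in the underlying graph of $G$.

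To find such an edge, I would work with the standard subtrees $\Tt[s]$ and $\Tt[t]$ of those nodes whose bags contain $s$ and $t$, respectively. By the definition of a tree decomposition, each of these subtrees is non-empty and connected, and the assumption $\ell=0$ is precisely the statement that they are vertex-disjoint. Take any edge $xy\in E(\Tt)$ lying on the unique path of $\Tt$ that connects $\Tt[s]$ with $\Tt[t]$; removing $xy$ from $\Tt$ splits it into two subtrees $\Tt_1,\Tt_2$ with $\Tt[s]\subseteq V(\Tt_1)$ and $\Tt[t]\subseteq V(\Tt_2)$. Setting $V_i = \bigcup_{z\in V(\Tt_i)} B_z$, a standard tree-decomposition argument shows that $V_1\cap V_2 = B_x\cap B_y$ and that there is no edge of $G$ (in either direction) with one endpoint in $V_1\setminus V_2$ and the other in $V_2\setminus V_1$. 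Now, $s\in V_1$ trivially, and $s\notin V_2$ because otherwise some node of $V(\Tt_2)$ would belong to $\Tt[s]\subseteq V(\Tt_1)$; symmetrically $t\in V_2\setminus V_1$. Consequently, $B_x\cap B_y$ is an $(s,t)$-vertex separator in $G$.

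Combining the two observations, $B_x\cap B_y$ is an $(s,t)$-vertex separator of size at most $k$. Since the hypothesis $(s,t)\notin E(G)$ of the theorem guarantees the premise of Menger's theorem in the directed vertex-flow variant, we conclude $\kappa_G(s,t)\le |B_x\cap B_y|\le k$, as required. I do not anticipate any real obstacle in executing this plan; the only subtle point is that one must use cleanness of $\treedecomp$ to improve the naive bound $|B_x\cap B_y|\le k+1$ to $|B_x\cap B_y|\le k$, which is exactly why the preprocessing step enforcing cleanness was built into the algorithm.
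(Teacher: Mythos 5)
Your proposal is correct and follows essentially the same route as the paper: both arguments observe that $\Tt[s]$ and $\Tt[t]$ are disjoint connected subtrees of $\Tt$, pick a tree edge $xy$ between them, use the standard separator property of a tree decomposition to conclude $B_x\cap B_y$ separates $s$ from $t$, and invoke cleanness (so $B_x\neq B_y$) to improve the trivial bound $k+1$ to $k$. You simply spell out the separator argument in more detail than the paper does.
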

\begin{proof}
The assumption that $\ell=0$ means that there is no bag where $s$ and $t$ appear simultaneously. Since $\Tt[s]$ and $\Tt[t]$ are disjoint connected subtrees of $\Tt$, it follows that there is an edge $xy$ of $\Tt$ whose removal disconnects $\Tt$ into subtrees $\Tt_x$ (containing $x$) and $\Tt_y$ (containing $y$) such that $\Tt[s]\subseteq \Tt_x$ and $\Tt[t]\subseteq \Tt_y$. By the properties of a tree decomposition, we have that $B_x\cap B_y$ is an $(s,t)$-vertex separator. Since $B_x\neq B_y$ due to $\Tt$ being clean, we have that $|B_x\cap B_y|\leq k$. Hence in particular $\conn(s,t)\leq |B_x\cap B_y|\leq k$.
\cqed\end{proof}

Suppose now that $\ell>0$ and let us fix some arbitrary node $x$ belonging to $L$. Let us examine a component $\Cc\in \cc(\Tt-x)$, which is a subtree of $\Tt$. Let $W_\Cc$ consist of all the vertices $u\in V(G)$ for which $\Tt[u]\subseteq \Cc$, and let $y_\Cc$ be the (unique) neighbor node of $x$ contained in $\Cc$. We call subtree $\Cc$ {\em{important}} if $L\cap V(\Cc)\neq \emptyset$, i.e., $\Cc$ has at least one node that contains both $s$ and $t$; since $L$ induces a connected subtree of $\Tt$, this is equivalent to $y_\Cc\in L$. Otherwise $\Cc$ is called {\em{unimportant}}. A path $P$ from $s$ to $t$ is called {\em{expensive}} if $V(P)\cap (B_x\setminus \{s,t\})\neq \emptyset$, i.e., $P$ traverses at least one vertex of $B_x$ other than $s$ and $t$; otherwise, $P$ is called {\em{cheap}}.

\begin{claim}\label{cl:expensive}
For any $x\in L$ and any $(s,t)$-vertex flow $\Ff$, the number of expensive paths in $\Ff$ is at most $k-1$.
\end{claim}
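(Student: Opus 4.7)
The plan is a direct pigeonhole argument on the bag $B_x$. Since the decomposition has width $k$, we have $|B_x| \leq k+1$, and because $x \in L$ both $s$ and $t$ lie in $B_x$, so $|B_x \setminus \{s,t\}| \leq k-1$.

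Next, I would observe that every expensive path $P \in \Ff$ contains at least one vertex $v \in B_x \setminus \{s,t\}$, and since the endpoints of $P$ are $s$ and $t$, such a $v$ must be an \emph{internal} vertex of $P$. Assign to each expensive path $P \in \Ff$ one such witness vertex $v_P \in V(P) \cap (B_x \setminus \{s,t\})$. Because the paths of the flow $\Ff$ are internally vertex-disjoint, the witnesses $v_P$ for distinct expensive paths $P$ must be pairwise distinct.

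Therefore the number of expensive paths in $\Ff$ is bounded by $|B_x \setminus \{s,t\}| \leq k-1$, which is exactly the statement of the claim. I do not anticipate any obstacle: the statement is essentially a one-line consequence of the width bound and the vertex-disjointness of the flow, with no structural property of the decomposition beyond the size of the bag $B_x$ being used.
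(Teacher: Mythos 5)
Your proof is correct and follows the same pigeonhole argument as the paper: each expensive path has a distinct internal witness in $B_x \setminus \{s,t\}$, which has size at most $k-1$. The paper states this more tersely but the reasoning is identical.
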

\begin{proof}
The paths from $\Ff$ are internally vertex-disjoint and each of them traverses a vertex of $B_x\setminus \{s,t\}$. Since $|B_x\setminus \{s,t\}|\leq k-1$, the claim follows.
\cqed\end{proof}

\begin{claim}\label{cl:cheap}
For any $x\in L$ and any cheap $(s,t)$-path $P$, there exists a subtree $\Cc_P\in \cc(\Tt-x)$ such that $V(P)\subseteq W_{\Cc_P}\cup \{s,t\}$. Moreover, $\Cc_P$ is important. 
\end{claim}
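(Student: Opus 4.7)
The plan is to exploit the defining property of a cheap path, namely that every internal vertex $u$ of $P$ avoids $B_x$, so its ``host subtree'' $\Tt[u]$ avoids the node $x$ and hence sits entirely inside one subtree of $\cc(\Tt-x)$. The whole claim then reduces to showing (a) that consecutive internal vertices of $P$ pick the same component, and (b) that the picked component is important.

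First I would observe that $P$ must have at least one internal vertex, since $(s,t)\notin E(G)$. Pick the first internal vertex $u_1$ on $P$ (the neighbor of $s$ on $P$); since $u_1\notin B_x$, the subtree $\Tt[u_1]$ is a connected subtree of $\Tt$ disjoint from $\{x\}$, so it lies in a unique component $\Cc_P\in \cc(\Tt-x)$, giving $u_1\in W_{\Cc_P}$. Define $\Cc_P$ as this component.

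Next I would propagate this along $P$: for any two consecutive internal vertices $u,v$ of $P$, the edge $uv\in E(G)$ forces some bag $B_z$ to contain both $u$ and $v$, so $z\in \Tt[u]\cap \Tt[v]$; since both $u,v\notin B_x$, we have $z\neq x$, and the two connected subtrees $\Tt[u]$ and $\Tt[v]$ share a node of $\Tt-x$, hence lie in the same component of $\Tt-x$. A straightforward induction along the internal vertices of $P$ then shows that every internal vertex of $P$ lies in $W_{\Cc_P}$, which together with the endpoints yields $V(P)\subseteq W_{\Cc_P}\cup\{s,t\}$.

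For importance, I would look at the edge $s u_1$ on $P$: it forces some bag $B_{z_s}$ with $\{s,u_1\}\subseteq B_{z_s}$, and since $u_1\in W_{\Cc_P}$ we have $z_s\in V(\Cc_P)$. Thus $x$ and $z_s$ both belong to the connected subtree $\Tt[s]$; as the unique $x$--$z_s$ path in $\Tt$ passes through $y_{\Cc_P}$, we obtain $y_{\Cc_P}\in \Tt[s]$, i.e.\ $s\in B_{y_{\Cc_P}}$. The symmetric argument applied to the last internal vertex of $P$ (the neighbor of $t$) gives $t\in B_{y_{\Cc_P}}$, so $y_{\Cc_P}\in L$ and $\Cc_P$ is important. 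There is no real obstacle here; the only thing to be slightly careful about is the possibility that $P$ has no internal vertices, which is ruled out by the assumption $(s,t)\notin E(G)$ from Theorem~\ref{thm:max-flow}.
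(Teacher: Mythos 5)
Your proof is correct. The first part --- that all internal vertices of $P$ land in a single $W_{\Cc_P}$ --- is argued somewhat differently than in the paper: the paper invokes the separator property of $B_x$ directly (all internal vertices lie in a single connected component $H$ of $G-B_x$, and $V(H)\subseteq W_{\Cc}$ for some component $\Cc$ of $\Tt-x$), whereas you propagate along the path edge by edge using the fact that adjacent vertices share a bag and the subtrees $\Tt[u]$ avoid $x$; both are sound, and yours amounts to an inline re-derivation of the same separator fact. The importance argument is genuinely different: the paper argues by contradiction (if, say, $t\notin B_{y_{\Cc_P}}$ then $\Tt[t]$ cannot reach into $\Cc_P$, so no vertex of $W_{\Cc_P}$ is adjacent to $t$, contradicting the existence of the last internal vertex of $P$), while you argue directly via connectivity of $\Tt[s]$ and $\Tt[t]$: each contains both $x$ and a node of $\Cc_P$, and the unique path between them in $\Tt$ passes through $y_{\Cc_P}$, forcing $\{s,t\}\subseteq B_{y_{\Cc_P}}$. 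Your direct argument is slightly more informative (it pins down exactly which bag witnesses importance), while the paper's contradiction is marginally shorter; either works. You also correctly flag that $P$ has an internal vertex because $(s,t)\notin E(G)$, which the paper uses implicitly.
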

\begin{proof}
By the properties of a tree decomposition, we have that the vertex set of each connected component of $G-B_x$ is contained in $W_\Cc$ for some subtree $\Cc\in \cc(\Tt-x)$. Since $P$ is cheap, we have that $V(P)\cap B_x=\{s,t\}$, and hence all the internal vertices of $P$ have to belong to the same connected component $H$ of $G-B_x$. Therefore, we can take $\Cc_P$ to be the connected component of $\Tt-x$ for which $V(H)\subseteq W_{\Cc_P}$. It remains to show that $\Cc_P$ is important.

For the sake of contradiction, suppose $\Cc_P$ is not important. Let $y=y_{\Cc_P}$; then $\{s,t\}\nsubseteq B_{y}$. Without loss of generality suppose $t\notin B_y$, as the second case is symmetric. Since $t\in B_x$, this means that no bag of $\Cc_P$ contains $t$. Consequently, there is no edge from any vertex of $W_{\Cc_P}$ to $t$ in $G$. However, all the internal vertices of $P$ are contained in $W_{\Cc_P}$, which is a contradiction.
\cqed\end{proof}

We can now describe the whole algorithm. First, the algorithm computes $L$ in time $\Oh(kq)$. If $L=\emptyset$, then by Claim~\ref{cl:emptyL} we have that $\kappa_G(s,t)\leq k$. Hence, by starting with an empty flow and applying at most $k$ times flow augmentation (Lemma~\ref{thm:augm}) we obtain both a maximum $(s,t)$-vertex flow and a minimum $(s,t)$-vertex cut in time $\Oh(k\cdot (n+m))=\Oh(k^2\cdot n)$. Therefore, suppose that $L\neq \emptyset$.

The crux of our approach is to take $x$ to be a node that splits $L$ in a balanced way. For this, Lemma~\ref{lem:bal-bag} will be useful. Define measure $\meas_L$ on $V(\Tt)$ as follows: for $x\in V(\Tt)$, let $\meas_L(x)=1$ if $x\in L$ and $\meas_L(x)=0$ otherwise. Since $L\neq \emptyset$, $\meas_L$ is indeed a measure. Run the algorithm of Lemma~\ref{lem:bal-bag} on $\Tt$ with measure $\meas_L$; let $x$ be the obtained balanced node. Then for each $\Cc\in \cc(\Tt-x)$ we have that $|L\cap V(\Cc)|=\meas_L(V(\Cc))\leq \meas_L(V(\Tt))/2=|L|/2$. In the following, the notions of expensive and cheap components refer to components of $\Tt-x$.

For each $\Cc\in \cc(\Tt-x)$, decide whether $\Cc$ is important by checking whether $\{s,t\}\subseteq B_{y_\Cc}$; this takes at most $\Oh(kq)$ time in total. Let $\Iff\subseteq \cc(\Tt-x)$ be the family of all the important subtrees of $\Tt-x$. For each $\Cc\in \Iff$, construct an instance $(G_\Cc,s,t,\Tt_\Cc)$ of the maximum vertex flow problem as follows: take $G_\Cc=G[W_\Cc\cup \{s,t\}]$, construct its tree decomposition $\Tt_\Cc$ from $\Cc$ by removing all the vertices not contained in $W_\Cc\cup \{s,t\}$ from all the bags of $\Cc$, and make it clean in linear time. 
Clearly, $\Tt_\Cc$ constructed in this manner is a tree decomposition of $G_\Cc$ of width at most $k$, and has at most half as many bags containing both $s$ and $t$ as $\Tt$. Moreover, a straightforward implementation of the construction of all the instances $(G_\Cc,s,t,\Tt_\Cc)$ for $\Cc\in \Iff$ runs in total time $\Oh(kn)$, due to $q\leq n$.

Now, apply the algorithm recursively to all the constructed instances $(G_\Cc,s,t,\Tt_\Cc)$, for all $\Cc\in \Iff$. Each application returns a maximum $(s,t)$-vertex flow $\Ff_\Cc$ in $G_\Cc$. Observe that all the internal vertices of all the paths of $\Ff_\Cc$ are contained in $W_\Cc$, and all the sets $W_\Cc$ for $\Cc\in \Iff$ are pairwise disjoint. Therefore, taking $\Ff^{\apx}=\bigcup_{\Cc\in \Iff} \Ff_\Cc$ defines an $(s,t)$-vertex flow, because the paths of $\Ff^{\apx}$ are internally vertex-disjoint.

Now comes the crucial observation: $\Ff_\apx$ is not far from the maximum flow.

\begin{claim}\label{cl:almost-done}
$|\Ff^\apx|\geq \conn_G(s,t)-(k-1)$.
\end{claim}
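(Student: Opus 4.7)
The plan is to take a maximum $(s,t)$-vertex flow $\Ff^*$ in $G$ with $|\Ff^*|=\conn_G(s,t)$ and split it according to the dichotomy between expensive and cheap paths introduced just before the claim. That is, I would write $\Ff^*=\Ff^*_{\exp}\cup \Ff^*_{\mathrm{ch}}$, where $\Ff^*_{\exp}$ consists of the expensive paths of $\Ff^*$ and $\Ff^*_{\mathrm{ch}}$ consists of the cheap ones. By Claim~\ref{cl:expensive} applied to the chosen node $x\in L$, we immediately have $|\Ff^*_{\exp}|\leq k-1$, so it suffices to bound $|\Ff^*_{\mathrm{ch}}|$ from above by $|\Ff^{\apx}|$.

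For the bound on the cheap paths, I would partition $\Ff^*_{\mathrm{ch}}$ according to which important subtree their internal vertices fall into. Concretely, by Claim~\ref{cl:cheap}, for every cheap path $P\in \Ff^*_{\mathrm{ch}}$ there is a unique important component $\Cc_P\in \Iff$ such that $V(P)\subseteq W_{\Cc_P}\cup \{s,t\}$. For each $\Cc\in \Iff$ define $\Ff^*_{\Cc}=\{P\in \Ff^*_{\mathrm{ch}}\,\colon\, \Cc_P=\Cc\}$; then $\{\Ff^*_\Cc\}_{\Cc\in \Iff}$ is a partition of $\Ff^*_{\mathrm{ch}}$, and in particular $|\Ff^*_{\mathrm{ch}}|=\sum_{\Cc\in \Iff} |\Ff^*_\Cc|$.

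The key observation is that each $\Ff^*_\Cc$ is itself an $(s,t)$-vertex flow in the subgraph $G_\Cc=G[W_\Cc\cup\{s,t\}]$: the paths in $\Ff^*_\Cc$ are internally vertex-disjoint as a subfamily of $\Ff^*$, and by the choice of $\Cc_P$ each of them is entirely contained in $V(G_\Cc)$. Since $\Ff_\Cc$ is a maximum $(s,t)$-vertex flow in $G_\Cc$ as returned by the recursive call, we obtain $|\Ff^*_\Cc|\leq |\Ff_\Cc|$ for every $\Cc\in \Iff$. Summing over all important components yields
\[
|\Ff^*_{\mathrm{ch}}|=\sum_{\Cc\in \Iff}|\Ff^*_\Cc|\leq \sum_{\Cc\in \Iff}|\Ff_\Cc|=|\Ff^{\apx}|,
\]
where the last equality uses that the sets $W_\Cc$ for $\Cc\in \Iff$ are pairwise disjoint, so the union $\bigcup_{\Cc\in \Iff}\Ff_\Cc$ defining $\Ff^{\apx}$ is indeed disjoint. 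Combining with $|\Ff^*_{\exp}|\leq k-1$ gives $\conn_G(s,t)=|\Ff^*|=|\Ff^*_{\mathrm{ch}}|+|\Ff^*_{\exp}|\leq |\Ff^{\apx}|+(k-1)$, which is exactly the desired inequality.

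I do not expect any real obstacle here: both the upper bound on the number of expensive paths and the containment of cheap paths inside important components were already isolated as Claims~\ref{cl:expensive} and~\ref{cl:cheap}, and the proof above is essentially just a bookkeeping argument combining them with the optimality of the recursive solutions $\Ff_\Cc$. The only point to double-check is that $\Ff^*_\Cc$ qualifies as a valid $(s,t)$-vertex flow in $G_\Cc$ (rather than only in $G$), but this is immediate from $V(P)\subseteq V(G_\Cc)$ for $P\in \Ff^*_\Cc$ and the fact that $s,t\in V(G_\Cc)$ by construction.
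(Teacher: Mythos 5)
Your proof is correct and follows essentially the same argument as the paper: partition a maximum flow into expensive paths (bounded by $k-1$ via Claim~\ref{cl:expensive}) and cheap paths grouped by important component via Claim~\ref{cl:cheap}, then compare each group against the recursively computed optimum $\Ff_\Cc$. The paper performs the same partition and comparison; your extra remark that each $\Ff^*_\Cc$ is a valid flow inside $G_\Cc$ is a point the paper leaves implicit in its inequality $|\Ff_\Cc|=\conn_{G_\Cc}(s,t)\geq|\Ff^\circ_\Cc|$.
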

\begin{proof}
Let $\Ff^\circ$ be a maximum $(s,t)$-vertex flow. Partition $\Ff^\circ$ as $\Ff^\circ_\mexp\uplus \biguplus_{\Cc\in \Iff} \Ff^\circ_\Cc$, where $\Ff^\circ_\mexp$ is the set of the expensive paths from $\Ff^\circ$, whereas for $\Cc\in \Iff$ we define $\Ff^\circ_\Cc$ to be the set of cheap paths from $\Ff^\circ$ whose internal vertices are contained in $W_\Cc$. Claim~\ref{cl:cheap} ensures us that, indeed, each path of $\Ff^\circ$ falls into one of these sets. By Claim~\ref{cl:expensive} we have that 
\begin{equation}\label{eq:expensive}
|\Ff^\circ_\mexp|\leq k-1.
\end{equation}
On the other hand, for each $\Cc\in \Iff$ we have that 
\begin{equation}\label{eq:cheap}
|\Ff_\Cc|=\conn_{G_\Cc}(s,t)\geq |\Ff^\circ_\Cc|.
\end{equation}
By combining~\eqref{eq:expensive} and~\eqref{eq:cheap}, we infer that
\begin{equation*}
|\Ff^\apx|=\sum_{\Cc\in \Iff} |\Ff_\Cc|\geq \sum_{\Cc\in \Iff} |\Ff^{\circ}_\Cc|=|\Ff^{\circ}|-|\Ff^\circ_\mexp|\geq \conn_G(s,t)-(k-1).
\end{equation*}
\cqed\end{proof}

From Claim~\ref{cl:almost-done} it follows that in order to compute a maximum $(s,t)$-vertex flow and a minimum $(s,t)$-vertex cut in $G$, it suffices to start with the flow $\Ff^\apx$ and apply flow augmentation (Lemma~\ref{thm:augm}) at most $k-1$ times. This takes time $\Oh(k^2\cdot n)$, since by Lemma~\ref{lem:tw-edges}, $G$ has at most $kn$ edges. The algorithm clearly terminates, because the number of bags containing both $s$ ant $t$ is strictly smaller in each recursive subcall. From the description it is clear that it correctly reports a maximum $(s,t)$-vertex flow and a minimum $(s,t)$-vertex cut in $G$. We are left with estimating the running time of the algorithm.

Let $\Rt$ denote the recursion tree of the algorithm --- a rooted tree with nodes labelled by instances solved in recursive subcalls, where the root corresponds to the original instance $(G,s,t,\Tt)$ and the children of each node correspond to subcalls invoked when solving the instance associated with this node. Note that the leaves of $\Rt$ correspond to instances where there is only one bag containing both $s$ and $t$. Since the number of bags containing both $s$ and $t$ is halved in each recursive subcall, we immediately obtain the following:
\begin{claim}\label{cl:depthflow}
The height of $\Rt$ is at most $\log_2 \ell$.
\end{claim}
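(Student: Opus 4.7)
The plan is to track the parameter $\ell$ defined as the number of bags of the current tree decomposition that contain both $s$ and $t$, and to argue that this parameter is at least halved at every recursive step, while recursion ceases once $\ell \leq 1$.

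First I would note that the node $x \in L$ is chosen via Lemma~\ref{lem:bal-bag} applied to $\Tt$ with the uniform-on-$L$ measure $\meas_L$ (so $\meas_L(V(\Tt)) = \ell$). This guarantees that $|L \cap V(\Cc)| = \meas_L(V(\Cc)) \leq \ell/2$ for every component $\Cc \in \cc(\Tt - x)$. Hence, for any important subtree $\Cc$ on which the algorithm recurses, the count $|L \cap V(\Cc)|$ is already at most $\ell/2$ before we even modify the decomposition.

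Next I would verify that constructing the new tree decomposition $\Tt_\Cc$ from $\Cc$ does not increase the count of bags containing both $s$ and $t$. This construction first restricts each bag $B_y$ (for $y \in V(\Cc)$) to $B_y \cap (W_\Cc \cup \{s,t\})$ and then cleans the decomposition. Restriction preserves membership of $s$ and $t$ in a bag, so bags of $\Cc$ originally containing $\{s,t\}$ still contain $\{s,t\}$ afterwards; cleaning, on the other hand, only discards bags that are contained in a neighbour and never creates new ones. Therefore the count $\ell_\Cc$ associated with the subinstance $(G_\Cc,s,t,\Tt_\Cc)$ satisfies $\ell_\Cc \leq |L \cap V(\Cc)| \leq \ell/2$.

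Finally I would close the argument by induction along the recursion. Along any root-to-leaf path in $\Rt$, the parameter $\ell$ is at least halved at every step, so after $d$ steps it is bounded by $\ell/2^d$. Recursion is invoked only on important components $\Cc$, and for such components $y_\Cc \in L$ forces $|L \cap V(\Cc)| \geq 1$, so the parameter remains positive at every internal node of $\Rt$. Since a leaf of $\Rt$ is reached as soon as the instance has at most one bag containing both $s$ and $t$, the depth of $\Rt$ is bounded by $\log_2 \ell$, as claimed. There is no serious obstacle here; the only careful point is the monotonicity of the count of $\{s,t\}$-bags under the restriction-and-cleaning construction of $\Tt_\Cc$.
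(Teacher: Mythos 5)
Your proposal is correct and follows essentially the same approach as the paper: you track the count $\ell$ of bags containing both $s$ and $t$, observe that Lemma~\ref{lem:bal-bag} together with the restriction-and-cleaning construction halves this count in every recursive subcall, and conclude the $\log_2\ell$ bound on the height of the recursion tree. The paper states this halving explicitly in the algorithm description and then derives the claim as an immediate consequence; your write-up simply spells out the two small verifications (that restriction preserves $\{s,t\}$-membership and that cleaning only removes bags) that the paper leaves implicit.
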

Let $d\leq \log_2\ell$ be the height of $\Rt$, and let $L_i$ be the set of nodes of $\Rt$ contained on level $i$ (at distance $i$ from the root), for $0\leq i\leq d$. 

Similarly as in the proof of Theorem~\ref{thm:apx}, we partition the work used by the algorithm among the nodes of $\Rt$. Each node $a$ is charged by the work needed to (i) find the balanced node $x$, (ii) investigate the connected components of $\Tt-x$ and find the important ones, (iii) construct the subcalls $(G_\Cc,s,t,\Tt_\Cc)$ for $\Cc\in \Iff$, (iv) construct $\Ff^\apx$ by putting together the flows returned from subcalls, and (v) run at most $k-1$ iterations of flow augmentation to obtain a maximum flow and a minimum cut. From the description above it follows that if in the instance associated with a node $a$ the graph has $n_a$ vertices (excluding $s$ and $t$), then the total work associated with $a$ is $\Oh(k^2\cdot n_a)$. Note that here we use the fact that $n_a>0$, because in every subproblem solved by the algorithm there is at least one vertex other than $s$ and $t$. This follows from the fact that each leaf of a clean decomposition in any subcall contains a vertex that does not belong to the leaf's parent (or any other bag). 

Obviously, each subinstance solved in the recursion corresponds to some subgraph of $G$. For a level $i$, $0\leq i\leq d$, examine all the nodes $a\in L_i$. Observe that in the instances corresponding to these nodes, all the considered subgraphs share only $s$ and $t$ among the nodes of $G$. This proves that $\sum_{a\in L_i} n_a \leq n$.
Consequently, the total work associated with the nodes of $L_i$ is $\sum_{a\in L_i} \Oh(k^2\cdot n_a)\leq \Oh(k^2\cdot n)$. By Claim~\ref{cl:depthflow} the number of levels is at most $\log_2\ell$, so the total work used by the algorithm is $\Oh(k^2\cdot n\log \ell)\leq \Oh(k^2\cdot n\log n)$.
\end{proof}

As mentioned in the introduction, the single-source single-sink result of Theorem~\ref{thm:max-flow} can be easily generalized to the multiple-source multiple-sink setting. When we want to compute a maximum $(S,T)$-vertex flow together with a minimum $(S,T)$-vertex cut, it suffices to collapse the whole sets $S$ and $T$ into single vertices $s$ and $t$, and apply the algorithm to $s$ and $t$. It is easy to see that this operation increases the treewidth of the graph by at most $2$, because the new vertices can be placed in every bag of the given tree decomposition. Similarly, in the setting when the minimum cut can contain vertices of $S\cup T$, which corresponds to finding the maximum number of completely vertex-disjoint paths from $S$ to $T$, it suffices to add two new vertices $s$ and $t$, and introduce edges $(s,u)$ for all $u\in S$ and $(v,t)$ for all $T$. Again, the new vertices can be placed in every bag of the given tree decomposition, which increases its width by at most $2$.

\section{Conclusions}\label{sec:conclusions}
\newcounter{qcount}

\newcommand{\question}[1]{
\smallskip 
\noindent\begin{tabular}{ p{3em}@{}p{0.89\linewidth} }
{\bf{Q\stepcounter{qcount}\theqcount.}} & #1
\end{tabular}
\smallskip
}

In this work we have laid foundations for a systematic exploration of fully-polynomial FPT algorithms on graphs of low treewidth. We gave the first such algorithms with linear or quasi-linear running time dependence on the input size for a number of important problems, which can serve as vital primitives in future works. Of particular interest is the new pivoting scheme for matrices of low treewidth, presented in Section~\ref{sec:gaussian}, and the general Divide\&Conquer approach based on pruning a balanced bag, which was used for reconstructing a maximum matching (Theorem~\ref{thm:matching-reconstruct}) and computing the maximum vertex flow (Theorem~\ref{thm:max-flow}).

We believe that this work is but a first step in a much larger program, since our results raise a large number of very concrete research questions. In order to facilitate further discussion, we would like to state some of them explicitly in this section.

In Section~\ref{sec:approx} we have designed an approximation algorithm for treewidth that yields an $\Oh(OPT)$-approximation in time $\Oh(k^7\cdot n\log n)$. We did not attempt to optimize the running time dependence on $k$; already a better analysis of the current algorithm shows that the recursion tree in fact has depth $\Oh(k\cdot \log n)$, which gives an improved running time bound of $\Oh(k^6\cdot n\log n)$. It is interesting whether this dependence on $k$ can be reduced significantly, say to $\Oh(k^3)$. However, we believe that much more relevant questions concern improving the approximation factor and removing the $\log n$ factor from the running time bound.

\smallskip 

\question{Is there an $\Oh(OPT)$-approximation algorithm for treewidth with running time $\Oh(k^3\cdot n\log n)$?}

\question{Is there an $\Oh(\log^c OPT)$-approximation algorithm for treewidth with running time $\Oh(k^d\cdot n\log n)$, for some constants $c$ and $d$?}

\question{Is there an $\Oh(OPT^c)$-approximation algorithm for treewidth with running time $\Oh(k^d\cdot n)$, for some constants $c$ and $d$?}

\smallskip 

In Section~\ref{sec:gaussian} we have presented a new pivoting scheme for Gaussian elimination that is based on the prior knowledge of a suitable decomposition of the row-column incidence graph. The scheme works well for decompositions corresponding to parameters pathwidth and tree-partition width, but for treewidth it breaks. We can remedy the situation by reducing the treewidth case to the tree-partition width case using ideas originating in the sparsification technique of Alon and Yuster~\cite{AlonY13}, but this incurs an additional factor $k$ to the running time. Finally, there has been a lot of work on improving the running time of Gaussian elimination using fast matrix multiplication; in particular, it can be performed in time $\Oh(n^\omega)$ on general graphs~\cite{bunchH74} and in time $\Oh(n^{\omega/2})$ on sparse graph classes admitting with $\Oh(\sqrt{n})$ separators~\cite{AlonY13}, like planar and $H$-minor-free graphs. When we substitute $k=n$ or $k=n^{1/2}$ in the running time of our algorithms, we fall short of these results.

\smallskip 

\question{Can a PLUQ-factorization of a given matrix be computed using $\Oh(k^2n)$ arithmetic operations also when a tree decomposition of width $k$ is given, similarly as for path and tree-partition decompositions?}

\question{Can the techniques of Hopcroft and Bunch~\cite{bunchH74} be used to obtain an $\Oh(k^c\cdot n)$-time algorithm for computing, say, the determinant of a matrix of treewidth $k$, so that the running time would match $\Oh(n^\omega)$ whenever $k=\Theta(n)$, and $\Oh(n^{\omega/2})$ whenever the matrix has a planar graph and $k=\Theta(n^{1/2})$?}

\smallskip 

In Section~\ref{sec:matching} we presented how our algebraic results can be used to obtain fully polynomial FPT algorithms for finding the size and constructing a maximum matching in a graph of low treewidth, where the running time dependence on the size of the graph is almost linear. In both cases, we needed to perform computations in a finite field of size $\poly(n)$, which resulted in an unexpected technicality: the appearance of an additional $\log n$ factor, depending on the computation model. We believe that this additional factor should not be necessary. Also, when reconstructing the matching itself, we used an additional $\Oh(k\log n)$ factor. Perhaps more importantly, we do not see how the presented technique can be extended to the weighted setting, even in the very simple case of only having weights $1$ and $2$.

\smallskip 

\question{Can one find the size of a maximum matching in a graph without the additional $\log n$ factor incurred by algebraic operations in a finite field of size $\poly(n)$?}

\question{Can one construct a maximum matching in a graph of low treewidth in the same time as for finding its size?}

\question{Can one compute a maximum matching in a weighted graph in time $\Oh(k^c\cdot n\log n)$, where $k$ is the width of a given tree decomposition and $c$ is some constant, at least for integer weights?}

\smallskip 

In Section~\ref{sec:max-flow} we showed how a Divide\&Conquer approach can be used to design algorithms for finding maximum vertex flows in low treewidth graphs with quasi-linear running time dependence on the size of the graph. Our technique is crucially based on the fact that we work with unweighted vertex flows, because we use the property that the size of a bag upper bounds the number of paths that can use any of its vertices. Hence, we do not see how our techniques can be extended to the setting with capacities on vertices, or to edge-disjoint flows. Of course, there is also a question of removing the $\log n$ factor from the running time bound.

\smallskip 

\question{Is there an algorithm for computing a maximum $(s,t)$-vertex flow in a directed graph with a tree decomposition of width $k$ that would run in time $\Oh(k^c\cdot n)$ for some constant c?}

\question{Can one compute a maximum $(s,t)$-vertex flow in a (directed) graph with capacities on vertices in time $\Oh(k^c\cdot n)$, where $k$ is the width of a given tree decomposition and $c$ is some constant?}

\question{Can one compute a maximum $(s,t)$-edge flow in a (directed) graph in time $\Oh(k^c\cdot n\log n)$, where $k$ is the width of a given tree decomposition and $c$ is some constant? What about edge capacities?}

\smallskip

Of course, one can look for other cases when developing fully polynomial FPT algorithms can lead to an improvement over the fastest known general-usage algorithms when the given instance has low treewidth. Let us propose one important example of such a question.

\smallskip

\question{Can one design an algorithm for {\sc{Linear Programming}} that would have running time $\Oh(k^c\cdot (n+m)\log (n+m))$ for some constant $c$, when the $n\times m$ matrix of the given program has treewidth at most $k$?}

\smallskip

Finally, we remark that Giannopoulou et al.~\cite{GiannopoulouMN15} proposed the following complexity formalism for fully polynomial FPT algorithms. For a polynomial function $p(n)$, we say that a parameterized problem $\Pi$ is in $\textrm{P-FPT}(p(n))$ (for {\em{polynomial-FPT}}) if it can be solved in time $\Oh(k^c\cdot p(n))$, where $k$ is the parameter and $c$ is some constant. The class $\textrm{P-FPT}(n)$ for $p(n)=n$ is called PL-FPT (for {\em{polynomial-linear FPT}}). We can also define class PQL-FPT as $\bigcup_{d\geq 1} \textrm{P-FPT}(n \log^d n)$, that is, PQL-FPT comprises problems solvable in fully polynomial FPT time where the dependence on the input size is quasi-linear. In this work we were not interested in studying any deeper complexity theory related to fully polynomial FPT algorithms, but our algorithms can be, of course, interpreted as a fundamental toolbox of positive results for PL-FPT and PQL-FPT algorithms for the treewidth parameterization. On the other hand, the results of Abboud et al.~\cite{AbboudWW15} on {\sc{Radius}} and {\sc{Diameter}} are the first attempts of building a lower bound methodology for these complexity classes. Therefore, further investigation of the complexity theory related to P-FPT classes looks like a very promising direction.

\bibliographystyle{abbrv}
\bibliography{tw-mat}

\end{document}